\newtheorem{theorem}{Theorem}
\newtheorem{lemma}{Lemma}
\newcommand{\thesystem}{\textsc{Caesar}\xspace}
\newcommand{\remove}[1]{}
\newcommand{\myindent}{\hspace{4ex}}
\begin{document}
%
\title{Speeding up Consensus by Chasing Fast Decisions\\{\large [Extended Technical Report]}}


\author{\IEEEauthorblockN{Balaji Arun, Sebastiano Peluso, Roberto Palmieri, Giuliano Losa, Binoy Ravindran}
\IEEEauthorblockA{ECE, Virginia Tech, USA\\
\{balajia,peluso,robertop,giuliano.losa,binoy\}@vt.edu}
}


%


\maketitle

\begin{abstract}
This paper proposes \thesystem, a novel multi-leader Generalized Consensus protocol for geographically replicated sites. The main goal of \thesystem is to overcome one of the major limitations of existing approaches, which is the significant performance degradation when application workload produces conflicting requests. \thesystem does that by changing the way a fast decision is taken: its ordering protocol does not reject a fast decision for a client request if a quorum of nodes reply with different dependency sets for that request. The effectiveness of \thesystem is demonstrated through an evaluation study performed on Amazon's EC2 infrastructure using 5 geo-replicated sites. \thesystem outperforms other multi-leader (e.g., EPaxos) competitors by as much as 1.7x in the presence of 30\% conflicting requests, and single-leader (e.g., Multi-Paxos) by up to 3.5x.
\end{abstract}

\begin{IEEEkeywords}
Consensus, Geo-Replication, Paxos
\end{IEEEkeywords}

%

\section{Introduction}
Geographically replicated (geo-scale) services, namely those where actors are spread across geographic locations and operate on the same shared database, can be implemented in an easy manner by exploiting underlying synchronization mechanisms that provide strong consistency guarantees.
These mechanisms ultimately rely on implementations of Consensus~\cite{bernadetteuniformconsensusharderthanconsensus} to globally agree on sequences of operations to be executed.
Paxos~\cite{DBLP:journals/tocs/Lamport98,lamport2001paxos} is a popular algorithm for solving Consensus among participants interconnected by asynchronous networks, even in presence of faults, and it can be leveraged for building such robust
services~\cite{DBLP:journals/tocs/CorbettDEFFFGGHHHKKLLMMNQRRSSTWW13,archie,DBLP:conf/eurosys/KraskaPFMF13,replicated-commit,consensusontransactioncommit}.
An example of Paxos used in a production system is Google Spanner~\cite{DBLP:journals/tocs/CorbettDEFFFGGHHHKKLLMMNQRRSSTWW13}.

The most deployed version of Paxos is Multi-Paxos~\cite{lamport2001paxos}, where there is a designated node, the leader, that is elected and responsible for deciding the order of client-issued commands.
Multi-Paxos solves Consensus in only three communication delays,
but in practice, its performance is tied
to the performance of the leader. This relation is risky when Multi-Paxos is deployed in geo-scale because network delays can be arbitrarily large and unpredictable. In these settings, the leader might often be unreachable or slow,
thus causing the slow down of the entire system.

To overcome this limitation, protocols aimed at allowing multiple nodes to operate as command leaders at the same time~\cite{DBLP:conf/sosp/MoraruAK13,mencius,alvin} have been proposed.
Such solutions provide implementations of Generalized Consensus~\cite{lamport2005generalized}, a variant of Consensus that agrees on a common order of non-commutative (or conflicting) commands.
These approaches, despite avoiding the bottleneck of the single leader, suffer from other costs whenever a non-trivial amount of conflicting commands (e.g., 5\% -- 40\%) is proposed concurrently, as they do not rely on a unique point of decision.

This paper presents
the first multi-leader implementation of Generalized Consensus designed for maintaining high performance in the presence of both mostly non-conflicting workloads (named as such if less than 5\% of conflicting commands are issued) and conflicting workloads
(where at most 40\% of commands conflict with each other). For this reason, our solution is apt for geo-scale deployments.
More specifically, state-of-the-art implementations of Generalized Consensus (e.g., EPaxos~\cite{DBLP:conf/sosp/MoraruAK13} and $M^2$Paxos~\cite{our-dsn}) reduce the minimum number of communication delays required to reach an agreement from three to two in case a proposed command does not encounter any contention (\textit{fast decision}).
However, they fail in the following aspect: they are not able to minimize the latency as soon as some contention on issued commands arises, with the consequence of requiring a \textit{slow decision}, which consists of at least four communication delays.

To address these aspects, we propose \thesystem, a consensus layer that deploys
an innovative multi-leader ordering scheme. As a high-level intuition, when a conflicting command is proposed, \thesystem does not suffer from the condition that causes a slow decision of that command in all existing Generalized Consensus implementations (including EPaxos).
Such a condition is the following:

\textit{For a proposed command $c$, at least two nodes in a quorum are aware of different sets of commands conflicting with $c$.}

\thesystem avoids this pitfall because it approaches the problem of establishing agreement from a different perspective.
When a command $c$ is proposed, \thesystem seeks an agreement on a common delivery timestamp for $c$ rather than on its set of conflicting commands.
To facilitate this, a local \textit{wait condition} is deployed to prevent commands conflicting with $c$ from interfering with the decision process of $c$ if they have a timestamp greater than $c$'s timestamp.

The basic idea behind the ordering process of \thesystem is the following: a command is associated with a logical timestamp by the sender, and if a quorum of nodes confirms that the timestamp is still valid, then the command is ordered after all the conflicting commands having a valid earlier timestamp.
Otherwise, the timestamp is considered invalid, and the command is rejected forcing it to undergo two more communication delays (total of four) before being decided.
Note that the equality of the sets of conflicting commands collected by nodes does not influence the ordering decision.
With this scheme, \thesystem boosts timestamp-based ordering protocols, such as Mencius~\cite{mencius}, by exploiting quorums, which is a fundamental requirement in geo-scale where contacting all nodes is not feasible. 
\thesystem does that without relying on a single designated leader unlike Multi-Paxos.

Our approach also provides the benefit of a more parallel delivery of ordered commands when compared to EPaxos, which requires analysis of the dependency graphs. That is because once the delivery timestamp for a command is finalized, the command implicitly carries with itself the set of predecessor commands that have to be delivered before it. This so-called \textit{predecessors set} is computed during the execution of the ordering algorithm for the decision of the timestamp and not after the delivery of the command.

We conducted an evaluation study for \thesystem using key-value store interfaces.  With them, we can inject different workloads by varying the percentage of conflicting commands and measure various performance parameters. We contrasted \thesystem against: EPaxos and $M^2$Paxos, multi-leader quorum-based Generalized Consensus implementations; Mencius, a multi-leader timestamp-based Consensus implementation that does not rely on quorums; and Multi-Paxos, a single-leader Consensus implementation.
As a testbed, we deployed 5 geo-replicated sites using the Amazon EC2 infrastructure.

The results confirm the effectiveness of \thesystem in providing \textit{fast decisions}, even in the presence of conflicting workloads, while competitors slow down. Using workloads with a conflict percentage in the range of 2\% -- 50\%, \thesystem outperforms EPaxos, which is the closest competitor in most of the cases, by reducing latency up to 60\% and increasing throughput by 1.7$\times$. These performance boosts are due to the higher percentage of fast decisions accomplished. With 30\% of conflicting workload, \thesystem takes up to 70\% fewer slow decisions compared to EPaxos.

\begin{figure*}[t]
\centering
\subfigure[The non-commutative commands $c$ and $\bar{c}$ are executed only after a quorum of nodes receives them. A total order of the commands is not enforced in this case, since commands are submitted ``only" via reliable broadcast.]{
\includegraphics[scale=0.38]{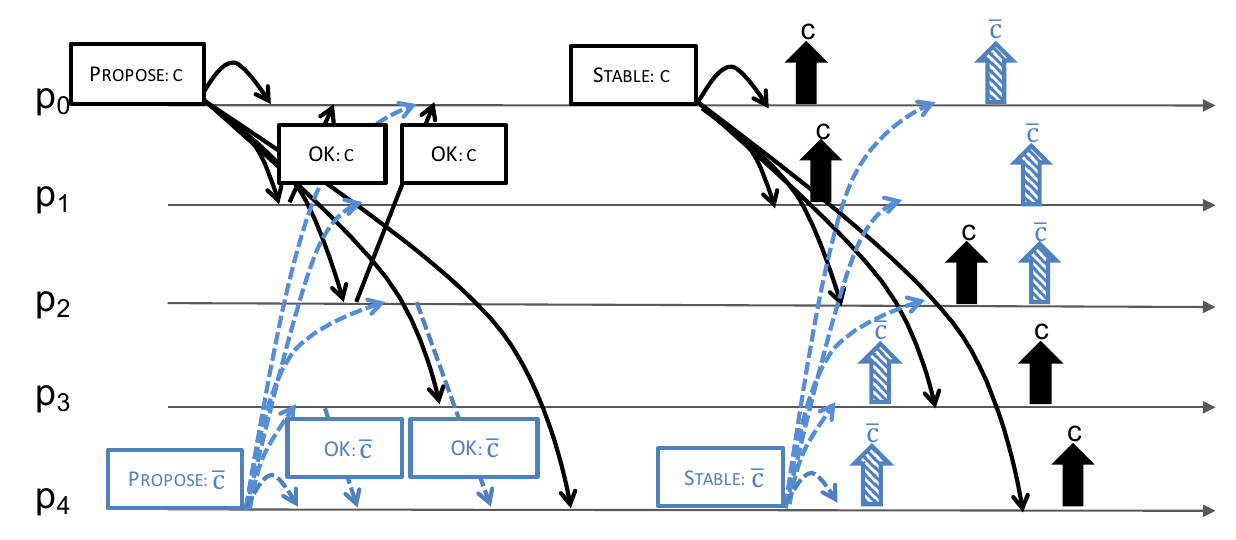}
\label{fig:urb}
}\hspace{10pt}
\subfigure[The non-commutative commands $c$ and $\bar{c}$ are executed only after a quorum of nodes receives them. A total order of the commands is enforced in this case: $\bar{c}$ is executed after $c$ on all nodes, since $\mathcal{T}=0<\bar{\mathcal{T}}=4$, and timestamp are received in order by $p_2$.]{
\includegraphics[scale=0.38]{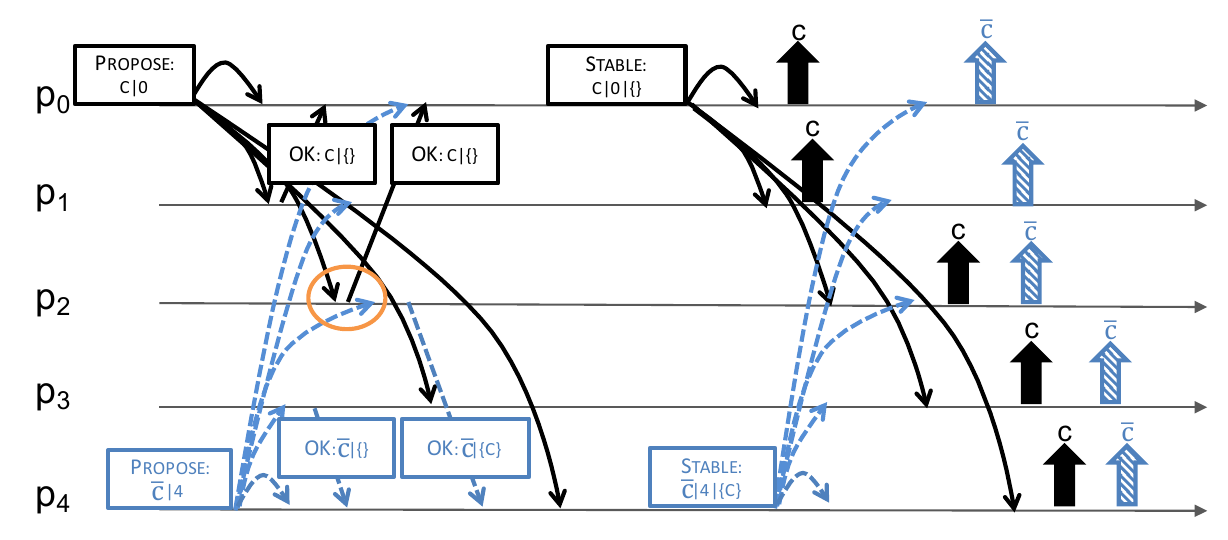}
\label{fig:noretry}
}
\vspace{-8pt}
\caption{
Reliable broadcast execution vs. \thesystem execution}
\label{fig:urbandnoretry}
\end{figure*}

\section{Related Work}
\label{rel-work}
In the Paxos~\cite{lamport2001paxos} algorithm, a value is decided after a minimum of four communication delays. Progress guarantees cannot be provided as the initial prepare phase may fail in the presence of multiple concurrent proposals. Multi-Paxos alleviates this by letting promises in the prepare phase cover an entire sequence of values. This effectively establishes a distinguished proposer
that acts as a single designated leader.

Fast Paxos~\cite{DBLP:journals/dc/Lamport06} eliminates one communication delay by having proposers broadcast their request and bypass the leader. However, a classic Paxos round executed by the leader is needed to resolve a collision, reaching a total of six communication delays to decide a value.
Generalized Paxos~\cite{lamport2005generalized}
relies on a single leader to detect conflicts among commands and enforce an order, and it uses fast quorums as Fast Paxos. Some of its limitations are overcome by FGGC~\cite{Sutra:2011:FGG:2085039.2085374}, which can use optimal quorum size but still relies on designated leaders.
On the contrary, \thesystem avoids the usage of a single designated leader either to reach an agreement, as in Paxos, or to resolve a conflict, as in Fast and Generalized Paxos.

Mencius~\cite{mencius} overcomes the limitations of a single leader protocol by providing a multi-leader ordering scheme based on a pre-assignment of Consensus instances to nodes. It pre-assigns sending slots to nodes, and a sender can decide the order of a message at a certain slot $s$ only after hearing from all nodes about the status of slots that precede $s$. Clearly this approach is not able to adopt quorums (unlike Paxos), and it may result in poor performance in case of slow nodes or unbalanced inter-node delays.
To alleviate the problem of slow nodes, Fast Mencius has been proposed~\cite{DBLP:conf/infocom/WeiGXL13}. It uses a mechanism that enables the fast nodes to revoke the slots assigned to the slow nodes. However, Fast Mencius still suffers from high latency in specific WAN deployments since it does not rely on quorums for delivering. 

EPaxos employs dependency tracking and fast quorums to deliver non-conflicting commands using a fast path.
In addition, its graph-based dependency linearization mechanism that is adopted to define the final order of execution of commands may easily suffer from complex dependency patterns. Instead, Alvin~\cite{alvin} avoids the expensive computation on the dependency graphs enforced by EPaxos via a slot-centric decision, but it still suffers from the same vulnerability to conflicts of EPaxos: a command's leader is not able to decide on a fast path if it observes discordant opinions from a quorum of nodes. That is not the case of \thesystem, whose fast decision scheme is optimized to increase the probability of deciding in two communication delays regardless of discordant feedbacks.

$M^2$Paxos~\cite{our-dsn} is a multi-leader consensus implementation that provides fast decisions while \textit{i)} adopting only a majority of nodes as quorum size, and \textit{ii)} avoiding to exchange dependencies of commands. It does that by embedding an ownership acquisition phase for commands into the agreement process, so as to guarantee that a node having the ownership on a set of commands can autonomously take decisions on those commands.
However,
in case there are multiple nodes that compete for the decision of non-commutative commands, the protocol might require an expensive ownership acquisition phase to re-distribute their ownership records. 

\thesystem is also related to
Clock-RSM~\cite{clock-rsm}. In Clock-RSM, each node proposes commands piggybacked with its physical timestamp, which are then deterministically ordered according to their associated timestamps. Although Clock-RSM is multi-leader like \thesystem, and it relies on quorums to implement replication, it suffers from the same drawbacks of Mencius, namely the need of a confirmation that no other command with an earlier timestamp has been concurrently proposed.

\section{System Model}
\label{sec:sys-model}

We assume a set of nodes $\Pi = \{p_1, p_2, \ldots, p_{N}\}$ that communicate through message passing and do not have access to either a shared memory or a global clock. Nodes may fail by crashing but do not behave maliciously. A node that does not crash is called correct; otherwise, it is faulty.  Messages may experience arbitrarily long (but finite) delays.

Because of FLP~\cite{FLP}, we assume that the system can be enhanced with the weakest type of unreliable failure detector~\cite{DBLP:journals/tcs/GuerraouiS01} that is necessary to implement a leader election service~\cite{luisBook}.
In addition,
we assume that at least a strict majority of nodes, i.e., $\left\lfloor\frac{N}{2}\right\rfloor+1$, is correct.
We name \textit{classic quorum} ($\mathcal{CQ}$), or more simply \textit{quorum}, any subset of $\Pi$ with size at least equal to $\left\lfloor\frac{N}{2}\right\rfloor+1$. We name \textit{fast quorum} ($\mathcal{FQ}$) any subset of $\Pi$ with size at least equal to $\left\lceil\frac{3N}{4}\right\rceil$ (derived by minimizing $\mathcal{CQ}$). As it will be clear in  Section~\ref{sec:details}, a fast quorum is required to achieve fast decisions in two communication delays, while classic quorum  is required in case the protocol needs more than two communication delays to reach a decision. 

We follow the definition of Generalized Consensus~\cite{lamport2005generalized}: each node can propose a command $c$ via the \textsc{Propose}$(c)$ interface, and nodes decide command structures $C$-$struct$ $cs$ via the \textsc{Decide}$(cs)$ interface. The specification is such that: commands that are included in decided $C$-$structs$ must have been proposed (\textit{Non-triviality}); if a node decided a $C$-$struct$ $v$ at any time, then at all later times it can only decide $v \bullet \sigma$, where $\sigma$ is a sequence of commands (\textit{Stability}); if $c$ has been proposed then $c$ will be eventually decided in some $C$-$struct$ (\textit{Liveness}); and
two $C$-$structs$ decided by two different nodes are prefixes of the same $C$-$struct$ (\textit{Consistency}).
Note that the symbol $\bullet$ is the append operator as defined in~\cite{lamport2005generalized}.

For simplicity of the presentation, we also use the notation \textsc{Decide}$(c)$ for the decision of a command $c$ on a node $p_i$, with the following semantics: the sequence of $k$ consecutive calls of \textsc{Decide}$(c_1)$ $\bullet$ \textsc{Decide}$(c_2)$ $\bullet$ $\cdots$ $\bullet$ \textsc{Decide}$(c_k)$ on $p_i$ is equivalent to the call of \textsc{Decide}$(c_1\bullet c_2 \bullet \cdots \bullet c_k)$.

We say that two commands $c$ and $\bar{c}$ are \textit{non-commutative}, or \textit{conflicting}, and we write $c\sim\bar{c}$, if the results of the execution of both $c$ and $\bar{c}$ depend on whether $c$ has been executed before or after $\bar{c}$.
It is worth noting that,
as specified in~\cite{lamport2005generalized},
 two $C$-$structs$ are still the same if they only differ by a permutation of non-conflicting commands. 

\section{Overview of Caesar}
\label{sec:protocol}

\begin{figure*}[t]
\centering
\subfigure[$p_2$ sends an $\mathcal{OK}$ message for $c$ at timestamp $\mathcal{T}=0$ because $c$ is in the predecessors set of $\bar{c}$, and  $\bar{c}$ is decided at timestamp $\bar{\mathcal{T}}=4$.]{
\includegraphics[scale=0.32]{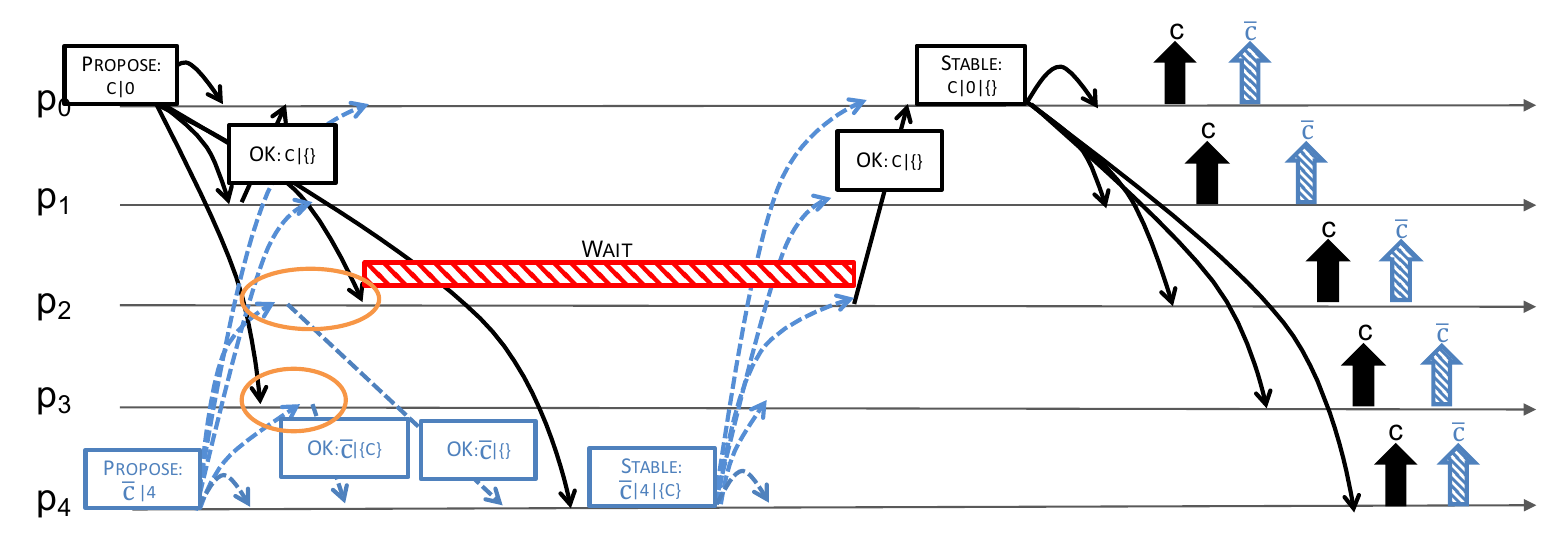}
\label{fig:waitconditionandack}
}\hspace{8pt}
\subfigure[$p_2$ rejects $c$ at timestamp $\mathcal{T}=0$ because $c$ is not in the predecessors set of $\bar{c}$, and $\bar{c}$ is decided at timestamp $\bar{\mathcal{T}}=4$. $c$ is decided at timestamp $5$ after a retry.]{
\includegraphics[scale=0.32]{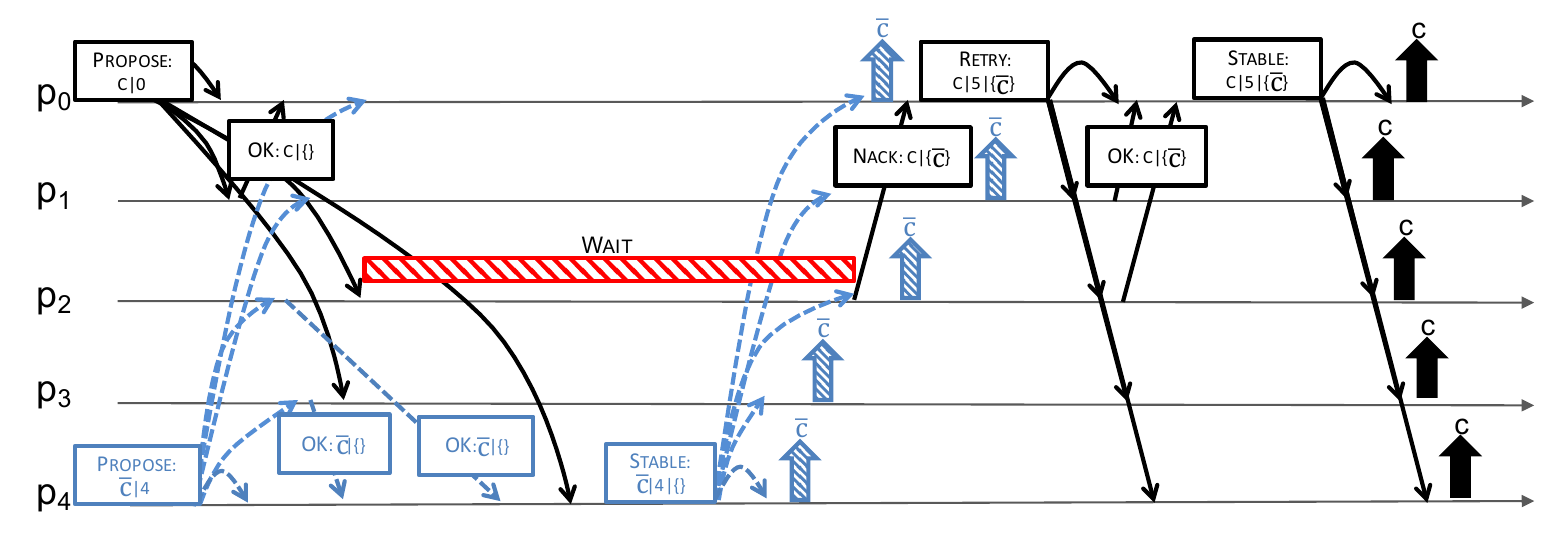}
\label{fig:waitconditionandnack}
}
\vspace{-8pt}
\caption{Execution of the wait condition in \thesystem due to out of order reception of non-commutative commands on node $p_2$.  Command $c$ waits for command $\bar{c}$ to be stable on node $p_2$, since $c$'s timestamp $\mathcal{T}$ has been received after $\bar{c}$'s timestamp $\bar{\mathcal{T}}$, and $\mathcal{T}=0<\bar{\mathcal{T}}=4$.}
\label{fig:w1}
\end{figure*}

We introduce \thesystem incrementally by starting from a base protocol, which only provides
reliable broadcast
of commands, and then we present the design of the final protocol, which implements Generalized Consensus.  We consider the first protocol as a reference point to show the minimal costs that are required to implement our specification of Consensus, and we explain how \thesystem is able to maximize the probability to execute with the same number of communication steps as the reference protocol. Section~\ref{sec:details} provides the details of \thesystem.

A necessary condition for implementing both a reliable
broadcast protocol and the \textit{consistency} property of \thesystem is guaranteeing that if a command is delivered to a (correct or faulty) node, then it is eventually delivered to any correct node. This is because whenever a command is executed by a node and the result externalized to clients, 
the command must be durable in the system despite crashes.

The base protocol executes as shown in Figure~\ref{fig:urb}. When a client proposes a command $c$ to the system via the interface \textsc{Propose}($c$), the protocol chooses a node to be $c$'s leader, $p_0$ in this case, which broadcasts a \textsc{Propose} message with $c$ to all nodes. Afterwards, whenever $c$'s leader collects a quorum of $\mathcal{OK}$ replies for $c$, it broadcasts a \textsc{Stable} message for $c$ in order to allow all nodes (including the leader itself) to deliver and execute $c$ (thick arrows in Figure~\ref{fig:urb}).

The base protocol is fault-tolerant because whenever $c$ is delivered and executed on some node, one of the following conditions is true, regardless of the crash of $f$ nodes: if $c$'s leader does not crash, eventually any other correct node receives the \textsc{Stable} message for $c$; or if $c$'s leader crashes, there always exists at least one correct node that received the \textsc{Propose} message for $c$, so it can take over the crashed leader by re-executing the protocol for $c$. 
Moreover, the scheme adopted by the base protocol needs two communication delays: one for the \textsc{Propose} message and one for the $\mathcal{OK}$ messages, to return the result of an execution to a client. Two communication delays are the minimum required to implement consensus in an asynchronous system~\cite{Lamport:2003:LBA:1809315.1809321}.

The base protocol does not implement Generalized Consensus because it does not enforce any order on the delivery of non-commutative commands. In fact,
two concurrent commands,
$c$ and $\bar{c}$, can be delivered and executed in any order by different nodes, regardless of their commutativity relation.
\thesystem implements the specification of Generalized Consensus by 
building a novel timestamp-based mechanism on top of the base protocol to enforce a total order among non-commutative commands.
We still rely on Figure~\ref{fig:urbandnoretry} for showing the intuition. Command $c$ is associated with a unique logical \textit{timestamp} $\mathcal{T}$ (see Section~\ref{sec:data-structures} for the timestamp assignment), and it can be delivered and executed only after a quorum of nodes confirms that no other command $\bar{c}$ with timestamp $\bar{\mathcal{T}}$, where $\bar{c}\sim c$ and $\bar{\mathcal{T}}>\mathcal{T}$, will be executed before $c$.
Note that in this section we do not distinguish between fast and classic quorums, although in Section~\ref{sec:details} we explain that a fast quorum is required at this stage due to the lower-bound defined in~\cite{Lamport:2003:LBA:1809315.1809321}.
Here, we assume $c$'s leader does not fail or is suspected; the case of faulty leaders is discussed in Section~\ref{sec:recovery}.

Figure~\ref{fig:noretry} shows how \thesystem applies this idea to the execution of Figure~\ref{fig:urb}.
Node $p_0$ broadcasts $c$ by proposing it with timestamp 0; then a quorum of nodes
confirms $c$ since none of those nodes has already received $\bar{c}$ with a timestamp greater than 0. The confirmation from a process $p_j$ is sent via an $\mathcal{OK}$ message, which, unlike the base protocol,
includes a \emph{predecessors set} $\mathcal{P}red_j$ of the commands observed by $p_j$, and that should precede $c$.
When $p_4$ broadcasts $\bar{c}$ with timestamp 4, it receives a quorum of replies from $p_2,  p_3, p_4$, which confirms that $\bar{c}$ can be executed with timestamp 4 and only after $c$ has been executed. This happens because $p_2$ already observed $c$ at the time it received $\bar{c}$ (see circle in Figure~\ref{fig:noretry}), and it included $\mathcal{P}red_2=\{c\}$ in the $\mathcal{OK}$ message for $\bar{c}$. 
A command leader can broadcast the \textsc{Stable} message as soon as it receives a quorum of $\mathcal{OK}$ messages for that command, and it also includes the timestamp and the set $\mathcal{P}red$, which is the union of the predecessors sets received in the $\mathcal{OK}$ messages.
Therefore, in \thesystem, unlike the base protocol, a node can execute $c$ when it receives the \textsc{Stable} message for $c$ and only after it has executed all the commands in $c$'s $\mathcal{P}red$.

As shown in Figure~\ref{fig:noretry}, a command's leader in \thesystem still guarantees a \textit{fast decision} in two communication delays as long as the proposed timestamp is confirmed by a quorum of nodes and despite the non-uniform replies that it collected (the set of predecessors collected by $p_4$ for $\bar{c}$ is different).
This also constitutes a significant difference between \thesystem and other state-of-the-art Generalized Consensus implementations, e.g., EPaxos, which require at least two additional communication delays before the execution of $\bar{c}$ in the example of Figure~\ref{fig:noretry}. 

In the following we answer two questions: what does a node do if it observes out of order timestamps (Section~\ref{sec:acc1})? How does a command's leader behave if one of the nodes in the replying quorum rejects a proposed timestamp (Section~\ref{sec:acc2})?

\subsection{Out of Order Timestamps}
\label{sec:acc1}

Let us now consider the scenario in Figure~\ref{fig:waitconditionandack}, where, unlike the one in Figure~\ref{fig:urbandnoretry}, node $p_2$ receives the \textsc{Propose} for $c$ after having received the one for $\bar{c}$ (see the circle on $p_2$).
In this case, $p_2$ cannot directly send an $\mathcal{OK}$ message for $c$, because $\mathcal{T}=0<\bar{\mathcal{T}}=4$, and
 $\Bar{c}$ could be finally decided at timestamp $\bar{\mathcal{T}}$ without ever considering $c$ as its predecessor, and hence be executed before $c$, with a resulting violation of the order of the timestamps.
 On the other hand, sending a rejection for $c$ would require additional communication delays, because $c$'s leader would be forced to retry the decision procedure with a new timestamp. This overhead is unnecessary if $c$ was received before $\bar{c}$ on another node, which could be part of the quorum of replies to $\bar{c}$'s leader.

In this case, \thesystem enforces a \textit{wait condition} for $c$ on $p_2$ (bar labelled \textit{wait} along $p_2$'s timeline in Figure~\ref{fig:waitconditionandack}) in order to prevent the execution of any step for $c$ until $p_2$ receives the final decision for $\bar{c}$. Afterwards, if the final decision for $\bar{c}$ includes $c$ in $\bar{c}$'s $\mathcal{P}red$, $p_2$ can reply with an $\mathcal{OK}$ message to $c$'s leader. As a result, \thesystem is able to increase the probability of deciding commands in two communication delays even in the case of out of order reception of timestamps.
Note that the \textit{wait condition} does not cause deadlock since only commands with a lower timestamp, e.g., $c$, wait for the final decision of conflicting commands with a higher timestamp, e.g., $\bar{c}$.

\subsection{Rejection of Timestamps}
\label{sec:acc2}
In case a node cannot confirm a timestamp $\mathcal{T}$ proposed for a command $c$, it sends a rejection $\mathcal{NACK}$ to $c$'s leader, forcing the leader to retry $c$ with a timestamp greater than $\mathcal{T}$. This is the case of Figure~\ref{fig:waitconditionandnack},
where $p_2$ rejects $\mathcal{T}=0$ for $c$ because it already received the \textsc{Stable} message for $\bar{c}$ with timestamp $\bar{\mathcal{T}} > \mathcal{T}$ and $c$ is not in $\bar{c}$'s $\mathcal{P}red$. $p_2$ also sends back the set of commands that caused the rejection (i.e., $\bar{c}$) to aid in choosing the next timestamp for $c$.

In \thesystem, if a command's leader receives at least one $\mathcal{NACK}$ message for the proposed command $c$, it assigns a new timestamp $\mathcal{T}_{new}$ greater than any suggestion received in the $\mathcal{NACK}$ messages, and it broadcasts a \textsc{Retry} message to ask for the acceptance of $\mathcal{T}_{new}$ to a quorum of nodes.
Note that if a node sends a $\mathcal{NACK}$ message for a command $c$ to $c$'s leader, it means that $c$'s leader would receive at least a $\mathcal{NACK}$ message for $c$ from any other quorum due to the way a command rejection is computed (see Section V).

The \textsc{Retry} message also contains the predecessors set $\mathcal{P}red$, which is computed as the union of predecessors received in the quorum of replies from the previous phase, as the case of Section~\ref{sec:acc1}. Therefore, in Figure~\ref{fig:waitconditionandnack}, $p_0$
broadcasts the \textsc{Retry} with timestamp $\mathcal{T}_{new}=5$ and $\mathcal{P}red=\{\bar{c}\}$ for $c$.

Retrying a command with a new timestamp does not entail restarting the procedure from the beginning. In fact, unlike the case of a \textsc{Propose} message, \thesystem guarantees that a \textsc{Retry} message can never be rejected (see Sections~\ref{sec:slow-decision} and~\ref{sec:correctness}). Such a guarantee ensures starvation-free agreement of commands.
The reply to a \textsc{Retry} message for $c$ could contain a set of additional predecessors that were not received by $c$'s leader during the previous communication phase. This set is sent along with the \textsc{Stable} message for $c$.

\section{Protocol Details}
\label{sec:details}
A command $c$ that is proposed to \thesystem can go through \textit{four phases} before it gets decided and the outcome of its execution is returned to the client. \thesystem schedules the execution of those four phases in order to provide \textit{two modes} of decision, called \textit{fast decision} and \textit{slow decision}.

A command $c$ is proposed by one of the nodes, which assumes the role of $c$'s \textit{leader} and coordinates the decision of $c$ by starting the \textit{fast proposal phase}. If this phase returns a positive outcome after having collected replies from a quorum of $\mathcal{FQ}$ nodes, the leader can execute the final \textit{stable phase}, which finalizes the decision of $c$ as a \textit{fast decision},
with a latency of two communication delays. Otherwise, if the \textit{fast proposal phase} returns a negative outcome, the leader
executes an additional \textit{retry phase}, in which it contacts a quorum of $\mathcal{CQ}$ nodes, before issuing the final \textit{stable phase}.
This results in a \textit{slow decision},
with a latency of four communication delays.

In this section we describe \thesystem by detailing the required data structures in Section~\ref{sec:data-structures}, the procedure for a fast decision in Section~\ref{sec:fast-decision}, the procedure for a slow decision in Section~\ref{sec:slow-decision}, and the behavior of the protocol in case of failures in Section~\ref{sec:recovery}.
We also explain how \thesystem behaves in case a leader is not able to contact a fast quorum of nodes during the execution of the \textit{fast proposal phase} for a command,
as long as no more than $f$ nodes crash.
This case entails the execution of an additional \textit{slow proposal phase} after the \textit{fast proposal phase} and before the remaining \textit{retry} and \textit{stable phases}. This part is detailed in Section~\ref{sec:no-fast-quorums}.

In Figure~\ref{fig:pseudocode}  we provide the main pseudocode of \thesystem for the decision of a command $c$.
Each horizontal block of the figure is a phase, and phases are linked through arrows to indicate the transition from one phase to another. For instance, in case of fast decision, we have a transition from the fast proposal phase to the stable phase; on the other hand all the other transitions are part of a slow decision. Moreover, the pseudocode is vertically partitioned in order to distinguish the part that is executed by the command $c$'s leader and the part that can be executed by any node (including the leader); it is also named as acceptor for historical reasons. 
Finally, the pseudocodes of auxiliary functions and the recovery from a failure are provided in Figures~\ref{fig:aux-functions} and~\ref{fig:pseudocode-recovery}, respectively.

\subsection{Data Structures per node $p_i$}
\label{sec:data-structures}

\underline{$\mathcal{TS}_i$}. It is a logical clock with monotonically increasing values in a totally ordered set of elements,
and it is used to generate timestamps for the commands that are proposed by $p_i$. Its value at a certain time is greater than the timestamp of any command that has been handled by $p_i$ before that time.

We assume that whenever $p_i$ sends
a command,
$\mathcal{TS}_i$ is updated with a greater value and used as a timestamp $\mathcal{T}$ for the command. Also, whenever $p_i$ receives a command with timestamp $\mathcal{T}$, it updates its $\mathcal{TS}_i$ with a value that is greater than $\mathcal{T}$, if $\mathcal{T}\geq\mathcal{TS}_i$.
We also assume that for any two $\mathcal{TS}_i$ and $\mathcal{TS}_j$, of $p_i$ and $p_j$ respectively, the value of $\mathcal{TS}_i$ is different from the value of $\mathcal{TS}_j$ at any time. This is guaranteed by choosing the values of $\mathcal{TS}_i$ ($\mathcal{TS}_j$, respectively) in the set $\left\{\langle k,i \rangle:k\in\mathbb{N}\right\}$ ($\left\{\langle k,j \rangle:k\in\mathbb{N}\right\}$, respectively). The total order relation on those values is defined as follows: for any two $\langle k_1,i \rangle$, $\langle k_2,j \rangle$, we have that $\langle k_1,i \rangle<\langle k_2,j \rangle \Leftrightarrow k_1 < k_2 \lor \left(k_1 = k_2 \land i<j\right)$.
The initial value of $\mathcal{TS}_i$ is $\langle 0,i \rangle$.

\underline{$\mathcal{H}_i$}. It is the data structure recording the status of commands seen by $p_i$. It is represented as a map of tuples of the form $\langle c, \mathcal{T},\mathcal{P}red,status, \mathcal{B}, forced\rangle$ where: $c$ is a command; $\mathcal{T}$ is the latest timestamp of $c$; $\mathcal{P}red$ is the set of commands that should precede $c$ in the final decision; $status$ is the current status of $c$, and it has values in the set $\{ fast\text{-}pending, slow\text{-}pending, accepted, rejected, stable\}$; $\mathcal{B}$ is the ballot number associated with this event, and it has values in $\mathbb{N}$; and $forced$ is a boolean variable with values in $\{\top,\bot\}$, and it indicates if the info associated with this event (e.g., $\mathcal{P}red$) has been forced by a recovery procedure.

Each tuple in $\mathcal{H}_i$ is uniquely identified by the first element of the tuple, i.e., the command, and thus $\mathcal{H}_i$ contains at most one tuple per command.
For a more compact representation, we use the \textit{don't-care term} ``$-$" whenever we are not interested in the value of a specific element of a tuple.

We also use the following notations: $\mathcal{H}_i$.\textsc{Update}(\emph{$c$, $\mathcal{T}$, $\mathcal{P}red$, $status$, $\mathcal{B}$, $forced$}) to indicate that the protocol appends the tuple $\langle c, \mathcal{T},\mathcal{P}red,status, \mathcal{B}, forced\rangle$ to $\mathcal{H}_i$, by first possibly deleting any existing tuple $\langle c,-,-,-,-, -\rangle$ from $\mathcal{H}_i$; $\mathcal{H}_i$.\textsc{Get}($c$) to indicate that the protocol retrieves a tuple associated with the command $c$ in $\mathcal{H}_i$; and $\mathcal{H}_i$.\textsc{GetPredecessors}($c$) to indicate that the protocol retrieves the set $\mathcal{P}red$ of a tuple $\langle c, -,\mathcal{P}red,-, -, -\rangle$ in $\mathcal{H}_i$.
The initial value of $\mathcal{H}_i$ is the empty sequence.

\underline{$\mathcal{B}allots_i$}. It is an array mapping commands to ballots, which have values in $\mathbb{N}$. $\mathcal{B}allots_i[c]=\mathcal{B}$ means that $\mathcal{B}$ is the current ballot for which $p_i$ has processed an event related to command $c$.
The initial values of $\mathcal{B}allots_i$ are 0.

\subsection{Fast Decision}
\label{sec:fast-decision}
A client proposes a command $c$ by triggering the event \textsc{Propose}($c$) on one of the nodes of \thesystem (lines I1--I2), which becomes $c$'s leader. Let us call this node $p_i$. $p_i$ enters the \textit{fast proposal phase} for $c$ by choosing the current value of $\mathcal{TS}_i$ as timestamp $\mathcal{T}ime$ of $c$. The other parameters of this phase are the ballot number $\mathcal{B}allot$ and the whitelist $\mathcal{W}hitelist$ whose values, in this case, are $0$ and empty set, respectively. The meaning of these parameters is strictly related to the recovery procedure due to node failures, and therefore we will provide further details in Section~\ref{sec:recovery}. However, at this stage, it is enough to know that:
\begin{compactitem}[-]

\item a ballot number for $c$ is an identifier of the current leader for $c$, and a node $p_j$ receiving a message with ballot number $\mathcal{B}$ can process that message only if its current ballot, i.e., $\mathcal{B}allots_j[c]$, for $c$ is not greater than $\mathcal{B}$.

\item $\mathcal{W}hitelist$ for $c$ contains the commands that should be considered as predecessors of $c$ according to the perception of the node that is executing a recovery procedure for $c$.

\end{compactitem}

\noindent {\bf Fast proposal phase.} The purpose of the \textit{fast proposal phase} for a command $c$ with a timestamp $\mathcal{T}ime$ is to propose, to a quorum of nodes, the acceptance of $c$ at $\mathcal{T}ime$ and collect, from that quorum, the known predecessor set $\mathcal{P}red$ of commands $\bar{c}$ that should be decided before $c$ at a timestamp less than $\mathcal{T}ime$. To do so, $p_i$ broadcasts a \textsc{FastPropose} message with $c$ and $\mathcal{T}ime$, and it collects \textsc{FastProposeR} messages from a quorum of nodes (lines P1--P2).

When a node $p_j$ receives a \textsc{FastPropose} message with $c$ and  $\mathcal{T}ime$, it computes the predecessor set $\mathcal{P}red_j$ by calling the \textsc{ComputePredecessors} function (line P13) and updates the entry for $c$ in $\mathcal{H}_j$ by marking that as $fast\text{-}pending$ with $\mathcal{T}ime$ and $\mathcal{P}red_j$ (line P14), and it calls the function \textsc{Wait} (line P15) to check the wait condition, as described in Section~\ref{sec:acc1}. $p_j$ also stores in $H_j$ whether the value of $\mathcal{W}hitelist$ is different from null or not (line P14).

A \textsc{FastProposeR} message for $c$ from a node $p_j$ contains a timestamp $\mathcal{T}ime_j$ and a predecessor set $\mathcal{P}red_j$, and it can be marked with either $\mathcal{OK}$ or $\mathcal{NACK}$. If the message is marked with $\mathcal{OK}$, then $\mathcal{T}ime_j$ is equal to the proposed $\mathcal{T}ime$, by meaning that $p_j$ did not reject $\mathcal{T}ime$. On the contrary, if the message is marked with $\mathcal{NACK}$, then $\mathcal{T}ime_j$ is greater than $\mathcal{T}ime$ meaning that $p_j$ rejected $\mathcal{T}ime$ and suggested a greater timestamp for $c$. In both cases, whether  $\mathcal{T}ime$ has been rejected or not, the predecessor set $\mathcal{P}red_j$ contains all the commands $\bar{c}$ that should be decided before $c$ according to the current knowledge of $p_j$.

\textsc{Wait} (see lines 4--8 of Figure~\ref{fig:aux-functions}) forces $c$ to wait for any command $\bar{c}$ in $\mathcal{H}_j$ that does not commute with $c$ to be marked with either $accepted$ or $stable$, if $\bar{c}$'s timestamp is greater than $c$'s timestamp and $c$ is not in $\bar{c}$'s predecessor set. Afterwards, when the wait condition does not hold anymore, \textsc{Wait} returns $\mathcal{NACK}$ in case there still exists such a command $\bar{c}$, 
with status either $accepted$ or $stable$; otherwise the function returns $\mathcal{OK}$.

If \textsc{Wait} returns $\mathcal{OK}$, then $p_j$ sends $\mathcal{T}ime$ and the computed $\mathcal{P}red_j$ back to $c$'s leader by confirming what the leader proposed (line P20). Otherwise, if \textsc{Wait} returns $\mathcal{NACK}$ (lines P16--P20),  $p_j$ rejects the proposed timestamp by: marking the tuple of $c$ in $\mathcal{H}_j$ as $rejected$, suggesting the current value of $\mathcal{TS}_j$ as a new timestamp for $c$, and recomputing the predecessor set according to the new timestamp.

The predecessor set $\mathcal{P}red_j$ of $c$ is computed as the set of commands $\bar{c}$ in $H_j$ that do not commute with $c$ and have a timestamp smaller than $c$'s timestamp, with the following exception (see lines 1--3 of Figure~\ref{fig:aux-functions}): if the $\mathcal{W}hitelist$ in input is not null and $\bar{c}$ is not contained in $\mathcal{W}hitelist$, then $\bar{c}$ has to appear with a status that is different from $fast\text{-}pending$ in $\mathcal{H}_j$ in order to be included in $\mathcal{P}red_j$.

\begin{figure}[t]
\centering
\noindent\fbox{%
\begin{varwidth}{\dimexpr\linewidth-2\fboxsep-2\fboxrule\relax}

\begin{algorithmic}[1]
{\scriptsize
\START[\textsc{ComputePredecessors}]{{\bf function} \emph{Set}}{(\emph{c, $\mathcal{T}ime$, $\mathcal{W}hitelist$})}
	\State $\mathcal{P}red_j \gets \{ \bar{c}: \bar{c}\sim c$ 
	\Statex \hspace{3cm} $\land $
	\Statex $\left(\mathcal{W}hitelist = null \Rightarrow \exists \langle\bar{c},\bar{\mathcal{T}},-,-,-,-\rangle \in \mathcal{H}_j:\text{ }\bar{\mathcal{T}}<\mathcal{T}ime\right)$
	\Statex \hspace{3cm} $\land $
	\Statex $(\mathcal{W}hitelist \neq null \Rightarrow \bar{c}\in\mathcal{W}hitelist$ $\lor$
	\Statex \hspace{0.5cm} $\exists  \langle\bar{c},\bar{\mathcal{T}},-,slow\text{-}pending/accepted/stable,-,-\rangle \in \mathcal{H}_j:$
	\Statex \hspace{6cm}  $\bar{\mathcal{T}}<\mathcal{T}ime)~\}$
	\State \Return $\mathcal{P}red_j$
\END
\START[\textsc{Wait}]{{\bf function} \emph{Boolean}}{(\emph{c, $\mathcal{T}ime$})}
\State {\bf wait until} $\forall \langle\bar{c},\bar{\mathcal{T}},\overline{\mathcal{P}red},-,-,-\rangle \in \mathcal{H}_j,$
\Statex \hspace{1.8cm}$(\bar{c}\sim c \land \mathcal{T}ime<\bar{\mathcal{T}} \land c\not\in\overline{\mathcal{P}red} \Rightarrow$
\Statex \hspace{2.3cm}$\exists \langle\bar{c},\bar{\mathcal{T}},\overline{\mathcal{P}red},accepted/stable,-,-\rangle \in \mathcal{H}_j)$
\If{$\exists \langle\bar{c},\bar{\mathcal{T}},\overline{\mathcal{P}red},accepted/stable,-,-\rangle \in \mathcal{H}_j:$
\Statex \hspace{3cm} $\bar{c}\sim c \land \mathcal{T}ime<\bar{\mathcal{T}} \land c\not\in\overline{\mathcal{P}red}$}
\State \Return $\mathcal{NACK}$
\Else~\Return $\mathcal{OK}$
\EndIf
\END
\START[\textsc{BreakLoop}]{{\bf function}}{(\emph{c})}
\State $\langle c,\mathcal{T},\mathcal{P}red,stable,\mathcal{B},\bot\rangle \gets \mathcal{H}_j$.\textsc{Get}(\emph{c})
\ForAll{$\bar{c}\in \mathcal{P}red:$ $\langle\bar{c},\bar{\mathcal{T}},\overline{\mathcal{P}red},stable,\overline{\mathcal{B}},\bot\rangle \in \mathcal{H}_j\land \bar{\mathcal{T}} < \mathcal{T}$}
		\State $\mathcal{H}_j$.\textsc{Update}(\emph{$\bar{c}$, $\bar{\mathcal{T}}$, $\overline{\mathcal{P}red}\setminus\{c\}$, $stable$, $\overline{\mathcal{B}}$, $\bot$})
\EndFor
\ForAll{$\bar{c}\in \mathcal{P}red:$ $\langle\bar{c},\bar{\mathcal{T}},\overline{\mathcal{P}red},stable,\overline{\mathcal{B}},\bot\rangle \in \mathcal{H}_j\land \bar{\mathcal{T}} > \mathcal{T}$}
		\State $\mathcal{P}red \gets \mathcal{P}red\setminus \{\bar{c}\}$
\EndFor
\State $\mathcal{H}_j$.\textsc{Update}(\emph{$c$, $\mathcal{T}$, $\mathcal{P}red$, $stable$, $\mathcal{B}$, $\bot$})
\END
\START[\textsc{Deliverable}]{{\bf function} \emph{Boolean}}{(\emph{c})}
\State \Return $\left(\emph{c} \cup \mathcal{H}_j.\textsc{GetPredecessors}(\emph{c})\right) \subseteq \mathcal{D}ecided_j$
\END
}
\end{algorithmic}

\end{varwidth}%
}
\caption{Auxiliary functions - node $p_j$}
\label{fig:aux-functions}
\end{figure}

\begin{figure*}
\centering
\includegraphics[scale=0.235]{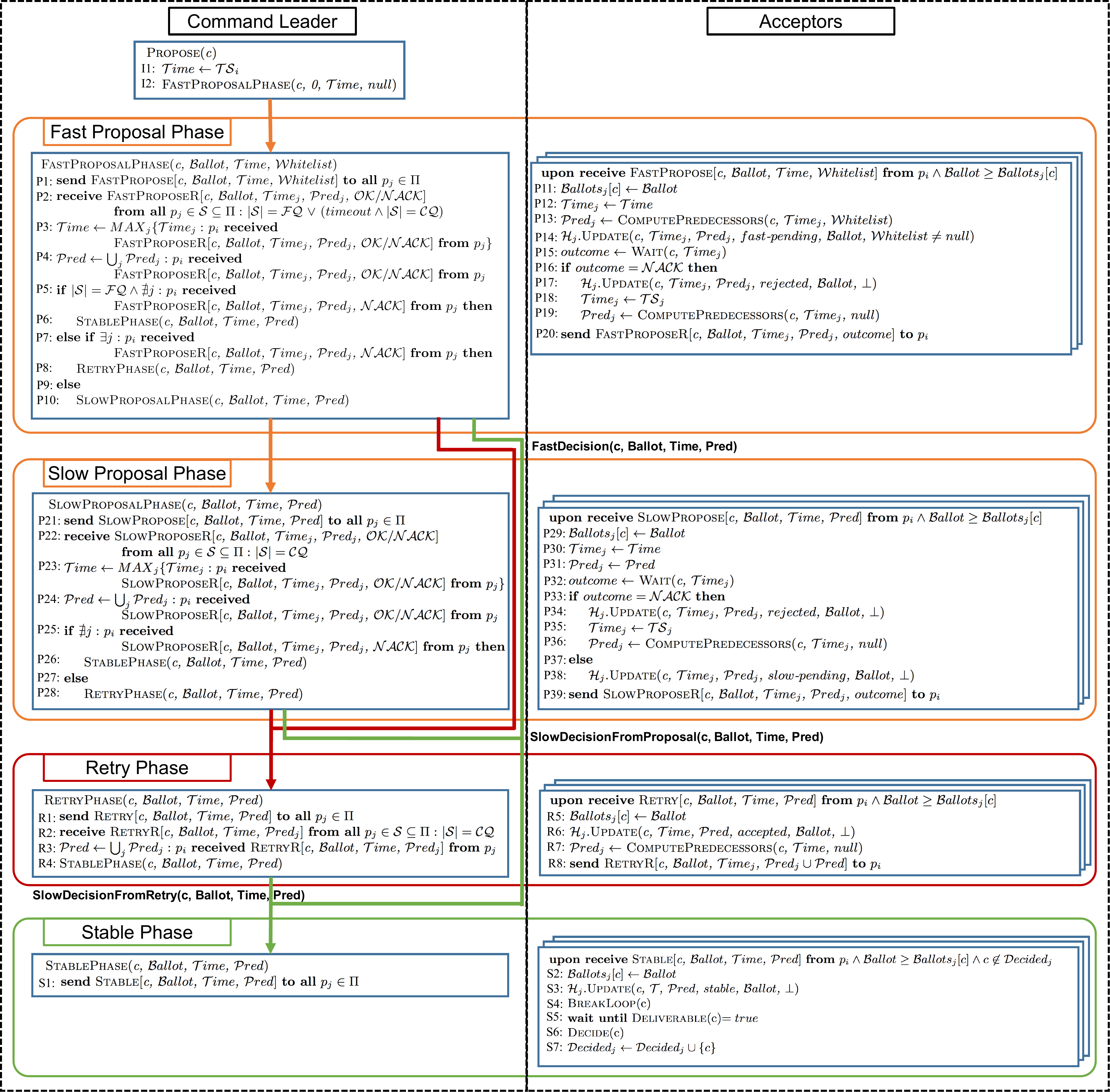}
\caption{\thesystem's pseudocode. The left part is executed by the command $c$'s leader $p_i$, and the right part can be executed by any acceptor $p_j$ (including $p_i$).}
\label{fig:pseudocode}
\vspace{-10pt}
\end{figure*}

In case of a \textit{fast decision} (see \textit{FastDecision} transition in Figure~\ref{fig:pseudocode}), the command leader $p_i$ is able to collect a fast quorum of $\mathcal{FQ}$ replies that do not reject $\mathcal{T}ime$ for $c$ (line P5). It then submits $c$ with the confirmed $\mathcal{T}ime$ and the union of the received predecessor sets, i.e., $\mathcal{P}red$, to the next \textit{stable phase} (lines P3--P4 and P6).

Note that unlike other multi-leader consensus protocols~\cite{lamport2005generalized, DBLP:conf/sosp/MoraruAK13}, a fast decision in \thesystem is guaranteed in case a fast quorum confirms the timestamp for a command, although those nodes can reply with non-equal predecessors sets. In the correctness proof of \thesystem (see Section~\ref{sec:correctness}), we show that such a condition is sufficient to guarantee the recoverability of the fast decision for $c$ even in case the command leader and at most other $f-1$ nodes crash.

\noindent {\bf Stable phase.} The purpose of the \textit{stable phase} for a command $c$ with a timestamp $\mathcal{T}ime$ and predecessor set $\mathcal{P}red$ is to communicate to all the nodes, via a \textsc{Stable} message, that $c$ has to be decided at timestamp $\mathcal{T}ime$ after all the commands in $\mathcal{P}red$ have been decided (line S1).  In particular, whenever a node $p_j$ receives a \textsc{Stable} message for $c$, with $\mathcal{T}ime$ and set $\mathcal{P}red$ (lines S2--S7), it updates the tuple for $c$ in $\mathcal{H}_j$ with the new values and marks the tuple as $stable$ (line S3).

Whenever each command in $\mathcal{P}red$ has been decided (lines 16--17 of Figure~\ref{fig:aux-functions}), $p_j$ can decide $c$ by triggering \textsc{Decide}($c$) (lines S5--S7). This is correct because, as we prove in Section~\ref{sec:correctness}, the phases executed before the stable phase guarantee that for any pair of $stable$ and non-commutative commands $c$ and $\bar{c}$, with timestamps $\mathcal{T}ime$ and $\overline{\mathcal{T}ime}$ respectively, if $\overline{\mathcal{T}ime}<\mathcal{T}ime$ then $\bar{c}\in \mathcal{P}red$, where $\mathcal{P}red$ is the predecessor set of $c$. Therefore, the decision order of non-commutative commands is guaranteed to follow the increasing order of the commands' timestamps. However, this does not mean that if $\bar{c}\in \mathcal{P}red$, then $\overline{\mathcal{T}ime}<\mathcal{T}ime$. Hence the stable phase has to take care of breaking any possible loop that might be created by the predecessor sets of the $stable$ commands, before trying to deliver them (line S4 and lines 9--15 of Figure~\ref{fig:aux-functions}).
That is done as follows:
for any two $stable$ and non-commutative commands $c$ and $\bar{c}$ with timestamps $\mathcal{T}$ and $\bar{\mathcal{T}}$, respectively, if $\bar{\mathcal{T}}>\mathcal{T}$ then $\bar{c}$ is deleted from $c$'s predecessor set.

When a command $c$ is stable on all nodes, the information about $c$ can be safely garbage collected.

\subsection{Slow Decision}
\label{sec:slow-decision}
In case the leader of a command $c$ cannot guarantee a fast decision for $c$, then it has to execute additional phases before the finalization of the \textit{stable phase} for $c$. This happens because in the \textit{fast proposal phase} for $c$ (lines I1--I2, P1--P4, and P11--P20),  the command leader cannot collect a fast quorum of \textsc{FastProposeR} messages that are all marked with $\mathcal{OK}$ (lines P7--P10) due to the following reasons: the fast quorum of collected \textsc{FastProposeR} messages actually includes a message that rejects the proposed timestamp for $c$ and is marked with $\mathcal{NACK}$ (lines P7--P8, and R1--R8); or the leader is only able to collect a classic quorum of $\mathcal{CQ}$ \textsc{FastProposeR} messages (lines P9--P10), because either there are no $\mathcal{FQ}$ correct nodes in the system or the other $N-\mathcal{CQ}$ nodes are too slow to provide their reply 
within a configurable timeout to the command leader (line P2). In this subsection, we refer to a \textit{slow decision} by focusing on the former case; the latter is explained in Section~\ref{sec:no-fast-quorums}.

\remove{
The \textit{slow decision} is implemented as follows. After a client proposes $c$ and $c$'s command leader executes the \textit{fast proposal phase} for $c$ (lines I1--I2, P1--P4, and P11--P20),
$c$ has to undergo an additional \textit{retry phase} before it is broadcast in the \textit{stable phase}, since the leader received a rejection of the proposed timestamp in the fast quorum of
\textsc{FastProposeR} messages (lines P7--P8, and R1--R8).
}

\noindent {\bf Retry phase.} This phase guarantees that the outcome of the previous \textit{proposal phase} for a command $c$ is accepted by a quorum of $\mathcal{CQ}$ nodes before moving to the \textit{stable phase} for $c$. At this stage, the leader $p_i$ of $c$ broadcasts a \textsc{Retry} message with the maximum $\mathcal{T}ime$ among the ones suggested by the acceptors in the previous phase, and the predecessor set $\mathcal{P}red$ as the union of the sets suggested by the acceptors in the previous phase (line R1). Then $p_i$ waits for a quorum of $\mathcal{CQ}$ \textsc{RetryR} replies that confirm the timestamp $\mathcal{T}ime$ for $c$ (line R2), before submitting $\mathcal{T}ime$ to the next \textit{stable phase} (line R4). This guarantees that, even with $f$ failures, there always exists a correct node that confirmed $\mathcal{T}ime$ in this phase.

It is important to notice that as in the case of a \textsc{FastProposeR} message, a \textsc{RetryR} message from a node $p_j$ also contains $p_j$'s view of $c$'s predecessors set, which will be included in the final $\mathcal{P}red$ set in input to the next \textit{stable phase} (line R3). This is because, as shown in Section~\ref{sec:acc2}, $c$'s leader has to include all the commands that were not predecessors of $c$ according to the timestamp proposed in the previous \textit{proposal phase} but that have to be considered as predecessors according to the new timestamp of this phase. 

Furthermore, a reply from an acceptor in this phase \textit{cannot reject} the broadcast timestamp for $c$, because, as it will be clear in the proof of correctness (see Section~\ref{sec:correctness}), at this stage \thesystem guarantees that there does not exist any acceptor $p_j$ and command $\bar{c}$ such that $\bar{c}$ is $stable$ on $p_j$ with timestamp $\bar{\mathcal{T}}>\mathcal{T}$ and $c$ is not in $\bar{c}$'s predecessors set. 
Therefore, when a node $p_j$ receives a \textsc{RetryR} message with $c$, $\mathcal{T}ime$, and $\mathcal{P}red$, it only updates the tuple for $c$ in its $\mathcal{H}_j$ by marking it as $accepted$ with $\mathcal{T}ime$ and $\mathcal{P}red$ (line R5), and it computes a new predecessors set $\mathcal{P}red_j$ by calling the \textsc{ComputePredecessors} function (line R7), like in the \textit{fast proposal phase}. Then, it sends a confirmation \textsc{RetryR} back to the command leader with the new $\mathcal{P}red_j$ as well as the one previously received by the leader (line R8).

\subsection{Unavailability of Fast Quorums}
\label{sec:no-fast-quorums}
In \thesystem, as in other fast consensus implementations~\cite{DBLP:conf/sosp/MoraruAK13}, there might exist scenarios where no fast quorum is available. This happens due to
our choice on the size of fast quorums, i.e., $\mathcal{FQ}$, which is greater than the minimum number of correct nodes in the system, i.e., $N-f$. Therefore, under a period of asynchrony of the system, where a message can experience an arbitrarily long delay, a node is not able to distinguish whether $f$ nodes crashed or not, and hence a command leader that waits for replies from a fast quorum of nodes could wait indefinitely in a \textit{fast proposal phase}.

This issue is solved in \thesystem by adopting a more common solution, namely the adoption of timeouts, but it requires the interposition of an additional \textit{slow proposal phase} after the \textit{fast proposal phase} and before either the \textit{retry} or the \textit{stable phase} (see lines P21--P39). In particular, a command leader can decide to execute a \textit{slow proposal phase} without waiting for a fast quorum of $\mathcal{FQ}$ replies if it has collected a quorum of $\mathcal{CQ}$ \textsc{FastProposeR} messages for a command $c$ and none of the messages have rejected the proposed timestamp (P9--P10).

The reason behind this design choice is the following: intuitively, if a command leader did not collect a fast quorum of $\mathcal{OK}$ replies, it cannot take a decision in two communication delays by directly executing the \textit{stable phase}, due to the lower bound on fast consensus defined in~\cite{Lamport:2003:LBA:1809315.1809321}. Therefore, after having collected a quorum of $\mathcal{CQ}$ replies in the \textit{fast proposal phase}, and if none of them has rejected the proposed timestamp, the leader is required to execute an additional communication phase, i.e., the \textit{slow proposal phase}, by contacting a quorum of $\mathcal{CQ}$ acceptors, in order to ensure that even after $f$ failures, there will always be a correct node having information about the proposed timestamp.

Note that, the role played by the \textit{slow proposal phase} is similar to the one played by the \textit{retry phase}, with the difference that, unlike the case of the retry, an acceptor can still reject a proposed timestamp for $c$ in the \textit{slow proposal phase}. For the sake of clarity, we remind to Sections~\ref{sec:recovery} and~\ref{sec:correctness} for more details. Also, for a more rigorous correctness proof see Section~\ref{sec:correctness}, and Section~\ref{proof-consistency} in the appendix.

The execution of the \textit{slow proposal phase} resembles the execution of the \textit{fast proposal phase}, with the following two exceptions: obviously, the predecessors set $\mathcal{P}red$, which has been computed in the \textit{fast proposal phase}, has to be broadcast as part of a \textsc{SlowPropose} message in the \textit{slow proposal phase} (lines P21 and P31);  a node $p_j$ that receives a proposal of a timestamp $\mathcal{T}$ and a set $\mathcal{P}red$ for $c$ in the \textit{slow proposal phase}, marks $c$ as $slow\text{-}pending$ in $\mathcal{H}_j$ if the \textsc{Wait} function does not reject $\mathcal{T}$ (lines P32, P37--P38).

\subsection{Recovery from Failures}
\label{sec:recovery}
Whenever a node $p_i$ crashes, there might exist some command $c$ whose leader is $p_i$ and whose decision would never be finalized unless some explicit action is taken. Indeed, let us suppose there exists a node $p_k$ that stores $c$ with a status different from $stable$. Then, according to the pseudocode of Figure~\ref{fig:pseudocode}, $p_k$ would decide $c$ only after having received a \textsc{Stable} message from $p_i$.

\begin{figure}[h]
\centering
\noindent\fbox{%
\begin{varwidth}{\dimexpr\linewidth\fboxsep\fboxrule\relax}

\begin{algorithmic}[1]
{\scriptsize
\START[\textsc{RecoveryPhase}]{}{(\emph{c})}
\State $\mathcal{B}allots_k[c]$++
\State {\bf send} \textsc{Recovery}[\emph{c, $\mathcal{B}allots_k[c]$}] {\bf to all} $p_j\in \Pi$
\State {\bf receive} \textsc{RecoveryR}[\emph{c, $\mathcal{B}allots_k[c]$,
\Statex \hspace{3cm}$\langle c,\mathcal{T}_j,\mathcal{P}red_j,-,\mathcal{B}_j,\bot/\top\rangle/\mathcal{NOP}$}]
\Statex \hspace{3.8cm}{\bf from all} $p_j \in \mathcal{S}\subseteq \Pi:$ $|\mathcal{S}| = \mathcal{CQ}$
\State $\mathcal{M}axBallot \gets MAX\{\mathcal{B}_j:$ $p_i$ {\bf received}
\Statex \hspace{1cm}\textsc{RecoveryR}[\emph{c, $\mathcal{B}allots_k[c]$, $\langle c,\mathcal{T}_j,\mathcal{P}red_j,-,\mathcal{B}_j,\bot/\top\rangle$}]~\}
\State $\mathcal{R}ecoverySet \gets \{\langle p_j ,\mathcal{T}_j,\mathcal{P}red_j,-,\bot/\top\rangle:$ $p_i$ {\bf received} 
\Statex \hspace{1cm}\textsc{RecoveryR}[\emph{c, $\mathcal{B}allots_k[c]$, $\langle c,\mathcal{T}_j,\mathcal{P}red_j,-,\mathcal{B}_j,\bot/\top\rangle$}]
\Statex \hspace{4.2cm}{\bf from} $p_j$ $\land$ $\mathcal{B}_j=\mathcal{M}axBallot$~\}
\If{$\exists$ $\langle p_j ,\mathcal{T}_j,\mathcal{P}red_j,stable,\bot\rangle \in \mathcal{R}ecoverySet$}
	\State \textsc{StablePhase}(\emph{c, $\mathcal{B}allots_k[c]$, $\mathcal{T}_j$, $\mathcal{P}red_j$})
\ElsIf{$\exists$ $\langle p_j ,\mathcal{T}_j,\mathcal{P}red_j,accepted,\bot\rangle \in \mathcal{R}ecoverySet$}
	\State \textsc{RetryPhase}(\emph{c, $\mathcal{B}allots_k[c]$, $\mathcal{T}_j$, $\mathcal{P}red_j$})
\ElsIf{$\exists$ $\langle p_j ,\mathcal{T}_j,\mathcal{P}red_j,rejected,\bot\rangle \in \mathcal{R}ecoverySet$}
	\State $\mathcal{T}ime \gets \mathcal{TS}_i$
	\State \textsc{FastProposalPhase}(\emph{c, $\mathcal{B}allots_k[c]$, $\mathcal{T}ime$, $null$})
\ElsIf{$\exists$ $\langle p_j ,\mathcal{T}_j,\mathcal{P}red_j,slow\text{-}pending,\bot\rangle \in \mathcal{R}ecoverySet$}
	\State \textsc{SlowProposalPhase}(\emph{c, $\mathcal{B}allots_k[c]$, $\mathcal{T}_j$, $\mathcal{P}red_j$})		
\ElsIf{$|\mathcal{R}ecoverySet|>0$}
	\State $\mathcal{T}ime\gets\mathcal{T}_j:$
	\Statex \hspace{1.1cm}$\exists \langle p_j ,\mathcal{T}_j,\mathcal{P}red_j,fast\text{-}pending,\bot/\top\rangle \in \mathcal{R}ecoverySet$ 
	\State $\mathcal{P}red\gets\bigcup_j \mathcal{P}red_j:\hfill$
	\Statex\hspace{1.2cm} $\langle p_j ,\mathcal{T}_j,\mathcal{P}red_j,fast\text{-}pending,\bot/\top\rangle \in \mathcal{R}ecoverySet\hfill$ 
	\If{$\exists$ $\langle p_j ,\mathcal{T}_j,\mathcal{P}red_j,fast\text{-}pending,\top\rangle \in \mathcal{R}ecoverySet$}
		\State $\mathcal{W}hiteList \gets \mathcal{P}red$
	\ElsIf{$|\mathcal{R}ecoverySet|\geq\left\lfloor\frac{\mathcal{CQ}}{2}\right\rfloor+1$}	
		\State $\mathcal{W}hiteList \gets \{\bar{c}\in \mathcal{P}red:\nexists \mathcal{S}\subseteq\mathcal{R}ecoverySet,$
		\Statex \hspace{5.4cm}$|\mathcal{S}|\geq\left\lfloor\frac{\mathcal{CQ}}{2}\right\rfloor+1$ $\land$
		\Statex\hspace{1.5cm}$\forall \langle p_j ,\mathcal{T}_j,\mathcal{P}red_j,fast\text{-}pending,\bot\rangle \in \mathcal{S},$ $\bar{c}\not\in\mathcal{P}red_j$~\}
	\Else
		\State $\mathcal{W}hiteList \gets null$
	\EndIf
\Else
	\State $\mathcal{T}ime \gets \mathcal{TS}_i$
	\State \textsc{FastProposalPhase}(\emph{c, $\mathcal{B}allots_k[c]$, $\mathcal{T}ime$, $null$})
\EndIf	
\END
\START[\textsc{Recovery}]{{\bf upon receive}}{[\emph{c, $\mathcal{B}allot$}] {\bf from} $p_k \land \mathcal{B}allot> \mathcal{B}allots_j[c]$}
	\State $\mathcal{B}allots_j[c] \gets \mathcal{B}allot$
	\If{$\mathcal{H}_j$.\textsc{Contains}(c)}
		\State {\bf send} \textsc{RecoveryR}[\emph{c, $\mathcal{B}allots_j[c]$, $\mathcal{H}_j$.\textsc{GetInfo}}(c)] {\bf to} $p_k$
	\Else
		\State {\bf send} \textsc{RecoveryR}[\emph{c, $\mathcal{B}allots_j[c]$, $\mathcal{NOP}$}] {\bf to} $p_k$
	\EndIf
\END
}
\end{algorithmic}

\end{varwidth}%
}
\caption{\textsc{Recovery} phase executed by node $p_k$. Node $p_j$ is a receiver of the \textsc{Recovery} message.}
\label{fig:pseudocode-recovery}
\end{figure}

For this reason, \thesystem also includes an explicit recovery procedure (Figure~\ref{fig:pseudocode-recovery}) that finalizes the decision of commands whose leader either crashed or has been suspected. Given the aforementioned example, whenever the failure detector of $p_k$ suspects $p_i$, $p_k$ attempts to become $c$'s leader and finalizes the decision of $c$.
This is done by executing a Paxos-like prepare phase, and collecting the most recent information about $c$ from a quorum of $\mathcal{CQ}$ nodes as follows: $p_k$ increments its current ballot for $c$, i.e., $\mathcal{B}allots_k[c]$, (line 2) and it broadcasts a \textsc{Recovery} message for $c$ with the new ballot (line 3). Then, it waits for a quorum of $\mathcal{CQ}$ \textsc{RecoveryR} replies, which contain information about $c$, before finalizing the decision for $c$ (line 4). \textsc{RecoveryR} from $p_j$ contains either the tuple of $c$ in $\mathcal{H}_j$ or $\mathcal{NOP}$ if such a tuple does not exist (lines 31--34).

A node $p_j$ that receives a \textsc{Recovery} message from $p_k$ replies only if its ballot for $c$ is lesser than the one it has received. In such a case, $p_j$ also updates its ballot for $c$ (lines 29--30). Like in Paxos, this is done to guarantee that no two leaders can compete to finalize the decision for the same command concurrently. In fact, if two leaders $p_{k1}$ and $p_{k2}$ both successfully execute lines 3 and 4 of the recovery procedure with ballots $\mathcal{B}_1$ and  $\mathcal{B}_2$, respectively, then, if $\mathcal{B}_1<\mathcal{B}_2$, for any quorum of nodes $\mathcal{S}$, there always exists a node in $\mathcal{S}$ that never replies to $p_{k1}$ (see the reception of \textsc{FastPropose}, \textsc{SlowPropose}, \textsc{Retry}, and \textsc{Stable} messages in Figure~\ref{fig:pseudocode}). 

When node $p_k$ successfully becomes $c$'s leader, it filters the information for $c$ that it has received by only keeping in $\mathcal{R}ecoverySet$ the data associated with the maximum ballot, named $\mathcal{M}axBallot$ in the pseudocode (lines 5--6). Each tuple of the set is a sequence of \textit{node identifier}, \textit{timestamp}, \textit{predecessors set}, \textit{status}, and \textit{forced boolean} indicating: the node that sent the information, the timestamp, the predecessors set, the status of $c$ on that node, and whether that information has been forced by a $\mathcal{W}hiteList$ or not on that node. Then, $p_k$ takes a decision for $c$ according to the content of $\mathcal{R}ecoverySet$ as follows. \textit{i)} If there exists a tuple with status $stable$, then $p_k$ starts a \textit{stable phase} for $c$ by using the necessary info from that tuple, e.g., timestamp and predecessors set (lines 7--8). \textit{ii)}  If there exists a tuple with status $accepted$, then $p_k$ starts a \textit{retry phase} for $c$ by using the necessary info from that tuple (lines 9--10). \textit{iii)}  If there exists a tuple with status $rejected$ or $\mathcal{R}ecoverySet$ is empty, 
$c$ was never decided, and hence $p_k$ starts a \textit{fast proposal phase} for $c$ (lines 11--13, and 26--28) by using a new timestamp (as described in Section~\ref{sec:fast-decision}).  \textit{iv)}  If there exists a tuple with status $slow\text{-}pending$, then $p_k$ starts a \textit{slow proposal phase} for $c$ by using the necessary info from that tuple (lines 14--15).
\textit{v)} If the previous conditions are false, then $\mathcal{R}ecoverySet$ contains tuples with the same timestamp $\mathcal{T}ime$ and status $fast\text{-}pending$ (lines 16--25). In this last case, $p_k$ starts a \textit{proposal phase} for $c$ with timestamp $\mathcal{T}ime$ because $c$ might have been decided with that timestamp in a previous fast decision (line 25). If so, $p_k$ has to also choose the right predecessors set that was adopted in that decision. Therefore, it has to either choose a predecessors set in $\mathcal{R}ecoverySet$ that was forced by a previous recovery, if any (lines 19--20), or it has to build its own $\mathcal{W}hiteList$ of commands that should be forced as predecessors of $c$ (lines 21--24).

This is done by noticing that: if $c$ was decided in a \textit{fast decision} with ballot $\mathcal{M}axBallot$ then the size of $\mathcal{R}ecoverySet$ cannot be lesser than $\left\lfloor\frac{\mathcal{CQ}}{2}\right\rfloor+1$, which is the minimum size of the intersection of any classic quorum and any fast quorum (lines 21 and 24); if a command $\bar{c}$ was previously decided in a \textit{fast decision} and it has to be a predecessor of $c$, then there cannot exist a subset of $\left\lfloor\frac{\mathcal{CQ}}{2}\right\rfloor+1$ tuples in $\mathcal{R}ecoverySet$, whose predecessors sets do not contain $\bar{c}$ (line 22). Note that, the case in which $\bar{c}$ was previously decided in a \textit{slow decision} and has to be a predecessor of $c$ is handled by the computation of predecessors set in the \textit{fast proposal phase} (see line P13 of Figure~\ref{fig:pseudocode}, and lines 1--3 of Figure~\ref{fig:aux-functions}).

\begin{figure*}[t]
\centering
    \includegraphics[width=0.3\textwidth]{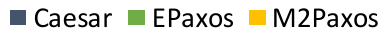}\\
    \includegraphics[width=1\textwidth]{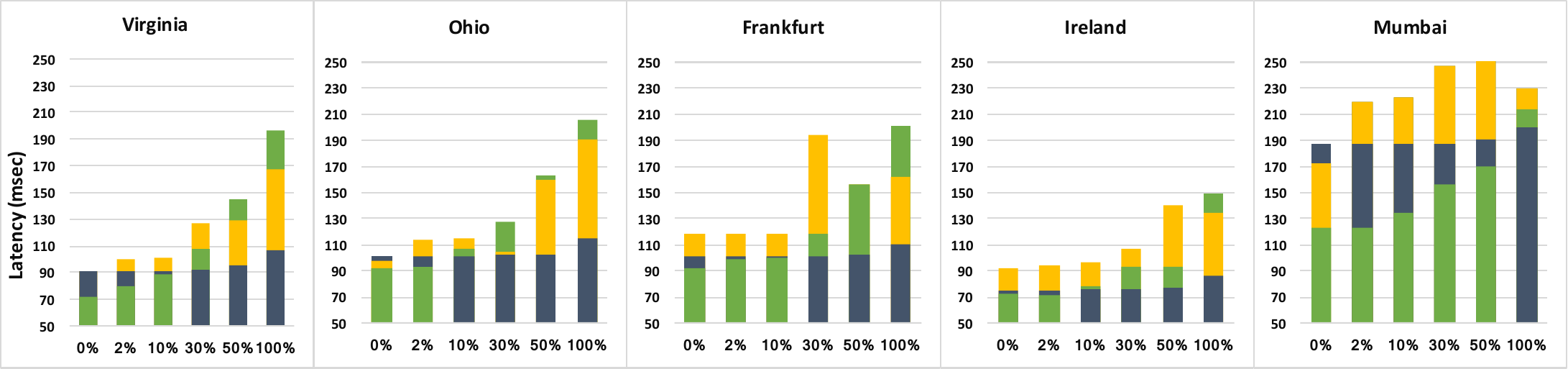}
	    \vspace{-20pt}
    \caption{Average latency for ordering and processing commands by changing the percentage of conflicting commands. Batching is disabled. Bars are overlapped: e.g., in the case of 30\% conflicts in Virginia, latency values are 90 msec, 108 msec, and 127 msec, for \thesystem, EPaxos, and $M^2$Paxos, respectively.}
    \label{fig:lat-all}
\end{figure*}

\subsection{Correctness}
\label{sec:correctness}
The complete formal proof on the correctness of \thesystem is in Appendix~\ref{proof-consistency}. We have also formalized a description of \thesystem in TLA+~\cite{Lamport:2002:SST:579617}, and we have model-checked it with the TLC model-checker. The TLA+ specification is in Appendix~\ref{tla}.

Here we provide the main intuition on how we proceeded in proving that \thesystem implements the specification of Generalized Consensus. 

Let us also define the predicate \textsc{Decided}[$c$,$\mathcal{T}$,$\mathcal{P}red$,$\mathcal{B}$] as a predicate that is equal to true whenever a node decides a command $c$ with timestamp $\mathcal{T}$, predecessors set $\mathcal{P}red$, and ballot $\mathcal{B}$. Then we can prove that \thesystem guarantees \emph{Consistency} by proving the following two theorems:
\begin{compactitem}[-]
\item
$\forall c,\bar{c},$ $($\textsc{Decided}[$c$,$\mathcal{T}$,$\mathcal{P}red$,$\mathcal{B}$] $\land$ \textsc{Decided}[$\bar{c}$,$\bar{\mathcal{T}}$,$\overline{\mathcal{P}red}$,$\overline{\mathcal{B}}$]  $\land$ $\bar{\mathcal{T}}<\mathcal{T}$ $\land$ $c\sim\bar{c}$ $\Rightarrow$ $\bar{c}\in \mathcal{P}red)$;

\item
$\forall c$ $(\exists \mathcal{B},$ \textsc{Decided}[$c$,$\mathcal{T}$,$\mathcal{P}red$,$\mathcal{B}$] $\land$ $\forall \bar{c} \in \mathcal{P}red,$ \textsc{Decided}[$\bar{c}$,$\bar{\mathcal{T}}$,$\overline{\mathcal{P}red}$,$\overline{\mathcal{B}}$] $\Rightarrow$ $\forall \mathcal{B}'\geq \mathcal{B},$\hfill $($\textsc{Decided}[$c$,$\mathcal{T}'$,$\mathcal{P}red'$,$\mathcal{B}'$] $\Rightarrow$ $\mathcal{T}' = \mathcal{T} \land \mathcal{P}red'=\mathcal{P}red))$.
\end{compactitem}

\section{Implementation and Evaluation}
\label{sec:exp_eval}

We implemented \thesystem in Java and contrasted it with four state-of-the-art consensus protocols: $M^2$Paxos, EPaxos, Multi-Paxos, and Mencius. 
We used the Go language implementations of EPaxos, Multi-Paxos, and Mencius from the authors of EPaxos. For $M^2$Paxos, we used the open-source implementation in Go. Note that Go compiles to native binary while Java runs on top of the Java Virtual Machine. Thus, we use a warmup phase before each experiment in order to kickstart the Java JIT Compiler.

Competitors have been evaluated on Amazon EC2, using m4.2xlarge instances (8 vCPU and 32GB RAM) running Ubuntu Linux 16.04. Our benchmark issues client commands to update a given key of a fully replicated Key-Value store. Two commands are conflicting if they access the same key.
The command size is 15 bytes, which include key, value, request ID, and operation type.

In our evaluations, we explored both conflicting and non-conflicting workloads. When the clients issue conflicting commands, the key is picked from a shared pool of 100 keys with a certain probability depending on the experiment. As a result, by categorizing a workload with 10\% of conflicting commands, we refer to the fact that 10\% of the accessed keys belong to the shared pool.
To measure latency, we issued requests in a closed loop by placing 10 clients co-located with each node (50 in total), and for throughput the clients injected requests to the system in an open loop. Performance of competitors has been collected with and without network batching (the caption indicates that).

We deployed the competitors on five nodes located in Virginia (US), Ohio (US), Frankfurt (EU), Ireland (EU), and Mumbai (India).
This configuration spreads nodes such that the latency to achieve a quorum is similar for all quorum-based competitors. It is worth recalling that in a system with 5 nodes, \thesystem requires contacting one node more than other quorum-based competitors to reach a fast decision. The round trip time (RTT) that we measured in between nodes in EU and US are all below 100ms. The node in India experiences the following delays with respect to the other nodes: 186ms/VA, 301ms/OH, 112ms/DE, 122ms/IR.
As in EPaxos, \thesystem uses separate queues for handling different types of messages, and each of these queues is handled by a separate pool of threads. In \thesystem, conflicting commands are tracked using a Red-Black tree data structure ordered by their timestamp. 
Multi-Paxos is deployed in two settings: one where the leader is located in Ireland, which is a node close to a quorum, and one where the leader is in Mumbai, which needs to contact nodes at long distance to have a quorum of responses.

\subsection{Non-faulty Scenarios}

In Figure~\ref{fig:lat-all}, we report the average latency incurred by \thesystem, EPaxos, and $M^2$Paxos to order and execute a command. Given the latency of a command is affected by the position of the leader that proposes the command itself, we show the results collected in each site.
Each cluster of data shows the behavior of a system while increasing the percentage of conflicts in the range of \{0\% -- no conflict, 2\%, 10\%, 30\%, 50\%, 100\% -- total order\}.

At 0\% conflicts, EPaxos and $M^2$Paxos provide comparable performance because both employ two communication steps to order commands and the same size for quorums, with EPaxos slightly faster because it does not need to acquire the ownership on submitted commands before ordering. The performance of \thesystem is slightly slower (on average 18\%) than EPaxos because of the need of contacting one more node to reach consensus.

When the percentage of conflicting commands increases up to 50\%, \thesystem sustains its performance by providing an almost constant latency; all other competitors degrade their performance visibly. The reasons vary by protocol.
EPaxos degrades because its number of slow decisions increases accordingly, along with the complexity of analyzing the conflict graph before delivering. For $M^2$Paxos, the degradation is related to the forwarding mechanism implemented when the requested key is logically owned by another node. In that case, $M^2$Paxos passes the command to that node, which becomes responsible to order it. This mechanism introduces an additional communication delay, which contributes to degraded performance especially in geo-scale where the node having the ownership of the key may be faraway. At last, we included also the case of 100\% conflicts. Here all competitors behave poorly given the need for ordering all commands, which does not represent their ideal deployment.

\begin{figure}[t]
\centering 
	\includegraphics[width=0.3\textwidth]{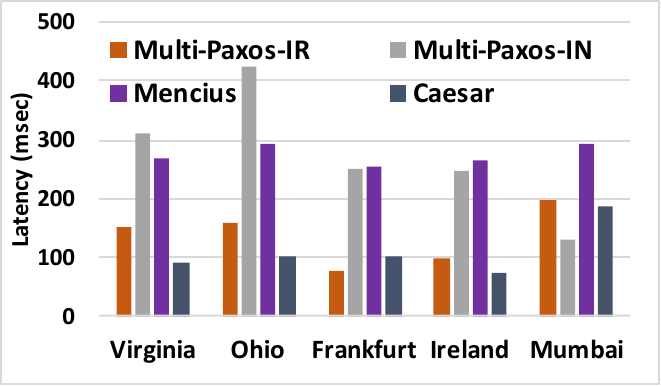}
	    \vspace{-8pt}
    \caption{Average latency for ordering commands of Multi-Paxos (with a close and faraway leader), Mencius, and \thesystem. Batching is disabled.}
    \label{fig:multi-mencius}
\end{figure}

The latency provided by the node in India is higher than other nodes. Here \thesystem is  50\% slower than EPaxos only when conflicts are low, because \thesystem has to contact one more  faraway node (e.g., Virginia) to deliver fast.

Performances of Multi-Paxos and Mencius are separately reported in Figure~\ref{fig:multi-mencius} because these competitors are oblivious to the percentage of conflicting commands injected in the system. \thesystem 0\% has also been included for reference. Mencius's performance is similar across the nodes because it needs to collect feedbacks from all consensus participants, and therefore it performs as the slowest node and on average 60\% slower than \thesystem. The version of Multi-Paxos with the leader in Mumbai (Multi-Paxos-IN) is not able to provide low latency due to the delay that commands experience while waiting for a response from the leader. On the other hand, if the leader is placed in Ireland (Multi-Paxos-IR) the quorum can be reached faster than the case of Multi-Paxos-IN, thus command latency is significantly lower. Compared with results in Figure~\ref{fig:lat-all}, Multi-Paxos-IR and Multi-Paxos-IN are, on average, 5\% and 40\% slower than \thesystem 100\%, respectively.

\begin{figure}[h]
	\centering
    \includegraphics[width=0.2\textwidth]{plots/legend.pdf}\\
    \includegraphics[width=0.48\textwidth]{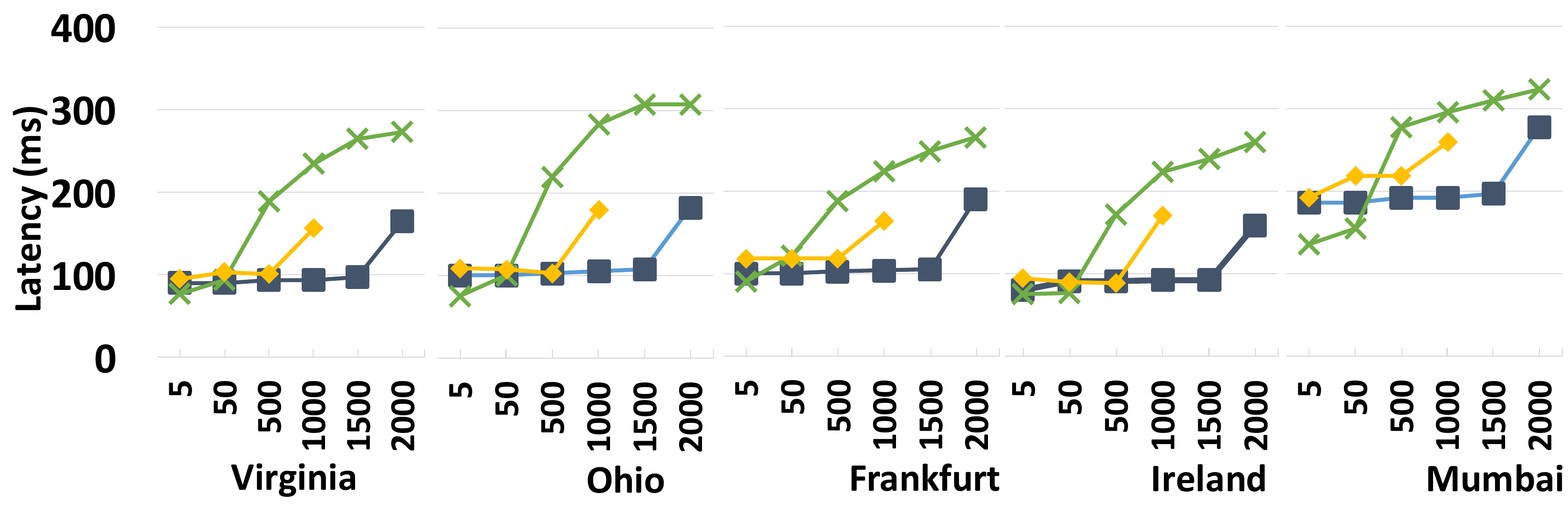}
    \vspace{-5pt}
    \caption{Latency per node while varying the number of connected clients (5 -- 2000). Network messages are not batched.}
    \label{fig:caesar-lat-workload}
\end{figure}

Scalability of competitors is measured by loading the system with more clients. Figure~\ref{fig:caesar-lat-workload} shows the latency of \thesystem, EPaxos, and $M^2$Paxos for each site using a workload with 10\% conflicting commands. The x-axis indicates the total number of connected clients. The complex delivery phase of EPaxos, where it has to analyze the dependency graph before executing every command, slows down its performance as the load increases while \thesystem provides a steady latency and reaches its saturation only when more than 1500 total clients are connected. $M^2$Paxos stops scaling after 1000 connected clients due to the impact of the forwarding mechanism.

\begin{figure}[h]
\centering 
	\includegraphics[width=0.45\textwidth]{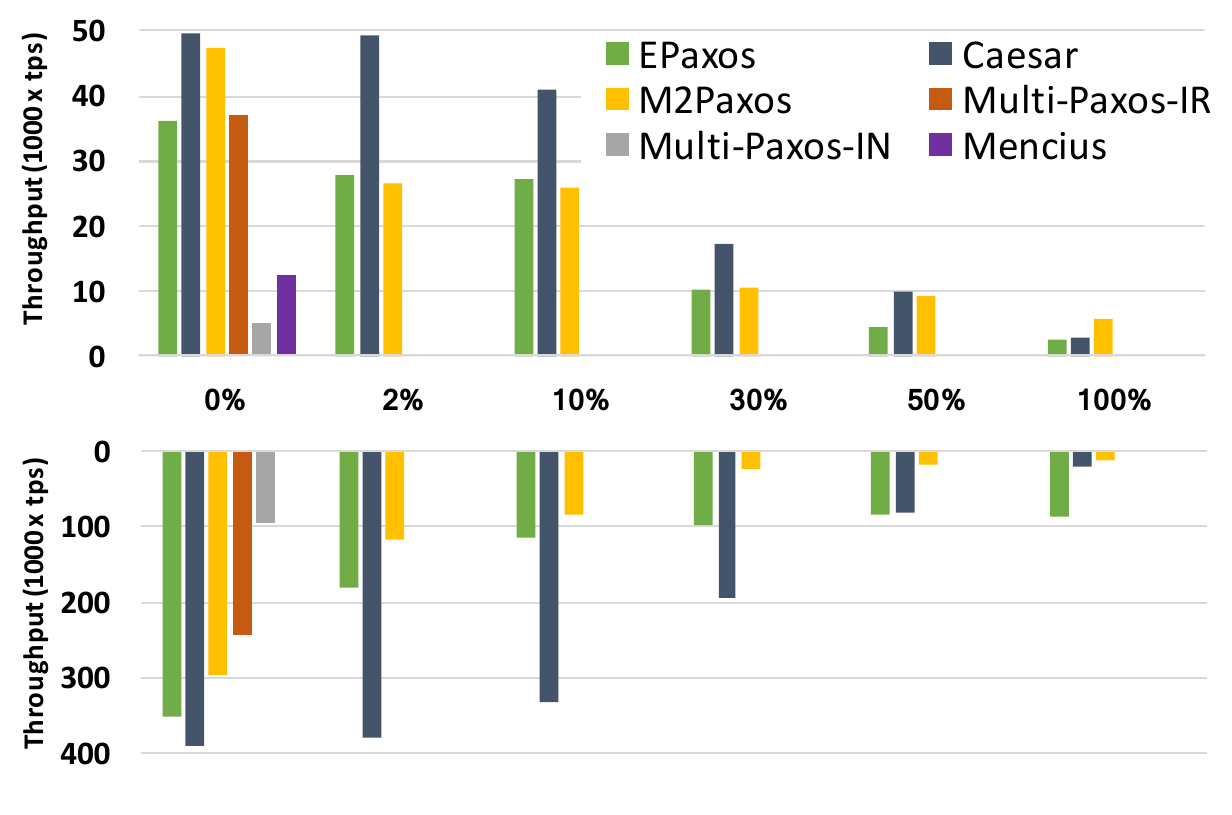}
	        \vspace{-10pt}
    \caption{Throughput by varying the percentage of conflicting commands. In the top part of the plot batching is disabled, in the lower part it is enabled.}
    \label{fig:thr-all}

\end{figure}

Figure~\ref{fig:thr-all} shows the total throughput obtained by each competitor.
Performance of Multi-Paxos and Mencius is placed under the 0\% case. The upper part of the plot has network batching disabled. 
Here the performance of \thesystem degrades by only 17\% when moving from no conflict to 10\% of conflicting commands. EPaxos and $M^2$Paxos have already lost 24\% and 45\% of their performance with respect to the no-conflict configuration. The cases of 30\% and 50\% still show improvement for \thesystem, but now the impact of the wait condition to deliver fast is more evident, which explains the gap in throughput from the case of 10\% conflicts. $M^2$Paxos is the system that behaves best when commands are 100\% conflicting. Here the impact of the forwarding technique deployed when commands access an object owned by a different node prevails over the ordering procedure of EPaxos and \thesystem, which involves the exchange of a long list of dependent commands over the network.
Interestingly, Multi-Paxos-IR performs as EPaxos 0\%. That is because in this setting and for both competitors, nodes in EU and US can reach a quorum with a low latency, and both of them suffer from the low performance of the Mumbai's node. Also, although they rely on different techniques to decide ordered commands, in this setting the CPU cycles needed to handle incoming messages are comparable.

In the bottom part of Figure~\ref{fig:thr-all}, batching has been enabled.
Mencius's implementation does not support batching thus we omitted it. The trend is similar to the one observed with batching disabled. The noticeable difference regards the performance of EPaxos when the percentage of conflicts increases. At 50\% and 100\% of conflicting commands, EPaxos behaves better than other competitors because, although the time needed for analyzing the conflict graph increases, it does not deploy a wait condition that contributes to slow down the ordering process if conflicts are excessive. In terms of improvements, \thesystem sustains its high throughput up to 10\% of conflicting commands by providing more than 320k ordered commands per second, which is almost 3 times faster than EPaxos.
Multi-Paxos shows an expected behavior: it performs well under its optimal deployment, where the leader can reach consensus fast, but it degrades its performance substantially if the leader moves to a faraway node.

\begin{figure}[t]
\centering
\includegraphics[width=0.42\textwidth]{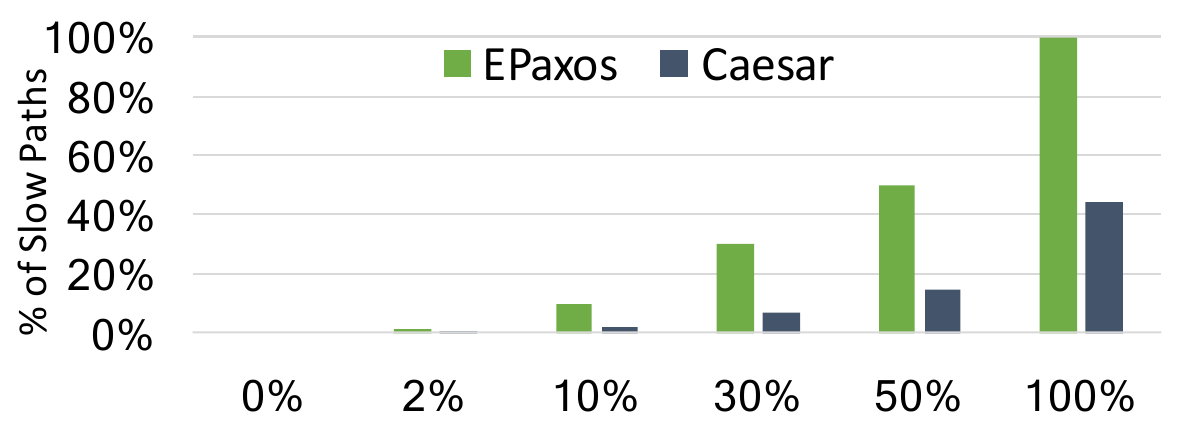}
\vspace{-5pt}
\caption{\% of commands delivered using a slow decision by varying \% of conflicting commands. Batching here is disabled.}
\label{fig:caesar:slowpaths}
\end{figure}

\thesystem's ability to take fewer slow decisions than existing consensus protocols in presence of conflicts helps it to achieve a lower latency and higher throughput than competitors. In Figure~\ref{fig:caesar:slowpaths}, we show the percentage of commands that were committed by taking fast decisions in both the protocols. It should be noted that the number of slow decisions taken by EPaxos is in the same range as the percentage of conflict. However, that is not the case of \thesystem, where the number of slow decisions more gracefully increases along with conflicts. In fact, \thesystem takes more than 3 times fewer slow decisions compared to EPaxos even under moderately conflicting (e.g. 30\%) workloads. The reason for that is the wait condition that provides the rejection of a command only when its timestamp is invalid.
In this experiment, to avoid confusion in analyzing statistics, batching has been disabled.

\begin{figure}[h]
\centering
\subfigure[Ordering phases.]{
    \includegraphics[width=0.2\textwidth]{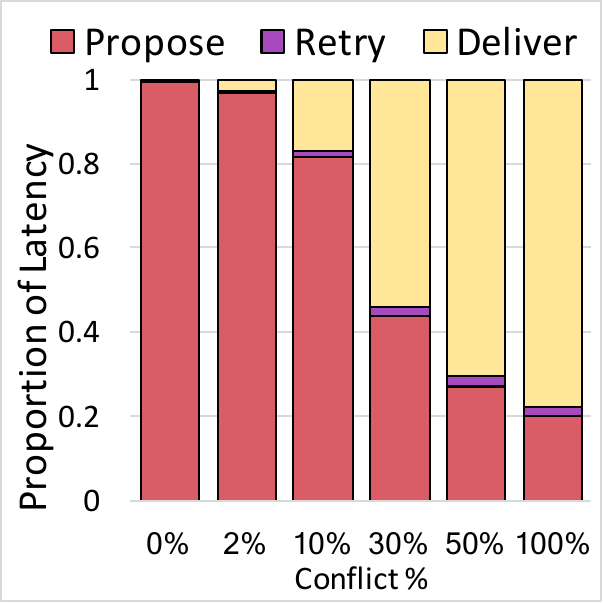}
    \label{fig:caesar:lat-phases}
}\hspace{-5pt}
\subfigure[Wait condition.]{
    \includegraphics[width=0.25\textwidth]{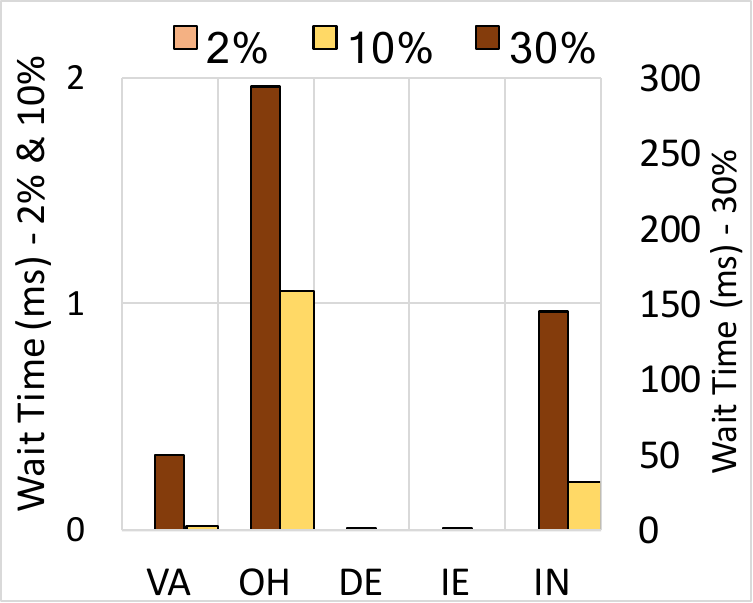}
    \label{fig:caesar:wait-cond}
    }
\vspace{-5pt}
\caption{Latency breakdown for \thesystem.}
\label{fig:caesar:stats}
\vspace{-10pt}
\end{figure}

In Figure~\ref{fig:caesar:stats}, we report the internal statistics of \thesystem gathered during the experiment in Figure~\ref{fig:thr-all}. Figure~\ref{fig:caesar:lat-phases} shows the breakdown of the proportion of latency consumed by each ordering phase of the protocol. For no conflicts (0\%, 2\%), the maximum time is spent in the proposal phase. The cost of the delivery is very low, since there are no dependencies. However, as conflicts increase, delivery becomes a major portion of the total cost because a \textsc{Stable} command must wait for the delivery of all the conflicting commands with an earlier timestamp before being delivered.

Figure~\ref{fig:caesar:wait-cond} reports the average time spent on the wait condition during the proposal phase by conflicting commands using the same workload for throughput measurement. Note that we used a different scale (right y-axis) for 30\% of conflicting commands to highlight the difference with respect to the case of 2\% and 10\%.
Close together nodes experience a quicker timestamp advancement than faraway nodes because they are able to exchange proposals faster. Also, given 
that faraway nodes are not aware of this advancement,
they propose commands with a lower timestamp, which causes their conflicting commands to wait.

\subsection{Recovery}

\vspace{-3pt}
\begin{figure}[h]
  \centering
  \includegraphics[width=0.38\textwidth]{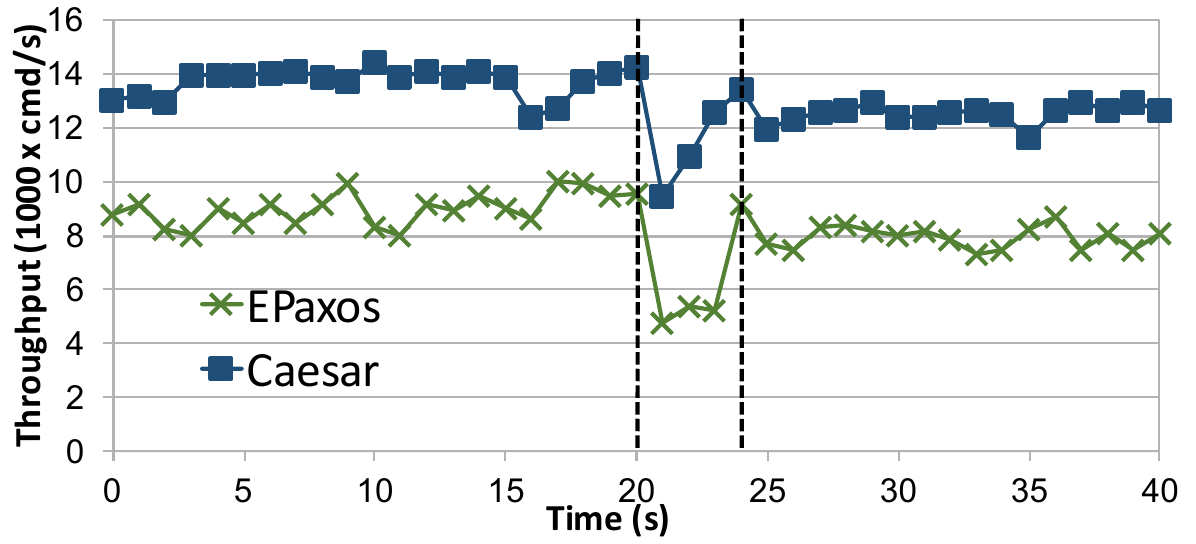}
  \vspace{-10pt}
  \caption{Throughput when one node fails.}
  \label{fig:recovery}
\end{figure}

In Figure~\ref{fig:recovery}, we report the throughput when one node crashes, to show that it does not cause system's unavailability. We compared \thesystem and EPaxos. For this test, the requests are injected in a closed-loop with 500 clients on each node. After 20 seconds through the experiment, the instances of \thesystem and EPaxos are suddenly terminated in one of the nodes. Then, the clients from that node timeout and reconnect to other nodes. This is visible by observing the throughput falling down for few seconds due to loss of those 500 clients. However, as the clients reconnect to other available nodes and inject requests, the throughput restores back to the normal. In our experiment, the recovery period lasted about 4 seconds.

\section{Conclusion}
\label{conclusion}

This paper shows that existing high-performance implementations of Generalized Consensus suffer from performance degradation when the percentage of conflicting commands increases. The reason is related to the way they establish a fast decision. In this paper we present an innovative technique that provides a very high probability of fast delivery.

\section*{Acknowledgment}
We thank the anonymous reviewers for their valuable comments.
This work is partially supported by Air Force Office of Scientific Research under grant FA9550-15-1-0098 and by US National Science Foundation under grant CNS-1523558.



%
\bibliographystyle{IEEEtran}
\bibliography{ref/caesar}

\onecolumn
\appendix

\subsection{Proof on the Correctness of Caesar}
\label{proof-consistency}
\emph{Nontriviality}, \emph{Stability}, and \emph{Liveness} are guaranteed since: any node only decides commands that were proposed; the set of decided commands monotonically grows on each node; and any command is eventually assigned to a correct leader, which can eventually finalize the decision for it.

On the other hand we formally prove that \thesystem guarantees \emph{Consistency} by showing that Theorems~\ref{th1} and~\ref{th2} hold in Section~\ref{sec:theorems}. To do that we first introduce a preliminary terminology (Section~\ref{sec:terminology}), and we prove that Lemmas~\ref{lm1}--\ref{lm11} hold (Section~\ref{sec:theorems}).

\subsubsection{Terminology}
\label{sec:terminology}
The predicates are defined as follows:
\begin{itemize}[-]
\item \textsc{Decided}[$c$,$\mathcal{T}$,$\mathcal{P}red$,$\mathcal{B}$] is equal to true whenever a node decides a command $c$ with timestamp $\mathcal{T}$, predecessors set $\mathcal{P}red$, and ballot $\mathcal{B}$.
\item \textsc{FastDecision}[$c$,$\mathcal{T}$,$\mathcal{P}red$,$\mathcal{B}$] is equal to true whenever a node decides a command $c$ with timestamp $\mathcal{T}$, predecessors set $\mathcal{P}red$, and ballot $\mathcal{B}$ in a fast decision. This means that the command is decided in a \textit{stable phase} after a transition from a fast proposal phase (see transition \textsc{FastDecision} in the pseudocode of Figure~\ref{fig:pseudocode}).
\item \textsc{SlowDecisionFromRetry}[$c$,$\mathcal{T}$,$\mathcal{P}red$,$\mathcal{B}$] is equal to true whenever a node decides a command $c$ with timestamp $\mathcal{T}$, predecessors set $\mathcal{P}red$, and ballot $\mathcal{B}$ in a slow decision after the execution of a \textit{retry phase}. This means that the command is decided in a \textit{stable phase} after a transition from a \textit{retry phase} (see transition \textsc{SlowDecisionFromRetry} in the pseudocode of Figure~\ref{fig:pseudocode}).
\item \textsc{SlowDecisionFromProposal}[$c$,$\mathcal{T}$,$\mathcal{P}red$,$\mathcal{B}$] is equal to true whenever a node decides a command $c$ with timestamp $\mathcal{T}$, predecessors set $\mathcal{P}red$, and ballot $\mathcal{B}$ in a slow decision after the execution of a \textit{slow proposal phase}. This means that the command is decided in a \textit{stable phase} after a transition from a \textit{slow proposal phase} (see transition \textsc{SlowDecisionFromProposal} in the pseudocode of Figure~\ref{fig:pseudocode}).
\item \textsc{SlowDecisionFromRetryFP}[$c$,$\mathcal{T}$,$\mathcal{P}red$,$\mathcal{B}$]  is equal to true whenever a node decides a command $c$ with timestamp $\mathcal{T}$, predecessors set $\mathcal{P}red$, and ballot $\mathcal{B}$ in a slow decision after the execution of a \textit{retry phase}, which is executed due to a rejection in the \textit{fast proposal phase}. This means that the command is decided in a \textit{stable phase} after a transition from \textit{fast proposal phase} through a \textit{retry phase}.
\item \textsc{SlowDecisionFromRetrySP}[$c$,$\mathcal{T}$,$\mathcal{P}red$,$\mathcal{B}$]  is equal to true whenever a node decides a command $c$ with timestamp $\mathcal{T}$, predecessors set $\mathcal{P}red$, and ballot $\mathcal{B}$ in a slow decision after the execution of a \textit{retry phase}, which is executed due to a rejection in the \textit{slow proposal phase}. This means that the command is decided in a \textit{stable phase} after a transition from \textit{slow proposal phase} through a \textit{retry phase}.
\item \textsc{SlowDecision}[$c$,$\mathcal{T}$,$\mathcal{P}red$,$\mathcal{B}$] denotes any/all of the aforementioned predicates: \textsc{SlowDecisionFromRetry}, \textsc{SlowDecisionFromProposal}, \textsc{SlowDecisionFromRetryFP}, and \textsc{SlowDecisionFromRetrySP}.
\end{itemize}

\subsubsection{Proof on Consistency}
\label{sec:theorems}
\begin{lemma}
\label{lm1}
$\forall c,\bar{c}, (\textsc{FastDecision}[c,\mathcal{T},\mathcal{P}red,\mathcal{B}]
\land \\
(\textsc{SlowDecisionFromRetry}[\bar{c},\bar{\mathcal{T}},\overline{\mathcal{P}red},\overline{\mathcal{B}}] \lor
\textsc{SlowDecisionFromProposal}[\bar{c},\bar{\mathcal{T}},\overline{\mathcal{P}red},\overline{\mathcal{B}}]) \\
\land
\bar{\mathcal{T}}<\mathcal{T} \land c\sim\bar{c}) \Rightarrow \bar{c}\in \mathcal{P}red$.
\end{lemma}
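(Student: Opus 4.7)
My plan is to argue by contradiction: assume all the antecedents of the implication and, in addition, $\bar c\notin\mathcal{P}red$. Let $\mathcal{FQ}_c$ be the fast quorum whose $\mathcal{OK}$ responses produced the fast decision for $c$, so that $\mathcal{P}red$ is the union of the predecessor sets they returned, and let $t_1(j)$ denote the instant at which $p_j\in\mathcal{FQ}_c$ processes the \textsc{FastPropose} for $c$. The first step is to show that for every $p_j\in\mathcal{FQ}_c$, $\bar c$ is absent from $\mathcal{H}_j$ at time $t_1(j)$. Because a command's timestamp can only be raised during its lifetime (by \textsc{Retry}) and its final slow-decision value is $\bar{\mathcal{T}}$, any entry ever stored for $\bar c$ on any node must carry a timestamp at most $\bar{\mathcal{T}}<\mathcal{T}$. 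Hence if $\bar c$ were in $\mathcal{H}_j$ at $t_1(j)$ then \textsc{ComputePredecessors}$(c,\mathcal{T},\text{null})$ would have placed $\bar c$ into $p_j$'s returned set---the whitelist is null so only the timestamp condition $<\mathcal{T}$ matters---and hence $\bar c$ would lie in $\mathcal{P}red$, contradicting our assumption.

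Consider first \textsc{SlowDecisionFromProposal}$[\bar c,\bar{\mathcal{T}},\overline{\mathcal{P}red},\overline{\mathcal{B}}]$. Let $\mathcal{CQ}_{\bar c}$ be the classic quorum of $\mathcal{OK}$ responses to the \textsc{SlowPropose} for $\bar c$, and pick $p_j\in\mathcal{FQ}_c\cap\mathcal{CQ}_{\bar c}$; the intersection is nonempty because $\lceil 3N/4\rceil+\lfloor N/2\rfloor+1>N$. Since $\bar c\notin\mathcal{H}_j$ at $t_1(j)$, $p_j$ can only process the \textsc{SlowPropose} at some $t_2(j)>t_1(j)$. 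At $t_2(j)$ the tuple for $c$ sits in $\mathcal{H}_j$ with timestamp $\mathcal{T}>\bar{\mathcal{T}}$, satisfies $c\sim\bar c$, and its locally stored predecessor set excludes $\bar c$: it was computed at $t_1(j)$ without $\bar c$ and can only be overwritten by the \textsc{Stable} for $c$, whose global $\mathcal{P}red$ also excludes $\bar c$ by assumption. Therefore the \textsc{Wait} call inside \textsc{SlowPropose} blocks until $c$ becomes stable and then returns $\mathcal{NACK}$, forcing $p_j$ to NACK the \textsc{SlowPropose}---contradicting $p_j\in\mathcal{CQ}_{\bar c}$.

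For \textsc{SlowDecisionFromRetry}$[\bar c,\bar{\mathcal{T}},\overline{\mathcal{P}red},\overline{\mathcal{B}}]$, the \textsc{Retry} was triggered by a $\mathcal{NACK}$ from a preceding phase $\phi$---either the fast or the slow proposal for $\bar c$---and by line R1 the retry timestamp $\bar{\mathcal{T}}$ dominates every suggestion in the quorum $\mathcal{CQ}'$ of $\phi$-responses the leader used. Picking $p_j\in\mathcal{FQ}_c\cap\mathcal{CQ}'$ and running the same reasoning as in the previous paragraph, $p_j$ processes its $\phi$-message after $t_1(j)$, \textsc{Wait} blocks for $c$ to become stable, finds $\bar c\notin c.\mathcal{P}red$, and returns $\mathcal{NACK}$; the suggested timestamp equals the current $\mathcal{TS}_j$, which exceeds $\mathcal{T}$ because $p_j$ has already processed $c$ at timestamp $\mathcal{T}$. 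Since $p_j\in\mathcal{CQ}'$, this suggestion is folded into the maximum at line R1 and yields $\bar{\mathcal{T}}>\mathcal{T}$, contradicting $\bar{\mathcal{T}}<\mathcal{T}$.

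The main delicacy lies in the retry case: one must justify that $\mathcal{CQ}'$ coincides with the concrete set of $\phi$-responses the leader actually examined before transitioning to \textsc{Retry}, so that membership in $\mathcal{CQ}'$ implies that the response's suggestion enters the max defining $\bar{\mathcal{T}}$. A secondary subtlety, used throughout, is the timestamp-monotonicity claim underlying step~1, which needs a brief induction over phase transitions (only \textsc{Retry} updates timestamps, and only upward), ensuring that no historical value of $\bar c$'s timestamp exceeds $\bar{\mathcal{T}}$.
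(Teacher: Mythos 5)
Your proof is correct and follows essentially the same strategy as the paper's: assume $\bar{c}\notin\mathcal{P}red$, use the nonempty intersection of $c$'s fast quorum with the classic quorum backing $\bar{c}$'s slow decision, and derive a contradiction from the fact that the intersection node either would have put $\bar{c}$ (timestamp $<\mathcal{T}$) into its reported predecessor set or, via \textsc{Wait}, would have rejected $\bar{c}$'s proposal / forced a retry timestamp exceeding $\mathcal{T}$. If anything you are more explicit than the paper in the \textsc{SlowDecisionFromRetry} case (the max-of-suggestions argument) and in flagging the timestamp-monotonicity point, both of which the paper's condensed derivation leaves implicit.
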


\begin{proof}
$\forall\ c,\mathcal{T},\mathcal{P}red,\mathcal{B} : \textsc{FastDecision}[c,\mathcal{T},\mathcal{P}red,\mathcal{B}] \Rightarrow$\\
$\forall \bar{c} \notin \mathcal{P}red : \bar{c} \sim c, \exists \mathcal{FQ} \in FastQuorums : \forall p_j \in \mathcal{FQ}, \exists \langle c, \mathcal{T}, \mathcal{P}red, fast\text{-}pending/stable, \mathcal{B}, - \rangle \in H_j \land \bar{c} \notin \mathcal{P}red_j \land (\exists \langle \bar{c},\bar{\mathcal{T}}, -,-,-,- \rangle \in H_j \land \bar{\mathcal{T}} < \mathcal{T})$

\begin{center}
$\Downarrow$\\

$(\exists \langle \bar{c}, \bar{\mathcal{T}}, -, fast\text{-}pending, -, -\rangle \in H_j \lor  \textsc{Wait}(\bar{c}, \bar{\mathcal{T}}, p_j) \lor \exists \langle\bar{c}, \bar{\mathcal{T}}, -, rejected, -, -\rangle \in H_j)$

$\land$

$\nexists \langle \bar{c},\bar{\mathcal{T}},-, accepted, -, -\rangle \in H_j$

$\land$

$\nexists \langle \bar{c}, \bar{\mathcal{T}}, -, stable, -, -\rangle \in H_j$

$\land$

$\nexists \langle \bar{c}, \bar{\mathcal{T}}, -, slow\text{-}pending, -, - \rangle \in H_j$
\end{center}

$\bar{c}$ is not decided slow at $\bar{\mathcal{T}}$ because \\

$\textsc{SlowDecision}[\bar{c}, \bar{\mathcal{T}}, -, \bar{\mathcal{B}}] \Rightarrow$ \\
$\forall \mathcal{CQ} \in ClassicQuorums, \exists p_j \in \mathcal{CQ} : \exists \langle \bar{c}, \bar{\mathcal{T}}, -, slow\text{-}pending/accepted/stable, \bar{\mathcal{B}_2} \geq \bar{\mathcal{B}} \rangle \in H_j$\\

And it won't be decided at $\bar{\mathcal{T}}$ because as we saw above: \\

$\forall c,\mathcal{T},\mathcal{P}red,\mathcal{B} : \textsc{FastDecision}[c,\mathcal{T},\mathcal{P}red,\mathcal{B}] \Rightarrow$ \\
$\forall \bar{c} \notin \mathcal{P}red : \bar{c} \sim c,  \exists \mathcal{CQ} \in ClassicQuorums : \forall p_j \in \mathcal{CQ}, (\exists \langle \bar{c},\bar{\mathcal{T}}, -,-,-,- \rangle \in H_j \land \bar{\mathcal{T}} < \mathcal{T})$

\begin{center}
$\Downarrow$\\
$(\exists \langle \bar{c}, \bar{\mathcal{T}}, -, fast\text{-}pending, -, -,-\rangle \in H_j \lor (\textsc{Wait}(\bar{c}, \bar{\mathcal{T}}, p_j) \lor \exists \langle \bar{c}, \bar{\mathcal{T}}, -, rejected, -, - \rangle \in H_j))$

$\land$

$\nexists \langle \bar{c}, \bar{\mathcal{T}}, -, slow\text{-}pending, -, -, - \rangle \in H_j$

$\land$

$\nexists \langle \bar{c}, \bar{\mathcal{T}}, -, accepted, -, -, - \rangle \in H_j$

$\land$

$\nexists \langle \bar{c}, \bar{\mathcal{T}}, -, stable, -, -,- \rangle \in H_j$
\end{center}
~
\end{proof}

\begin{lemma}
\label{lm2}
$\forall c,\bar{c}, (\textsc{FastDecision}[c,\mathcal{T},\mathcal{P}red,\mathcal{B}] \land \textsc{FastDecision}[\bar{c},\bar{\mathcal{T}},\overline{\mathcal{P}red},\overline{\mathcal{B}}]  \land \bar{\mathcal{T}}<\mathcal{T} \land c\sim\bar{c} \Rightarrow \bar{c}\in \mathcal{P}red)$.
\end{lemma}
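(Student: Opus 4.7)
The plan is to argue by contradiction: suppose \textsc{FastDecision}$[c,\mathcal{T},\mathcal{P}red,\mathcal{B}]$ and \textsc{FastDecision}$[\bar{c},\bar{\mathcal{T}},\overline{\mathcal{P}red},\overline{\mathcal{B}}]$ hold with $\bar{\mathcal{T}}<\mathcal{T}$, $c\sim\bar{c}$, and $\bar{c}\notin\mathcal{P}red$. Unfolding the definition of \textsc{FastDecision} (transition \textsc{FastDecision} in Figure~\ref{fig:pseudocode}), there exist fast quorums $\mathcal{FQ}_c$ and $\mathcal{FQ}_{\bar{c}}$ that replied $\mathcal{OK}$ to the \textsc{FastPropose} for $c$ and $\bar{c}$, respectively. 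Since each fast quorum has size at least $\lceil 3N/4\rceil$, we have $|\mathcal{FQ}_c\cap \mathcal{FQ}_{\bar{c}}|\ge 2\lceil 3N/4\rceil-N\ge \lceil N/2\rceil\ge 1$, so there is at least one node $p_j$ in the intersection. Because $\bar{c}\notin\mathcal{P}red$ and $\mathcal{P}red$ is the union of the per-acceptor predecessor sets in $\mathcal{FQ}_c$, the local set $\mathcal{P}red_j^c$ that $p_j$ returned for $c$ must also satisfy $\bar{c}\notin \mathcal{P}red_j^c$.

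I would then split on the order in which $p_j$ processed the two \textsc{FastPropose} messages. In the first case, $p_j$ handles $\bar{c}$ before $c$; so at the moment $c$ is processed, the tuple $\langle\bar{c},\bar{\mathcal{T}},\ldots\rangle$ is already in $\mathcal{H}_j$. The call \textsc{ComputePredecessors}$(c,\mathcal{T},null)$ on line~P13 (expanded by lines~1--3 of Figure~\ref{fig:aux-functions}) then scans $\mathcal{H}_j$ and, because $\bar{c}\sim c$ and $\bar{\mathcal{T}}<\mathcal{T}$, inserts $\bar{c}$ into $\mathcal{P}red_j^c$. This contradicts $\bar{c}\notin \mathcal{P}red_j^c$.

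In the second case, $p_j$ handles $c$ first, storing $c$ as $fast\text{-}pending$ with local predecessor set $\mathcal{P}red_j^c\not\ni \bar{c}$, and later handles $\bar{c}$. The \textsc{Wait}$(\bar{c},\bar{\mathcal{T}})$ call (line~P15) sees the conflicting tuple for $c$ with $\mathcal{T}>\bar{\mathcal{T}}$ and $\bar{c}\notin \mathcal{P}red_j^c$, hence blocks until $c$ reaches status $accepted$ or $stable$ in $\mathcal{H}_j$. Since $c$ is fast-decided, the corresponding \textsc{Stable} message (line~S1) is eventually delivered to $p_j$; the \textsc{Stable} carries the final $\mathcal{P}red$, which by hypothesis still omits $\bar{c}$. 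The wait therefore resumes with an $accepted/stable$ witness that still has $\bar{c}\notin$ its predecessors, so \textsc{Wait} returns $\mathcal{NACK}$ and $p_j$'s \textsc{FastProposeR} for $\bar{c}$ is marked $\mathcal{NACK}$. This contradicts $p_j\in\mathcal{FQ}_{\bar{c}}$, which requires every reply in the quorum to be $\mathcal{OK}$ (line~P5).

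The main obstacle I anticipate is the interaction with the recovery procedure of Figure~\ref{fig:pseudocode-recovery}: a fast decision for $c$ can in principle be re-derived by a replacement leader at a higher ballot, in which case predecessors are computed from a $\mathcal{W}hitelist$ rather than from the entire local history. To close this, I would invoke the invariant encoded on lines~21--24 of Figure~\ref{fig:pseudocode-recovery}, together with the intersection bound $|\mathcal{FQ}\cap\mathcal{CQ}|\ge \lfloor\mathcal{CQ}/2\rfloor+1$: if $c$ was fast-decided at some ballot with $\bar{c}\notin \mathcal{P}red$, then no successor leader can compute a $\mathcal{W}hitelist$ that adds $\bar{c}$, and symmetrically for $\bar{c}$; so the argument of the previous two paragraphs carries over with $\mathcal{W}hitelist$-aware versions of \textsc{ComputePredecessors} and \textsc{Wait}. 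Once this reduction to the base ballot is established, the case analysis on message-processing order is essentially unchanged.
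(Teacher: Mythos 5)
Your proposal is correct and follows essentially the same route as the paper's proof: argue by contradiction, intersect the two fast quorums, and show that at any common acceptor either \textsc{ComputePredecessors} would have inserted $\bar{c}$ into $c$'s local predecessor set or the \textsc{Wait} condition would have forced a $\mathcal{NACK}$ on $\bar{c}$, with the recovery/whitelist case deferred to the $\left\lfloor\frac{\mathcal{CQ}}{2}\right\rfloor+1$ intersection bound exactly as the paper does. Your explicit case split on the order in which the intersection node processes the two proposals is a cleaner rendering of what the paper encodes in its quantified formulas, and your whitelist paragraph, though only sketched (and phrased around a whitelist wrongly \emph{adding} $\bar{c}$ when the actual danger is one that \emph{omits} it while suppressing the $fast$-$pending$ entry), lands on the same reduction to a base ballot with null whitelist that the paper uses.
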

\begin{proof}

$\textsc{FastDecision}[c,\mathcal{T},\mathcal{P}red,\mathcal{B}] \land \textsc{FastDecision}[\bar{c},\bar{\mathcal{T}},\bar{\mathcal{P}}red,\bar{\mathcal{B}}] \land \bar{\mathcal{T}} < \mathcal{T} \land \bar{c} \notin \mathcal{P}red$\\
$\exists \mathcal{FQ} \in FastQuorums : \forall p_j \in \mathcal{FQ}, \exists \langle \bar{c}, \bar{\mathcal{T}}, \overline{\mathcal{P}red_j}, fast\text{-}pending/stable, \bar{\mathcal{B}}, - \rangle \in H_j \land$ \\
$\neg \textsc{Wait}(\bar{c}, \bar{\mathcal{T}}, p_j) \land \nexists \langle \bar{c}, \bar{\mathcal{T}}, \overline{\mathcal{P}red}, rejected/accepted/slow\text{-}pending, \bar{\mathcal{B}}, - \rangle \in H_j $
\\



$\Rightarrow \exists \mathcal{FQ}_1, \mathcal{FQ}_2 \in FastQuorums : \forall p_j \in \mathcal{FQ}_1 \cap \mathcal{FQ}_2 : \exists \langle c, \mathcal{T}, \mathcal{P}red_j, fast\text{-}pending/stable, \mathcal{B}, - \rangle \in H_j \land \bar{c} \notin \mathcal{P}red_j \land \exists \langle \bar{c}, \bar{\mathcal{T}}, \overline{\mathcal{P}red_j}, fast\text{-}pending/stable, \bar{\mathcal{B}}, - \rangle \in H_j \land \neg \textsc{Wait} (\bar{c}, \bar{\mathcal{T}}, p_j)$\\

only if $\bar{c}$ was not in some whitelist for $c$ due to the intersections of $FastQuorums$\\

$\exists \mathcal{B'} \leq \mathcal{B}, \exists \mathcal{CQ} \in ClassicQuorums : \forall p_k \in \mathcal{CQ} : \exists \langle c, \mathcal{T}, \mathcal{P}red_k, fast\text{-}pending, \mathcal{B'}, \bot \rangle \in H_k \land\ \exists \textsc{Maj} \subseteq \mathcal{CQ} : |\textsc{Maj}| = \lfloor \frac{|\mathcal{CQ}|}{2} \rfloor + 1 \land$ \\
$\forall p_h \in \textsc{MAJ}, ( \exists \langle c, \mathcal{T}, \mathcal{P}red_h, fast\text{-}pending, \mathcal{B'}, \bot \rangle \in H_h \land \bar{c} \notin \mathcal{P}red_h \land (\nexists \langle \bar{c}, \bar{\mathcal{T}}, -, -, -, - \rangle \in H_h \lor \textsc{Wait}(\bar{c}, \bar{\mathcal{T}}, p_h) \lor \exists \langle \bar{c}, \bar{\mathcal{T}}, -, rejected, -, - \rangle \in H_h)$\\

\begin{center}

$\forall \mathcal{FQ} \in FastQuorums, \forall \mathcal{CQ} \in ClassicQuorums, | \mathcal{FQ} \cap \mathcal{CQ} | \geq \lfloor \frac{|\mathcal{CQ}|}{2} \rfloor + 1$

OR

$\forall \mathcal{FQ}_1, \mathcal{FQ}_2 \in FastQuorums, \forall \mathcal{CQ} \in ClassicQuorums, \mathcal{FQ}_1 \cap \mathcal{FQ}_2 \cap \mathcal{CQ} \neq \emptyset $

$\Downarrow$ \\

$\forall \mathcal{FQ} \in FastQuorums, \exists p_s \in \mathcal{FQ} : \nexists \langle \bar{c}, \bar{\mathcal{T}}, -,-,-,- \rangle \in H_h \lor \textsc{Wait}(\bar{c}, \bar{\mathcal{T}}, p_s) \lor \exists \langle \bar{c}, \bar{\mathcal{T}}, -,rejected,-,- \rangle \in H_h$

$\Rightarrow \nexists \overline{\mathcal{P}red}, \bar{\mathcal{B}} : \textsc{FastDecision}(\bar{c}, \bar{\mathcal{T}}, \overline{\mathcal{P}red}, \bar{\mathcal{B}})$

\end{center}
~
\end{proof}

\begin{lemma}
\label{lm3}
$\forall c,\bar{c}, (\textsc{SlowDecisionFromProposal}[c,\mathcal{T},\mathcal{P}red,\mathcal{B}] \land \textsc{FastDecision}[\bar{c},\bar{\mathcal{T}},\overline{\mathcal{P}red},\overline{\mathcal{B}}]  \land \bar{\mathcal{T}}<\mathcal{T} \land c\sim\bar{c} \Rightarrow \bar{c}\in \mathcal{P}red)$.
\end{lemma}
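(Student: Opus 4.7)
The plan is to proceed by contradiction: suppose $\bar{c}\notin\mathcal{P}red$. Unfold the hypothesis $\textsc{SlowDecisionFromProposal}[c,\mathcal{T},\mathcal{P}red,\mathcal{B}]$ into the sequence of phases it forces: a preceding \textit{fast proposal phase} for $c$ at timestamp $\mathcal{T}$ that collected non-rejecting \textsc{FastProposeR} messages from a classic quorum $\mathcal{CQ}_1$ whose per-node predecessor sets unioned yield $\mathcal{P}red$; a \textit{slow proposal phase} that was accepted by a further classic quorum $\mathcal{CQ}_2$ with timestamp $\mathcal{T}$ and set $\mathcal{P}red$; and a final \textit{stable phase} broadcasting $(c,\mathcal{T},\mathcal{P}red)$. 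Dually, $\textsc{FastDecision}[\bar{c},\bar{\mathcal{T}},\overline{\mathcal{P}red},\overline{\mathcal{B}}]$ furnishes a fast quorum $\mathcal{FQ}$ whose members sent $\mathcal{OK}$ for $\bar{c}$ at $\bar{\mathcal{T}}$, i.e., each such node passed the local \textsc{Wait}$(\bar{c},\bar{\mathcal{T}})$ check.

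Next I would invoke quorum intersection: since $|\mathcal{FQ}|+|\mathcal{CQ}_1|>N$, there exists $p_j\in\mathcal{FQ}\cap\mathcal{CQ}_1$. I would then split according to the order in which $p_j$ processed the two \textsc{FastPropose} messages. In the first case, the entry $\langle\bar{c},\bar{\mathcal{T}},-,-,-,-\rangle$ appears in $\mathcal{H}_{p_j}$ before $p_j$ processes \textsc{FastPropose} for $c$; then the call \textsc{ComputePredecessors}$(c,\mathcal{T},null)$ inside the fast proposal phase for $c$ captures $\bar{c}$ because $\bar{c}\sim c$ and $\bar{\mathcal{T}}<\mathcal{T}$, so $\bar{c}\in\mathcal{P}red_j\subseteq\mathcal{P}red$, contradicting the assumption. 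In the second case, \textsc{FastPropose} for $c$ is processed first, so $\mathcal{H}_{p_j}$ holds $\langle c,\mathcal{T},\mathcal{P}red_j,fast\text{-}pending,-,-\rangle$ with $\bar{c}\notin\mathcal{P}red_j$ and $\bar{c}$ is absent; when \textsc{FastPropose} for $\bar{c}$ later arrives, \textsc{Wait}$(\bar{c},\bar{\mathcal{T}})$ detects $c$ as a conflicting command with $\bar{\mathcal{T}}<\mathcal{T}$ and $\bar{c}\notin c.\mathcal{P}red$, and therefore blocks until $c$ reaches status $accepted$ or $stable$ on $p_j$. Because $c$ takes the slow-proposal route, its status at $p_j$ eventually transitions through $slow\text{-}pending$ to $stable$ with the pair $(\mathcal{T},\mathcal{P})$ delivered by the \textsc{Stable} message; by the contradiction assumption $\bar{c}\notin\mathcal{P}red$, so at that moment \textsc{Wait} finds a conflicting $stable$ command violating the pred-set condition and returns $\mathcal{NACK}$. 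This contradicts $p_j\in\mathcal{FQ}$.

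The main obstacle I foresee is discharging the scenarios where one or both of the two decisions were produced by a \textit{recovery} execution rather than by the normal leader-initiated flow. In particular, if the fast proposal phase for $c$ was launched by the branch of Figure~\ref{fig:pseudocode-recovery} that supplies a non-$null$ $\mathcal{W}hitelist$, then the first sub-case above weakens, since \textsc{ComputePredecessors} now requires $\bar{c}$ to carry status in $\{slow\text{-}pending,accepted,stable\}$ (or to sit in the whitelist) to be admitted into $\mathcal{P}red_j$; symmetrically, a recovery-triggered entry into the \textit{slow proposal phase} at line~15 of Figure~\ref{fig:pseudocode-recovery} inherits $(\mathcal{T},\mathcal{P}red)$ from a $slow\text{-}pending$ tuple collected during \textsc{RecoveryPhase}. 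I would handle this by an induction on the ballot number $\mathcal{B}$ of the decision: at ballot $0$ the execution is genuinely leader-initiated and the argument above applies verbatim; at larger $\mathcal{B}$ the \textsc{RecoverySet} reflects decisions taken at smaller ballots, and I would use the corresponding lemmas (Lemmas~\ref{lm1}--\ref{lm2} and companions) to show that the whitelist/$slow\text{-}pending$ data retrieved during recovery already enforces $\bar{c}\in\mathcal{P}red$. Once this bookkeeping is in place, the two sub-cases of the main quorum-intersection argument close and yield the contradiction.
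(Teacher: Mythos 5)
Your proposal is correct and follows essentially the same route as the paper: both argue by contradiction, unfold $\textsc{SlowDecisionFromProposal}[c,\mathcal{T},\mathcal{P}red,\mathcal{B}]$ and $\textsc{FastDecision}[\bar{c},\bar{\mathcal{T}},\overline{\mathcal{P}red},\overline{\mathcal{B}}]$ into quorum-level conditions on node histories, and use the intersection of $\bar{c}$'s fast quorum with the classic quorum(s) of $c$'s proposal phases to exhibit a node that must have either not seen $\bar{c}$, blocked in \textsc{Wait}, or recorded $\bar{c}$ as $rejected$ --- hence answered $\mathcal{NACK}$ for $\bar{c}$, contradicting the fast decision. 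The only organizational difference is the recovery/whitelist bookkeeping, which the paper discharges inline via the intersection properties $|\mathcal{FQ}\cap\mathcal{CQ}|\geq\left\lfloor\frac{|\mathcal{CQ}|}{2}\right\rfloor+1$ and $\mathcal{FQ}_1\cap\mathcal{FQ}_2\cap\mathcal{CQ}\neq\emptyset$ rather than by your proposed induction on ballots; this does not change the substance.
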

\begin{proof}

Assume that,

$\textsc{SlowDecisionFromProposal}[c,\mathcal{T}, \mathcal{P}red, \mathcal{B}] \land \textsc{FastDecision}[\bar{c}, \bar{\mathcal{T}}, \bar{\mathcal{P}}red, \bar{\mathcal{B}}] \land \bar{\mathcal{T}} < \mathcal{T} \land \bar{c} \notin \mathcal{P}red \land c \sim \bar{c}$\\


$\textsc{SlowDecisionFromProposal}[c, \mathcal{T}, \mathcal{P}red, \mathcal{B}] \Rightarrow$

$\exists \mathcal{CQ} \in ClassicQuorums,\forall p_j : \exists \langle c, \mathcal{T}, \mathcal{P}red, slow\text{-}pending/stable, \mathcal{B}, \bot \rangle \in H_j \land \bar{c} \notin \mathcal{P}red_j$\\

$\textsc{FastDecision}[\bar{c}, \bar{\mathcal{T}}, \overline{\mathcal{P}red}, \bar{\mathcal{B}}] \Rightarrow$

$\exists \mathcal{FQ} \in FastQuorums : \forall p_j \in \mathcal{FQ} : \exists \langle \bar{c}, \bar{\mathcal{T}}, \overline{\mathcal{P}red}, fast\text{-}pending/stable, \bar{\mathcal{B}}, - \rangle \in H_j $

$\land \neg \textsc{Wait}(\bar{c}, \bar{T}, p_j) \land \nexists \langle \bar{c}, \bar{\mathcal{T}}, \overline{\mathcal{P}red}, rejected, \bar{\mathcal{B}}, - \rangle \in H_j \land \nexists \langle \bar{c}, \bar{\mathcal{T}}, \overline{\mathcal{P}red_j}, accepted, \bar{\mathcal{B}}, - \rangle \in H_j $

$\land \nexists \langle \bar{c}, \bar{\mathcal{T}}, \overline{\mathcal{P}red_j}, slow\text{-}pending, \bar{\mathcal{B}}, - \rangle \in H_j $\\

$\textsc{SlowDecision}(c, \mathcal{T}, \mathcal{P}red, \mathcal{B}) \Rightarrow \exists \mathcal{B'} \leq \mathcal{B} : \exists \mathcal{CQ} \in ClassicQuorums, \forall  p_k \in \mathcal{CQ} : $

$($

$\exists \langle c, \mathcal{T}, \mathcal{P}red_k, fast\text{-}pending, \mathcal{B'}, - \rangle \in H_k \land \bar{c} \notin \mathcal{P}red_k$

$\land~\nexists \langle \bar{c}, \bar{\mathcal{T}}, -, -, -, - \rangle \in H_k \lor \textsc{Wait}(\bar{c}, \bar{\mathcal{T}}, p_k) \lor \exists \langle \bar{c}, \bar{\mathcal{T}}, -, rejected, -, - \rangle \in H_k$

$\lor~\exists \langle \bar{c}, \bar{\mathcal{T}}, \overline{\mathcal{P}red_k}, fast\text{-}pending, \bar{\mathcal{B}}, - \rangle$

$\land~\exists \mathcal{B''} \leq \mathcal{B'}, \exists \mathcal{CQ} \in ClassicQuorums:\forall p_h \in \mathcal{CQ''} : \exists \langle c, \mathcal{T}, \mathcal{P}red_h, fast\text{-}pending, \mathcal{B''}, \bot \rangle \in H_h$

$\land~\exists \textsc{Maj} \subseteq \mathcal{CQ''} : |\textsc{Maj}| = \lfloor \frac{|\mathcal{CQ''}|}{2} \rfloor + 1~\land \forall p_s \in \textsc{Maj} $

{
\myindent$($ 

\myindent$\exists \langle c, \mathcal{T}, \mathcal{P}red_s, fast\text{-}pending, \mathcal{B''}, \bot \rangle \in H_s \land \bar{c} \notin \mathcal{P}red_s$

\myindent$\land (\nexists \langle \bar{c}, \bar{\mathcal{T}}, -, -, -, - \rangle \in H_s \lor \textsc{Wait}(\bar{c}, \bar{\mathcal{T}}, p_h) \lor \exists \langle \bar{c}, \bar{\mathcal{T}}, -, rejected, -, - \rangle \in H_s)$

\myindent$)$
}

$)$
\begin{center}
$\Downarrow$

Since $\forall \mathcal{CQ} \in ClassicQuorum, \forall \mathcal{FQ} \in FastQuorums, \mathcal{CQ} \cap \mathcal{FQ} \neq \emptyset$\\

$\land\ \forall \mathcal{FQ}_1, \mathcal{FQ}_2 \in FastQuorums, \forall \mathcal{CQ} \in ClassicQuorums \lor \mathcal{FQ}_1 \cap (\mathcal{FQ}_2 \cap \mathcal{CQ}) \neq \emptyset \land |\mathcal{FQ}_2 \cap \mathcal{CQ}| \geq \lfloor \frac{|\mathcal{CQ}|}{2} \rfloor + 1$

$\Downarrow$

$\forall \mathcal{FQ} \in FastQuorums, \exists p_i \in \mathcal{FQ}, \nexists \langle \bar{c}, \bar{\mathcal{T}}, -,-,-,-, \rangle \in H_i \lor \textsc{Wait} (\bar{c}, \bar{\mathcal{T}}, p_i) \lor \exists \langle \bar{c}, \bar{\mathcal{T}}, -, rejected, -, - \rangle \in H_i$

$\Downarrow$

$\neg \textsc{FastDecision}[\bar{c}, \bar{\mathcal{T}}, \overline{\mathcal{P}red}, \bar{\mathcal{B}}] $ 

This is a Contradiction.

\end{center}
~
\end{proof}

\begin{lemma}
\label{lm4}
$\forall c,\bar{c}, (\textsc{SlowDecisionFromRetry}[c,\mathcal{T},\mathcal{P}red,\mathcal{B}] \land \textsc{FastDecision}[\bar{c},\bar{\mathcal{T}},\overline{\mathcal{P}red},\overline{\mathcal{B}}]  \land \bar{\mathcal{T}}<\mathcal{T} \land c\sim\bar{c} \Rightarrow \bar{c}\in \mathcal{P}red)$.
\end{lemma}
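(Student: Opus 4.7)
The proof will be by contradiction, mirroring the quorum-intersection technique used in Lemmas~\ref{lm2} and~\ref{lm3}. Assume all four premises but $\bar c \notin \mathcal{P}red$. From $\textsc{SlowDecisionFromRetry}[c,\mathcal{T},\mathcal{P}red,\mathcal{B}]$ I extract a classic quorum $\mathcal{CQ}$ whose members each handled a \textsc{Retry} for $c$ at timestamp $\mathcal{T}$ with initial predecessors $\mathcal{P}red_{\text{init}}$---marking $c$ as $accepted$, invoking $\textsc{ComputePredecessors}(c,\mathcal{T},\mathit{null})$ to produce a local $\mathcal{P}red_j$, and returning it in \textsc{RetryR}. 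Since the final $\mathcal{P}red$ is the union of $\mathcal{P}red_{\text{init}}$ with all the $\mathcal{P}red_j$, the hypothesis $\bar c\notin\mathcal{P}red$ gives $\bar c\notin\mathcal{P}red_{\text{init}}$ and, for every $p_j\in\mathcal{CQ}$, the absence at the moment of the call of any tuple $\langle\bar c,\bar{\mathcal T}',\ldots\rangle\in\mathcal{H}_j$ with $\bar{\mathcal T}'<\mathcal{T}$. From $\textsc{FastDecision}[\bar c,\bar{\mathcal T},\overline{\mathcal{P}red},\overline{\mathcal B}]$ I extract a fast quorum $\mathcal{FQ}$ whose members all replied $\mathcal{OK}$ to \textsc{FastPropose}$(\bar c,\bar{\mathcal T})$, so in particular $\textsc{Wait}(\bar c,\bar{\mathcal T},\cdot)$ returned $\mathcal{OK}$ at each of them. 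Because $|\mathcal{CQ}|+|\mathcal{FQ}|>N$, I pick $p\in\mathcal{CQ}\cap\mathcal{FQ}$.

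\textbf{Case split at $p$.} \emph{(a)} If $p$ processed \textsc{FastPropose}$(\bar c,\bar{\mathcal T})$ before \textsc{Retry}$(c,\mathcal{T})$, then at the moment $p$ computed $\mathcal{P}red_p$ during the retry, the tuple $\langle\bar c,\bar{\mathcal T},\ldots\rangle$ was in $\mathcal{H}_p$; since $\bar c\sim c$ and $\bar{\mathcal T}<\mathcal{T}$, \textsc{ComputePredecessors} with a null whitelist forces $\bar c$ into $\mathcal{P}red_p$, contradicting $\bar c\notin\mathcal{P}red$. \emph{(b)} If instead $p$ processed the \textsc{Retry} first, then by the time \textsc{FastPropose}$(\bar c,\bar{\mathcal T})$ is evaluated, $\mathcal{H}_p$ already contains $\langle c,\mathcal{T},\mathcal{P}red_{\text{init}},accepted,\cdot,\cdot\rangle$, a witness with $c\sim\bar c$, $\bar{\mathcal T}<\mathcal{T}$, and $\bar c\notin\mathcal{P}red_{\text{init}}$ in the $accepted$ state; hence $\textsc{Wait}$ must return $\mathcal{NACK}$ and $p$ could not have been in $\mathcal{FQ}$. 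Either case contradicts an assumption.

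\textbf{Main obstacle.} The delicate point is case~(b), where I must rule out any chance that $p$ could have sent $\mathcal{OK}$ because $\bar c$ slipped into $\mathcal{P}red_{\text{init}}$. This is discharged by the inclusion $\mathcal{P}red_{\text{init}}\subseteq\mathcal{P}red$ enforced by the retry-phase aggregation rule, so $\bar c\notin\mathcal{P}red$ cleanly implies $\bar c\notin\mathcal{P}red_{\text{init}}$. Beyond that, the argument is insensitive to whether the retry was triggered from the fast or the slow proposal phase, and to whether it was executed by the original leader or by a recovery leader at a higher ballot, since only the post-\textsc{Retry} content of $\mathcal{H}_p$ drives both \textsc{ComputePredecessors} in case~(a) and \textsc{Wait} in case~(b).
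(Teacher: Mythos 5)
Your proposal is correct and follows essentially the same route as the paper: derive from $\textsc{SlowDecisionFromRetry}$ a classic quorum that accepted $c$ at $\mathcal{T}$ without $\bar{c}$ among its predecessors, intersect it with the fast quorum witnessing $\textsc{FastDecision}[\bar{c},\bar{\mathcal{T}},\cdot,\cdot]$, and obtain a contradiction. Your explicit case split on whether the intersection node processed $\textsc{FastPropose}(\bar{c},\bar{\mathcal{T}})$ before or after the $\textsc{Retry}$ for $c$ simply unpacks the disjunction ($\bar{c}$ unknown at $\bar{\mathcal{T}}$, or $\bar{c}$ rejected) that the paper's terser proof asserts in one step.
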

\begin{proof}

Assume that 

$\textsc{SlowDecisionFromRetry}[c, \mathcal{T}, \mathcal{P}red, \mathcal{B}]~\land~\textsc{FastDecision}[\bar{c}, \bar{\mathcal{T}}, \overline{\mathcal{P}red}, \bar{\mathcal{B}}] \land \bar{\mathcal{T}} < \mathcal{T} \land \bar{c} \notin \mathcal{P}red $\\

Thus,

$\textsc{SlowDecisionFromRetry}[c, \mathcal{T}, \mathcal{P}red, \mathcal{B}] \Rightarrow$

$\exists \mathcal{CQ} \in ClassicQuorums, \forall p_i \in \mathcal{CQ} : \exists \langle c, \mathcal{T}, \mathcal{P}red_i, accepted, \mathcal{B}, \bot \rangle \in H_i~\land$

$( \nexists \langle \bar{c}, \bar{\mathcal{T}}, -,-,-,- \rangle \in H_i \lor \exists \langle \bar{c}, \bar{\mathcal{T}}, -, rejected, -,- \rangle \in H_i )$


$\Rightarrow \neg \textsc{FastDecision}[\bar{c}, \bar{\mathcal{T}}, \overline{\mathcal{P}red}, \mathcal{B}]$ 

This is a contradiction.

\end{proof}

\begin{lemma}
\label{lm5}
$\forall c,\bar{c}, (\textsc{SlowDecisionFromProposal}[c,\mathcal{T},\mathcal{P}red,\mathcal{B}] \land$\\
$\textsc{SlowDecisionFromProposal}[\bar{c},\bar{\mathcal{T}},\overline{\mathcal{P}red},\overline{\mathcal{B}}]  \land \bar{\mathcal{T}}<\mathcal{T} \land c\sim\bar{c} \Rightarrow \bar{c}\in \mathcal{P}red)$
\end{lemma}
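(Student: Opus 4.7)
The plan is to proceed by contradiction and to exploit the intersection of the two classic quorums arising from the \textit{fast proposal phases} of the two slow decisions. Assume \textsc{SlowDecisionFromProposal}[$c$,$\mathcal{T}$,$\mathcal{P}red$,$\mathcal{B}$] and \textsc{SlowDecisionFromProposal}[$\bar{c}$,$\bar{\mathcal{T}}$,$\overline{\mathcal{P}red}$,$\overline{\mathcal{B}}$] with $\bar{\mathcal{T}}<\mathcal{T}$, $c\sim\bar{c}$, and $\bar{c}\notin\mathcal{P}red$. By definition of this kind of slow decision, there exist classic quorums $\mathcal{CQ}_c^{fp}$ and $\mathcal{CQ}_{\bar{c}}^{fp}$ of nodes that replied $\mathcal{OK}$ without \textsc{Wait} rejection in the fast proposal phases of $c$ and $\bar{c}$, respectively; since they are both classic quorums their intersection is non-empty, and I would pick some $p_k$ in this intersection and case-split on the relative order at $p_k$ of the two \textsc{FastPropose} events.

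In the case where $p_k$ processes \textsc{FastPropose} for $\bar{c}$ before the one for $c$, I would observe that, because the \textit{slow decision from proposal} path never goes through a \textit{retry phase}, $\bar{c}$'s timestamp at $p_k$ stays equal to $\bar{\mathcal{T}}$. Hence when $c$'s \textsc{FastPropose} arrives, $\bar{c}$ is already in $\mathcal{H}_k$ with timestamp $\bar{\mathcal{T}}<\mathcal{T}$, so \textsc{ComputePredecessors} (with $\mathcal{W}hitelist=null$) puts $\bar{c}$ into $\mathcal{P}red_k$. Because the final $\mathcal{P}red$ broadcast in the \textit{stable phase} is the union of the $\mathcal{P}red_j$ collected in both proposal phases, this forces $\bar{c}\in\mathcal{P}red$, contradicting the hypothesis.

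In the complementary case, where $p_k$ processes \textsc{FastPropose} for $c$ first, $c$ is entered into $\mathcal{H}_k$ with timestamp $\mathcal{T}$ and a local predecessors set that does not list $\bar{c}$. When \textsc{FastPropose} for $\bar{c}$ subsequently arrives, the \textsc{Wait} function at $p_k$ identifies $c$ as a blocker, since $c\sim\bar{c}$, $\bar{\mathcal{T}}<\mathcal{T}$, and $\bar{c}$ is absent from $c$'s current predecessors. Because $c$ follows the \textit{slow decision from proposal} path, its local status at $p_k$ can only evolve through $fast\text{-}pending\to slow\text{-}pending\to stable$, never touching $accepted$. Hence \textsc{Wait} remains blocked until the \textsc{Stable} message for $c$ updates the tuple at $p_k$ to $stable$, at which point the final check of \textsc{Wait} returns $\mathcal{NACK}$ and $p_k$ rejects $\bar{c}$. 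This places $p_k$ outside $\mathcal{CQ}_{\bar{c}}^{fp}$, again contradicting the choice of $p_k$.

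The main obstacle is the careful bookkeeping around \textsc{Wait} in the second case: I need to justify that $c$'s local entry at $p_k$ never acquires $\bar{c}$ among its predecessors (so the blocking conjunct $c\not\in\overline{\mathcal{P}red}$ persists through any intermediate \textsc{SlowPropose} update), and that $c$ inevitably reaches $stable$ at $p_k$ so that \textsc{Wait} cannot wait forever. Both points hinge on the \textsc{SlowDecisionFromProposal} assumption for $c$: the union rule for predecessors across the \textit{fast proposal} and \textit{slow proposal} phases preserves $\bar{c}\notin\mathcal{P}red$ from phase to phase, and the \textsc{Stable} message for $c$ is broadcast to every node once the \textit{slow proposal} quorum $\mathcal{CQ}_c^{sp}$ replies with $\mathcal{OK}$, so $p_k$ will receive it regardless of whether $p_k$ itself belongs to $\mathcal{CQ}_c^{sp}$.
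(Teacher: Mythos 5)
Your proof is correct and follows essentially the same strategy as the paper's: a contradiction obtained by intersecting two classic quorums and case-splitting on the order in which the intersection node processes the two proposals, with \textsc{ComputePredecessors} forcing $\bar{c}\in\mathcal{P}red$ in one case and the \textsc{Wait} condition forcing a $\mathcal{NACK}$ (hence exclusion from the supposed $\mathcal{OK}$ quorum) in the other. The only cosmetic difference is that the paper intersects $c$'s fast-proposal quorum with $\bar{c}$'s \emph{slow}-proposal quorum, whereas you intersect the two fast-proposal quorums; both choices are legitimate, since a \textsc{SlowDecisionFromProposal} presupposes a classic quorum of non-rejecting replies in each of those phases.
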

\begin{proof}

$\textsc{SlowDecisionFromProposal}[c, \mathcal{T}, \mathcal{P}red, \mathcal{B}] \land$\\
$\textsc{SlowDecisionFromProposal}[\bar{c}, \bar{\mathcal{T}}, \overline{\mathcal{P}red}, \bar{\mathcal{B}}] \land \bar{\mathcal{T}} < \mathcal{T} \land \bar{c} \notin \mathcal{P}red \land \bar{c} \sim c$


$\textsc{SlowDecisionFromProposal}[c, \mathcal{T}, \mathcal{P}red, \mathcal{B}] \Rightarrow$

$\exists \mathcal{B'} \leq \mathcal{B} : \exists \mathcal{CQ} \in ClassicQuorums : \forall p_j \in \mathcal{CQ}, \exists \langle c, \mathcal{T}, \mathcal{P}red, fast\text{-}pending, \mathcal{B'}, - \rangle \in H_j~\land~\bar{c} \notin \mathcal{P}red_j$\\


$\textsc{SlowDecisionFromProposal}[\bar{c}, \bar{\mathcal{T}}, \overline{\mathcal{P}red}, \bar{\mathcal{B}}] \Rightarrow$

$\exists \mathcal{B''} \leq \bar{\mathcal{B}} : \exists \mathcal{CQ} \in ClassicQuorums : \forall p_k \in \mathcal{CQ}, \exists \langle \bar{c}, \bar{\mathcal{T}}, \overline{\mathcal{P}red}, fast\text{-}pending, \mathcal{B''}, - \rangle \in H_k \Rightarrow $

$\bar{c} \notin \mathcal{P}red \Rightarrow \exists \mathcal{B'} \leq \mathcal{B}, \exists \mathcal{CQ} \in ClassicQuorums : \forall p_j \in \mathcal{CQ}, \exists \langle c, \mathcal{T}, \mathcal{P}red_j, fast\text{-}pending, \mathcal{B'}, - \rangle \in H_j \land \bar{c} \notin \mathcal{P}red_j~\land$

$($

\myindent$\exists \langle \bar{c}, \bar{\mathcal{T}}, -,-,-,- \rangle \in H_j \Rightarrow (\exists \langle \bar{c}, \bar{\mathcal{T}}, -, fast\text{-}pending, -, - \rangle \in H_j \lor \textsc{Wait}(\bar{c}, \bar{\mathcal{T}}, p_j))$

\myindent$\land \nexists \langle \bar{c} \bar{\mathcal{T}}, -, slow\text{-}pending, -, - \rangle \in H_j$

$)$

This is a contradiction if $\bar{c}$ is decided in a slow propose phase.

$\textsc{SlowDecisionFromProposal}[\bar{c}, \bar{\mathcal{T}}, \overline{\mathcal{P}red}, \bar{\mathcal{B}}] \Rightarrow$

$\forall \mathcal{CQ} \in ClassicQuorums\ \exists p_k \in \mathcal{CQ} : \exists \langle \bar{c}, \bar{\mathcal{T}}, \mathcal{P}red_k, slow\text{-}pending, \bar{\mathcal{B}}, - \rangle \in H_k$

$\land \neg \textsc{Wait}(\bar{c}, \bar{\mathcal{T}},p_k) \land \nexists \langle \bar{c}, \bar{\mathcal{T}}, -, rejected, \bar{\mathcal{B}}, - \rangle \in H_k$

\end{proof}

\begin{lemma}
\label{lm6}
$\forall c,\bar{c}, (\textsc{SlowDecisionFromProposal}[c,\mathcal{T},\mathcal{P}red,\mathcal{B}] \land $\\
$\textsc{SlowDecisionFromRetryFP}[\bar{c},\bar{\mathcal{T}},\overline{\mathcal{P}red},\overline{\mathcal{B}}]  \land \bar{\mathcal{T}}<\mathcal{T} \land c\sim\bar{c} \Rightarrow \bar{c}\in \mathcal{P}red)$.
\end{lemma}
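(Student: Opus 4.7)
I would follow the contradiction template of Lemmas~\ref{lm1}--\ref{lm5}: assume $\bar{c}\notin\mathcal{P}red$ and derive a contradiction through quorum intersection. Unfolding $\textsc{SlowDecisionFromProposal}[c,\mathcal{T},\mathcal{P}red,\mathcal{B}]$ exposes a classic quorum $\mathcal{CQ}_1$ of acceptors that, at some ballot $\mathcal{B}'\leq\mathcal{B}$, processed a \textsc{FastPropose} for $c$ at timestamp $\mathcal{T}$ and replied with individual predecessor sets $\mathcal{P}red^c_y$; since the $\mathcal{P}red$ passed to the subsequent slow proposal is their union and $\bar{c}\notin\mathcal{P}red$, we have $\bar{c}\notin\mathcal{P}red^c_y$ for every $p_y\in\mathcal{CQ}_1$. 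Unfolding $\textsc{SlowDecisionFromRetryFP}[\bar{c},\bar{\mathcal{T}},\overline{\mathcal{P}red},\overline{\mathcal{B}}]$ exposes a fast quorum $\mathcal{FQ}_{\bar{c}}$ whose replies to the initial \textsc{FastPropose} for $\bar{c}$ at some timestamp $\bar{\mathcal{T}}_0$ contained at least one $\mathcal{NACK}$, so that the ensuing \textit{retry phase} chooses $\bar{\mathcal{T}}$ \emph{strictly greater} than every suggestion carried by those $\mathcal{NACK}$s (line R1 and Section~\ref{sec:acc2}).

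Because $|\mathcal{CQ}_1|+|\mathcal{FQ}_{\bar c}|\geq\lfloor N/2\rfloor+1+\lceil 3N/4\rceil>N$, I can pick $p_y\in\mathcal{CQ}_1\cap\mathcal{FQ}_{\bar c}$ and split on the status of $\bar{c}$ in $H_y$ at the instant $p_y$ evaluated \textsc{ComputePredecessors}$(c,\mathcal{T},null)$. If $\bar{c}$ already appears in $H_y$ with stored timestamp strictly smaller than $\mathcal{T}$, lines 1--3 of Figure~\ref{fig:aux-functions} would force $\bar{c}\in\mathcal{P}red^c_y$, contradicting the choice of $p_y$. Only two possibilities remain: (I)~$\bar{c}$ is absent from $H_y$; or (II)~$\bar{c}$ is present with a stored timestamp $\geq\mathcal{T}$, which by inspection of the \textsc{Update} rules can arise only from the rejection branch (lines P17--P19), where $p_y$ stamped $\bar{c}$ with its own $\mathcal{TS}_y\geq\mathcal{T}$ after \textsc{Wait} returned $\mathcal{NACK}$, and sent $\mathcal{NACK}$ with that very suggestion to $\bar{c}$'s leader.

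In case~(I), since $p_y\in\mathcal{FQ}_{\bar c}$ it eventually processes \textsc{FastPropose} for $\bar{c}$ at $\bar{\mathcal{T}}_0$; by then $c$ is in $H_y$ at timestamp $\mathcal{T}>\bar{\mathcal{T}}_0$, so \textsc{Wait} (line P15) blocks $p_y$'s reply until $c$ reaches $stable$ (recall that a $\textsc{SlowDecisionFromProposal}$ takes $c$ from $slow\text{-}pending$ straight to $stable$, and $slow\text{-}pending$ does not satisfy the \textsc{Wait} predicate). Once $c$ is $stable$ at $p_y$ with $\bar{c}\notin\mathcal{P}red$, \textsc{Wait} returns $\mathcal{NACK}$ with suggestion $\mathcal{TS}_y\geq\mathcal{T}$. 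Hence in both (I) and (II) a $\mathcal{NACK}$ with suggestion $\geq\mathcal{T}$ is among the replies that make up $\mathcal{FQ}_{\bar c}$, so the retry rule forces $\bar{\mathcal{T}}>\mathcal{T}$, contradicting the hypothesis $\bar{\mathcal{T}}<\mathcal{T}$.

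I expect the principal difficulty to be the recovery-triggered branch in which $c$'s fast proposal is launched from a leader executing the \textsc{RecoveryPhase} with a non-null $\mathcal{W}hitelist$: then \textsc{ComputePredecessors} no longer auto-inserts a $fast\text{-}pending$ or $rejected$ $\bar{c}$ with timestamp below $\mathcal{T}$, and the first elimination step in my case analysis does not close immediately. Closing that gap will require invoking the maximality of the ballots $\mathcal{B}'$ and $\overline{\mathcal{B}}$ in the $\mathcal{R}ecoverySet$, together with the whitelist construction of lines 19--24 of Figure~\ref{fig:pseudocode-recovery}, in the same style already used for Lemma~\ref{lm2}.
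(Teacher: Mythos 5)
Your proof is correct and follows essentially the same route as the paper's: a proof by contradiction that intersects the classic quorum certifying $c$'s fast proposal (none of whose members listed $\bar{c}$ as a predecessor) with the quorum answering $\bar{c}$'s rejected fast proposal, and shows that the intersection node must contribute a $\mathcal{NACK}$ suggestion $\geq \mathcal{T}$ for $\bar{c}$, so the retry timestamp $\bar{\mathcal{T}}$ could not end up below $\mathcal{T}$. Your explicit case split on the status of $\bar{c}$ at the intersection node, and the whitelist/recovery caveat you flag at the end, correspond to the paper's more telegraphic enumeration of the same cases and its appeal to the $\left\lfloor\frac{\mathcal{CQ}}{2}\right\rfloor+1$ quorum-intersection property.
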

\begin{proof}

Assume that,

$\textsc{SlowDecisionFromProposal}[c, \mathcal{T}, \mathcal{P}red, \mathcal{B}] \land$\\
$\textsc{SlowDecisionFromRetryFP}[\bar{c},\bar{\mathcal{T}},\overline{\mathcal{P}red},\bar{\mathcal{B}}] \land \bar{\mathcal{T}} < \mathcal{T} \land \bar{c} \notin \mathcal{P}red \land \bar{c} \sim c$\\


$\textsc{SlowDecisionFromProposal}[\bar{c}, \bar{\mathcal{T}}, \overline{\mathcal{P}red}, \bar{\mathcal{B}} \Rightarrow$

$\exists \mathcal{CQ} \in ClassicQuorums : \forall p_j \in \mathcal{CQ}, \exists \langle \bar{c}, \bar{\mathcal{T}}, \overline{\mathcal{P}red_j}, accepted, \bar{\mathcal{B}}, - \rangle \in H_j \land \exists \mathcal{B''} \leq \bar{\mathcal{B}},$

$\mathcal{T}_2 < \bar{\mathcal{T}} : \exists \mathcal{FQ} \in FastQuorums : \forall p_k \in \mathcal{FQ} : \exists \langle \bar{c}, \mathcal{T}_2, -, fast\text{-}pending/rejected, -, - \rangle \in H_k $

~\\
~\\
$\textsc{SlowDecisionFromRetryFP}[c, \mathcal{T}, \mathcal{P}red, \mathcal{B}]  \Rightarrow$

$\exists \mathcal{B'} \leq \mathcal{B} : \exists \mathcal{CQ} \in ClassicQuorums, \forall p_k \in \mathcal{CQ}, \exists \langle c, \mathcal{T}, \mathcal{P}red_k, fast\text{-}prending, \mathcal{B'}, - \rangle \in H_k \land \bar{c} \notin \mathcal{P}red_k \land$

$($

$\exists \langle \bar{c}, \mathcal{T}_3 < \mathcal{T}, -, -, -, - \rangle \in H_k \Rightarrow \textsc{Wait}(\bar{c}, \mathcal{T}_3, p_k) \lor \exists \langle \bar{c}, \mathcal{T}_3 < \mathcal{T}, -, rejected, -, - \rangle \in H_k \lor$

\myindent$($

\myindent$\exists \langle \bar{c}, \mathcal{T}_3 < \mathcal{T}, -, fast\text{-}pending, \bar{\mathcal{B}}, - \rangle \land$

\myindent$\exists \mathcal{B''} \leq \mathcal{B'}, \exists \mathcal{CQ''} \in ClassicQuorums : \forall p_k \in \mathcal{CQ''}, \exists \langle c, \mathcal{T}, \mathcal{P}red_h, fast\text{-}pending, \mathcal{B''}, \bot \rangle \in H_h$

\myindent$\land~\exists \mathcal{FQ} \in FastQuorums : \forall p_s \in \mathcal{FQ} \cap \mathcal{CQ''}, \exists \langle c, \mathcal{T}, \mathcal{P}red_s, fast\text{-}pending, \mathcal{B''}, \bot \rangle \in H_s$

\myindent$\land~\bar{c} \notin \mathcal{P}red_s \land ( \exists \langle \bar{c}, \mathcal{T}_3 < \mathcal{T}, -,-,-,-, \rangle \in H_s \Rightarrow \textsc{Wait}(\bar{c}, \mathcal{T}_3, p_s) \lor \exists \langle \bar{c}, \mathcal{T}_3 < \mathcal{T}, -, rejected \rangle)$

\myindent$)$

$)$\\

Since $\mathcal{FQ} \cap \mathcal{CQ} \neq \emptyset$ and $\mathcal{FQ}_1 \cap \mathcal{FQ}_2 \cap \mathcal{CQ} \neq \emptyset $

$\forall \mathcal{FQ} \in FastQuorums, \exists p_i \in \mathcal{FQ} : \exists \langle \bar{c}, \mathcal{T}_3 < \mathcal{T}, -,-,-,- \rangle \in H_i \Rightarrow \textsc{Wait}(\bar{c}, \mathcal{T}, p_i) \lor \exists \langle \bar{c}, \mathcal{T}_3, -, rejected, -, - \rangle \in H_i \Rightarrow$

$\nexists \mathcal{CQ} \in ClassicQuorum : \forall p_i \in \mathcal{CQ} : \exists \langle \bar{c}, \bar{\mathcal{T}} < \mathcal{T}, -, accepted, -, - \rangle \in H_i \Rightarrow$

$\neg \textsc{SlowDecisionFromRetryFP}[\bar{c}, \bar{\mathcal{T}}, \bar{\mathcal{P}red}, \bar{\mathcal{B}}]$ in retry phase started in fast propose phase.

This is a contradiction.

\end{proof}

\begin{lemma}
\label{lm7}
$\forall c,\bar{c}, (\textsc{SlowDecisionFromProposal}[c,\mathcal{T},\mathcal{P}red,\mathcal{B}] \land$ \\
$\textsc{SlowDecisionFromRetrySP}[\bar{c},\bar{\mathcal{T}},\overline{\mathcal{P}red},\overline{\mathcal{B}}]  \land \bar{\mathcal{T}}<\mathcal{T} \land c\sim\bar{c} \Rightarrow \bar{c}\in \mathcal{P}red)$.
\end{lemma}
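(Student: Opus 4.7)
The plan is to mirror the contradiction argument of Lemma~\ref{lm6}, replacing the fast proposal phase that triggered the retry for $\bar{c}$ with a slow proposal phase. Assume for contradiction that the premises hold with $\bar{c} \notin \mathcal{P}red$. The goal is to exhibit a single acceptor $p_i$ that simultaneously carries enough information about $\bar{c}$ to have either rejected $c$'s timestamp $\mathcal{T}$ or included $\bar{c}$ into $c$'s fast-propose predecessor set, contradicting what SlowDecisionFromProposal$[c,\mathcal{T},\mathcal{P}red,\mathcal{B}]$ forces.

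First I would unfold SlowDecisionFromRetrySP$[\bar{c},\bar{\mathcal{T}},\overline{\mathcal{P}red},\bar{\mathcal{B}}]$. This guarantees a classic quorum $\mathcal{CQ}_R$ recording $\bar{c}$ as $accepted$ at $\bar{\mathcal{T}}$ with ballot $\bar{\mathcal{B}}$, and, strictly preceding it, a slow proposal phase at some ballot $\bar{\mathcal{B}}'' \le \bar{\mathcal{B}}$ during which a classic quorum $\mathcal{CQ}_S$ recorded $\bar{c}$ as $slow\text{-}pending$ with some timestamp $\mathcal{T}_s$ and predecessor set not containing $c$ (otherwise $c$ would already be in $\overline{\mathcal{P}red}$, and the symmetric closure handled in the main Consistency proof would forbid our configuration). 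By the protocol, $\mathcal{T}_s \le \bar{\mathcal{T}}$, and crucially the slow-propose message carries $\bar{c}$'s current predecessor set, which is visible to every node in $\mathcal{CQ}_S$. Next I would unfold SlowDecisionFromProposal$[c,\mathcal{T},\mathcal{P}red,\mathcal{B}]$: this yields a ballot $\mathcal{B}' \le \mathcal{B}$ and a classic quorum $\mathcal{CQ}_c^F$ each of whose nodes $p_j$ has $\langle c,\mathcal{T},\mathcal{P}red_j,fast\text{-}pending,\mathcal{B}',-\rangle \in \mathcal{H}_j$ with $\bar{c}\notin \mathcal{P}red_j$, and whose Wait condition for $c$ did not cause rejection.

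Second, I would apply quorum intersection $\mathcal{CQ}_c^F \cap \mathcal{CQ}_S \neq \emptyset$ and pick an acceptor $p_i$ in the intersection. On $p_i$ we have both $c$ at $\mathcal{T}$ as $fast\text{-}pending$ (with $\bar{c}\notin \mathcal{P}red_i$) and $\bar{c}$ at $\mathcal{T}_s < \mathcal{T}$ as $slow\text{-}pending$ with $c$ absent from $\overline{\mathcal{P}red}$. Regardless of the relative arrival order of the two messages, the code path of Figure~\ref{fig:aux-functions}'s \textsc{Wait} routine forces $p_i$ to block $c$'s fast-propose until $\bar{c}$ becomes $accepted$ or $stable$, and then either to return $\mathcal{NACK}$ (if $c$ is still missing from $\bar{c}$'s predecessors) or, when $\bar{c}$'s status advances via the retry of step~1 with $c \notin \overline{\mathcal{P}red}$, to reject the timestamp outright. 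Either outcome contradicts the recorded $\langle c,\mathcal{T},\mathcal{P}red_i,fast\text{-}pending,\mathcal{B}',-\rangle$ with $\bar{c}\notin \mathcal{P}red_i$ used in the derivation of the slow decision for $c$, completing the contradiction.

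The main obstacle I expect is bookkeeping around the three distinct timestamps that $\bar{c}$ may carry across its phases (the fast-propose timestamp, $\mathcal{T}_s$ in slow-propose, and the final $\bar{\mathcal{T}}$ from retry), and ensuring that the intersection argument lands on a state of $p_i$ where Wait is actually triggered. The key inequality to exploit is $\mathcal{T}_s \le \bar{\mathcal{T}} < \mathcal{T}$, together with the fact that slow-propose, unlike fast-propose, already carries $\bar{c}$'s predecessor set, so $p_i$ can evaluate whether $c$ is a predecessor of $\bar{c}$ as soon as the slow-propose is processed. A secondary subtlety, handled exactly as in Lemma~\ref{lm6}, is to show that the ballot ordering $\bar{\mathcal{B}}'' \le \bar{\mathcal{B}}$ and $\mathcal{B}' \le \mathcal{B}$ does not allow a later recovery to overwrite the offending record on $p_i$ before the contradiction is drawn.
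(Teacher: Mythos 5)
Your overall skeleton --- proof by contradiction, unfold both decisions into the quorums they require, and intersect $c$'s fast-proposal classic quorum with a classic quorum that records $\bar{c}$'s progress --- is the same as the paper's (the paper intersects with the quorum that recorded $\bar{c}$ as $accepted$ in the retry phase, whereas you use the slow-proposal quorum where $\bar{c}$ is $slow\text{-}pending$; either works since the paper's first step rules out both statuses at timestamps below $\mathcal{T}$). However, the mechanism by which you extract the contradiction at the intersection node $p_i$ is wrong. You claim that \textsc{Wait} ``forces $p_i$ to block $c$'s fast-propose until $\bar{c}$ becomes $accepted$ or $stable$'' and then to $\mathcal{NACK}$. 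But \textsc{Wait}$(c,\mathcal{T}ime)$ only quantifies over conflicting commands $\bar{c}$ with $\mathcal{T}ime<\bar{\mathcal{T}}$, i.e., with a timestamp \emph{greater} than $c$'s; in this lemma $\bar{\mathcal{T}}<\mathcal{T}$ (and a fortiori $\mathcal{T}_s<\mathcal{T}$), so neither the wait-until guard nor the $\mathcal{NACK}$ branch of \textsc{Wait} ever fires for $\bar{c}$ when $p_i$ processes $c$. The paper is explicit that only the lower-timestamped command waits for the higher-timestamped one, and here $c$ is the higher-timestamped command.

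The step you actually need at $p_i$ is different. If $p_i$ already holds $\bar{c}$ as $slow\text{-}pending$ (or $accepted$) at a timestamp below $\mathcal{T}$ when it processes $c$'s \textsc{FastPropose}, then \textsc{ComputePredecessors}$(c,\mathcal{T},\mathcal{W}hitelist)$ necessarily puts $\bar{c}$ into $\mathcal{P}red_i$ --- even when the whitelist is non-null, because the status $slow\text{-}pending/accepted/stable$ clause applies --- contradicting $\bar{c}\notin\mathcal{P}red_i$. In the opposite arrival order ($p_i$ records $c$ first), the relevant waiting is done by $\bar{c}$, not $c$: \textsc{Wait}$(\bar{c},\mathcal{T}_s)$ on $p_i$ sees $c$ at $\mathcal{T}>\mathcal{T}_s$ with $\bar{c}$ absent from $c$'s predecessors, blocks until $c$ is $accepted/stable$, and returns $\mathcal{NACK}$, which drives $\bar{c}$'s retry timestamp above $\mathcal{T}$ and contradicts $\bar{\mathcal{T}}<\mathcal{T}$. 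Your proposal never makes this case split on arrival order explicit, and as written the single mechanism it relies on does not apply; so there is a genuine gap, even though the quorum-intersection framing is the right one.
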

\begin{proof}

Assume that,

$\textsc{SlowDecisionFromProposal}[c, \mathcal{T}, \mathcal{P}red, \mathcal{B}]~\land$

$\textsc{SlowDecisionFromRetrySP}[\bar{c}, \bar{\mathcal{T}}, \bar{\mathcal{P}red}, \bar{\mathcal{B}}] \land \bar{\mathcal{T}} \land \mathcal{T} \land \bar{c} \notin \mathcal{P}red \land \bar{c} \sim c$
\\

$\bar{c} \notin \mathcal{P}red \Rightarrow $

$\exists \mathcal{B'} \leq \mathcal{B}, \exists \mathcal{CQ} \in ClassicQuorums : \forall p_j \in \mathcal{CQ}, \exists \langle c, \mathcal{T}, \mathcal{P}red, fast\text{-}pending, \mathcal{B'}, - \rangle \in H_j \land \bar{c} \notin \mathcal{P}red_j \land (\exists \langle \bar{c}, \mathcal{T}_2 < \mathcal{T}, -,-,-,- \rangle \in H_j  \Rightarrow $

$(\exists \langle \bar{c}, \mathcal{T}_2 < \mathcal{T}, -, fast\text{-}pending, -,- \rangle \in H_j \lor \textsc{Wait}(\bar{c}, \mathcal{T}_2 < \mathcal{T}, p_j) \lor \exists \langle \bar{c}, \mathcal{T}_2 < \mathcal{T}, -, rejected, -,- \rangle \in H_j)$

\indent\indent\indent\indent$\land (\nexists \langle \bar{c}, \mathcal{T}_2 < \mathcal{T}, -, accepted, -,- \rangle \in H_j \land \nexists \langle \bar{c}, \mathcal{T}_2 < \mathcal{T}, -, slow\text{-}pending, -,- \rangle \in H_j)$\\

This is a contradiction because

$\textsc{SlowDecisionFromProposal}(\bar{c}, \bar{\mathcal{T}}, \overline{\mathcal{P}red}, \bar{\mathcal{B}}) \Rightarrow$

$\forall \mathcal{CQ} \in ClassicQuorums : \exists p_k \in \mathcal{CQ} : \exists \langle \bar{c}, \bar{\mathcal{T}}, \mathcal{P}red_k, accepted, \bar{\mathcal{B}}, - \rangle \in H_k \land \neg \textsc{Wait}(\bar{c}, \bar{\mathcal{T}}, p_k) \land \nexists \langle \bar{c}, \bar{\mathcal{T}}, -, rejected, \bar{\mathcal{B}}, - \rangle \in H_k$

\end{proof}

\begin{lemma}
\label{lm8}
$\forall c,\bar{c}, (\textsc{SlowDecisionFromRetry}[c,\mathcal{T},\mathcal{P}red,\mathcal{B}] \land$\\
$(\textsc{SlowDecisionFromRetry}[\bar{c},\bar{\mathcal{T}},\overline{\mathcal{P}red},\overline{\mathcal{B}}] \lor\\\textsc{SlowDecisionFromProposal}[\bar{c},\bar{\mathcal{T}},\overline{\mathcal{P}red},\overline{\mathcal{B}}]) \land \bar{\mathcal{T}}<\mathcal{T} \land c\sim\bar{c} \Rightarrow \bar{c}\in \mathcal{P}red)$.
\end{lemma}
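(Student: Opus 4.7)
My plan is to mirror the quorum--intersection arguments used in Lemmas~\ref{lm5}--\ref{lm7}, but now exploit the stronger guarantees of the \emph{retry phase} on the $c$ side. From the definition of \textsc{SlowDecisionFromRetry}$[c,\mathcal{T},\mathcal{P}red,\mathcal{B}]$, there is a classic quorum $\mathcal{CQ}_1$ such that every $p_j\in\mathcal{CQ}_1$ received a \textsc{Retry} for $c$ with ballot $\mathcal{B}$, marked $c$ as $accepted$ with timestamp $\mathcal{T}$, and returned a \textsc{RetryR} whose attached predecessor set $\mathcal{P}red_j$ is computed by \textsc{ComputePredecessors}$(c,\mathcal{T},\mathcal{W}hitelist)$; the final $\mathcal{P}red$ sent in the stable phase is the union $\bigcup_{j\in\mathcal{CQ}_1}\mathcal{P}red_j$ together with the predecessors carried by the \textsc{Retry} message itself. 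From the disjunction on the $\bar{c}$ side, there is a classic quorum $\mathcal{CQ}_2$ such that every $p_k\in\mathcal{CQ}_2$ holds $\bar{c}$ in $\mathcal{H}_k$ with status in $\{slow\text{-}pending,accepted,stable\}$ at timestamp $\bar{\mathcal{T}}$ (and ballot at least $\overline{\mathcal{B}}$). Since $|\mathcal{CQ}_1|+|\mathcal{CQ}_2|>N$, I fix an intersecting node $p_\ast\in\mathcal{CQ}_1\cap\mathcal{CQ}_2$ and reason about the temporal order at $p_\ast$ of (a)~marking $c$ as $accepted$ at $\mathcal{T}$ via the \textsc{Retry} handler and (b)~marking $\bar{c}$ with a non-$fast\text{-}pending$ status at $\bar{\mathcal{T}}$.

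\textbf{Case 1: (b) precedes (a) at $p_\ast$.} When $p_\ast$ executes the \textsc{Retry} handler for $c$ it calls \textsc{ComputePredecessors}$(c,\mathcal{T},\mathcal{W}hitelist)$ (line R7). Inspecting lines~1--3 of Figure~\ref{fig:aux-functions}, since $\bar{c}\sim c$, $\bar{\mathcal{T}}<\mathcal{T}$, and the entry for $\bar{c}$ in $\mathcal{H}_{p_\ast}$ carries status in $\{slow\text{-}pending,accepted,stable\}$, the predicate used both when $\mathcal{W}hitelist=null$ and when $\mathcal{W}hitelist\neq null$ is satisfied, so $\bar{c}\in\mathcal{P}red_{p_\ast}$. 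Because the leader's final predecessors set is a superset of $\mathcal{P}red_{p_\ast}$, we conclude $\bar{c}\in\mathcal{P}red$, as desired.

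\textbf{Case 2: (a) precedes (b) at $p_\ast$.} I claim this case cannot occur under the hypotheses, so its impossibility closes the proof. After event (a), $\mathcal{H}_{p_\ast}$ contains $\langle c,\mathcal{T},\cdot,accepted,\cdot,\cdot\rangle$. Any subsequent \textsc{FastPropose} or \textsc{SlowPropose} for $\bar{c}$ arriving at $p_\ast$ triggers \textsc{Wait}$(\bar{c},\bar{\mathcal{T}})$ (lines P15 and P38, together with the function in lines~4--8 of Figure~\ref{fig:aux-functions}): since $\bar{c}\sim c$, $\bar{\mathcal{T}}<\mathcal{T}$, and we are assuming $c\notin\overline{\mathcal{P}red}$ (equivalently $\bar{c}\notin\mathcal{P}red$ as we are trying to refute), the wait blocks until $c$ reaches status $accepted$ or $stable$ at $p_\ast$, which it already has. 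The function then returns $\mathcal{NACK}$, so $p_\ast$ marks $\bar{c}$ as $rejected$ at $\bar{\mathcal{T}}$ and cannot move it to $slow\text{-}pending$ or $accepted$ at the same timestamp. The remaining way for $\bar{c}$ to acquire a non-$fast\text{-}pending$ status at $\bar{\mathcal{T}}$ at $p_\ast$ is the \textsc{Retry} path; but by Lemma~\ref{lm1} (in the direction we have already proved) together with the monotonicity of the Retry timestamp (line R1), such a \textsc{Retry} must carry a timestamp strictly greater than the value $\bar{\mathcal{T}}$ that $p_\ast$ previously rejected, contradicting $\bar{\mathcal{T}}<\mathcal{T}$ being the final decided timestamp for $\bar{c}$.

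\textbf{Main obstacle.} The delicate part is Case~2 under the \textsc{SlowDecisionFromProposal} branch: a \textsc{SlowPropose} for $\bar{c}$ carries a leader-supplied $\mathcal{P}red$ which already excludes $c$, so I must argue that the \emph{local} \textsc{Wait} at $p_\ast$ still fires and forces either a $\mathcal{NACK}$ or the inclusion of $c$ into $\bar{c}$'s predecessors at stabilization time. Equally subtle is the whitelist/forced variant used during recovery (lines 19--24 of Figure~\ref{fig:pseudocode-recovery}): I must verify that in Case~1 the whitelist cannot maliciously exclude $\bar{c}$ from $\mathcal{P}red_{p_\ast}$, which follows from the fact that $\bar{c}$ has status in $\{slow\text{-}pending,accepted,stable\}$ on $p_\ast$ and therefore matches the second disjunct of the predecessor predicate even when a whitelist is in effect. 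Once both of these points are nailed down, the three-paragraph skeleton above completes the lemma.
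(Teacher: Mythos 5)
Your proof follows essentially the same route as the paper's: both derive a contradiction by intersecting the classic quorum that accepted $c$ at $\mathcal{T}$ in the retry phase (where $\bar{c}\notin\mathcal{P}red_j$ forces $\bar{c}$ to be either absent or $rejected$ at $\bar{\mathcal{T}}$ on every such node) with the classic quorum that must hold $\bar{c}$ in a non-$fast\text{-}pending$, non-$rejected$ state at $\bar{\mathcal{T}}$ for either slow decision of $\bar{c}$. Your temporal case split at $p_\ast$ makes explicit what the paper leaves implicit; the only soft spot is your dismissal of the \textsc{Retry}-for-$\bar{c}$ subcase in Case~2, which presumes $p_\ast$ had previously rejected $\bar{\mathcal{T}}$ (it may never have seen $\bar{c}$'s proposal at all) --- but the paper's own one-line intersection argument glosses over the same ordering issue, so this does not put you behind its level of rigor.
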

\begin{proof}

Assume that,

$\textsc{SlowDecisionFromRetry}(c, \mathcal{T}, \mathcal{P}red, \mathcal{B}) \land \textsc{SlowDecisionFromRetry}[\bar{c}, \bar{\mathcal{T}}, \overline{\mathcal{P}red}, \bar{\mathcal{B}}] \land \bar{\mathcal{T}} < \mathcal{T} \land \bar{c} \notin \mathcal{P}red \land \bar{c} \sim c$


$\textsc{SlowDecisionFromRetry}(c, \mathcal{T}, \mathcal{P}red, \mathcal{B}) \Rightarrow$

$\exists \mathcal{CQ} \in ClassicQuorums, \forall p_i \in \mathcal{CQ} : \exists \langle c, \mathcal{T}, \mathcal{P}red_i, accepted, \mathcal{B}, \bot \rangle \in H_i \land \bar{c} \notin \mathcal{P}red_i$

$\land (\nexists \langle \bar{c}, \bar{\mathcal{T}}, -,-,-,-, \rangle \in H_i \lor \exists \langle \bar{c}, \bar{\mathcal{T}}, -, rejected, -, -, \rangle \in H_i) \Rightarrow$

$\forall \mathcal{CQ} \in ClassicQuorums, \exists p_i \in \mathcal{CQ} : \nexists \langle \bar{c}, \bar{\mathcal{T}}, -,-,-,- \rangle \in H_i \lor \exists \langle \bar{c}, \bar{\mathcal{T}}, -, rejected, -,- \rangle \in H_i$

$\Rightarrow \neg \textsc{SlowDecisionFromRetry}[\bar{c}, \bar{\mathcal{T}}, \overline{\mathcal{P}red}, \bar{\mathcal{B}}]$

This a contradiction.

\end{proof}

\begin{theorem}
\label{th1}
$\forall c,\bar{c}, (\textsc{Decided}[c,\mathcal{T},\mathcal{P}red,\mathcal{B}] \land \textsc{Decided}[\bar{c},\bar{\mathcal{T}},\overline{\mathcal{P}red},\overline{\mathcal{B}}]  \land \bar{\mathcal{T}}<\mathcal{T} \land c\sim\bar{c} \Rightarrow \bar{c}\in \mathcal{P}red)$.
\end{theorem}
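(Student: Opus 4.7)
The plan is to prove Theorem \ref{th1} by exhaustive case analysis on the decision modes of $c$ and $\bar{c}$. Every decided command reached that status through exactly one of \textsc{FastDecision}, \textsc{SlowDecisionFromProposal}, \textsc{SlowDecisionFromRetryFP}, or \textsc{SlowDecisionFromRetrySP} (the last two grouped as \textsc{SlowDecisionFromRetry}). Enumerating ordered pairs of these modes yields finitely many combinations, each of which is handled by one of Lemmas \ref{lm1}--\ref{lm8} (using the symmetric reading of Lemma \ref{lm1} for the pair where $c$ is fast and $\bar{c}$ is slow). Concretely: Lemma \ref{lm2} covers fast/fast; Lemmas \ref{lm3} and \ref{lm4} cover (slow-from-proposal)/fast and (slow-from-retry)/fast; Lemma \ref{lm1} covers fast/slow; and Lemmas \ref{lm5}, \ref{lm6}, \ref{lm7}, \ref{lm8} cover the remaining slow/slow combinations. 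Thus the theorem follows by assembling these lemmas.

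The shared engine of every lemma is quorum intersection combined with the two safety devices of Section \ref{sec:details}: the \textsc{Wait} function (lines 4--8 of Figure \ref{fig:aux-functions}) and the \textsc{ComputePredecessors} function (lines 1--3 of Figure \ref{fig:aux-functions}). For each case the argument is by contradiction: assume $\bar{c}\notin\mathcal{P}red$; then every acceptor in $c$'s decision quorum recorded $c$ at $\mathcal{T}$ without $\bar{c}$ among its local predecessors, which by the pseudocode forces each such acceptor either to be ignorant of $\bar{c}$ or to hold $\bar{c}$ only in a state (\emph{fast-pending} or \emph{rejected}) incompatible with $\bar{c}$ later becoming stable at $\bar{\mathcal{T}}<\mathcal{T}$. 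Intersecting this quorum with $\bar{c}$'s decision quorum yields a node that witnesses both facts simultaneously, contradicting either the \textsc{Wait} invariant or the acceptance conditions of $\bar{c}$'s mode. The relevant intersection properties, namely $\mathcal{FQ}\cap\mathcal{CQ}\neq\emptyset$, $\mathcal{CQ}_1\cap\mathcal{CQ}_2\neq\emptyset$, and most importantly $\mathcal{FQ}_1\cap\mathcal{FQ}_2\cap\mathcal{CQ}\neq\emptyset$, all follow from the sizes $|\mathcal{CQ}|\geq\lfloor N/2\rfloor+1$ and $|\mathcal{FQ}|\geq\lceil 3N/4\rceil$ fixed in Section \ref{sec:sys-model}.

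The hardest case, and the conceptual core of the whole proof, is Lemma \ref{lm2} (both decisions fast). A fast decision does not impose agreement on predecessor sets across the replying quorum, so a single acceptor cannot serve as a witness. The argument instead must trace $c$'s decision possibly back through a recovery on a higher ballot, invoking the invariant encoded in lines 19--24 of Figure \ref{fig:pseudocode-recovery}: if $c$ was fast-decided on ballot $\mathcal{M}axBallot$, then any recovery reconstructs the predecessor set from at least $\lfloor|\mathcal{CQ}|/2\rfloor+1$ intersecting witnesses, so any command that belonged to the original predecessor set is whitelisted. Combining this with the triple intersection $\mathcal{FQ}_1\cap\mathcal{FQ}_2\cap\mathcal{CQ}$ forces any fast-pending record of $\bar{c}$ at $\bar{\mathcal{T}}$ held inside $c$'s fast quorum either to trigger \textsc{Wait} (if $c$ was not yet visible) or to be a $\mathcal{NACK}$, both of which contradict the premise that $\bar{c}$ itself was fast-decided at $\bar{\mathcal{T}}$ without $c$. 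The same triple-intersection trick reappears inside Lemmas \ref{lm6} and \ref{lm7} whenever a slow decision is traced back through a rejected fast-proposal phase; once it is settled, the residual slow/slow Lemmas \ref{lm5} and \ref{lm8} are discharged by pairwise classic-quorum intersection together with \textsc{Wait}.
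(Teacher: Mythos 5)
Your proposal matches the paper's own proof: Theorem~\ref{th1} is discharged there exactly by the case split over decision modes handled by Lemmas~\ref{lm1}--\ref{lm8}, and your description of the underlying machinery (quorum intersections, the \textsc{Wait} condition, and the whitelist reconstruction during recovery for the fast/fast case) faithfully reflects how those lemmas are argued. The only nitpick is that no ``symmetric reading'' of Lemma~\ref{lm1} is needed, since it is already stated for the fast-$c$/slow-$\bar{c}$ pair.
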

\begin{proof}

The proof follows from the Lemmas~\ref{lm1}--\ref{lm8}

\end{proof}

\begin{lemma}
\label{lm9}
$\forall c~(\exists \mathcal{B}, \textsc{SlowDecisionFromProposal}[c,\mathcal{T},\mathcal{P}red,\mathcal{B}] \land$\\
$\forall \bar{c} \in \mathcal{P}red, \textsc{Decided}[\bar{c},\bar{\mathcal{T}},\overline{\mathcal{P}red},\overline{\mathcal{B}}] \Rightarrow$\\ 
$\forall \mathcal{B}'\geq \mathcal{B}, (\textsc{Decided}[c,\mathcal{T}',\mathcal{P}red',\mathcal{B}'] \Rightarrow \mathcal{T}' = \mathcal{T} \land \mathcal{P}red'=\mathcal{P}red))$.
\end{lemma}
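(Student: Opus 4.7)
The plan is to argue by strong induction on the ballot $\mathcal{B}' \geq \mathcal{B}$, showing that every decision of $c$ taken at ballot $\mathcal{B}'$ must use exactly the timestamp $\mathcal{T}$ and predecessors set $\mathcal{P}red$ fixed by the original \textsc{SlowDecisionFromProposal} at ballot $\mathcal{B}$. The base case $\mathcal{B}' = \mathcal{B}$ follows from the uniqueness of the leader at any given ballot, which is guaranteed by the ballot-filtering clause in the reception handlers for \textsc{SlowPropose}, \textsc{Retry}, and \textsc{Stable} messages.

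For the inductive step, I would first extract the core invariant established at ballot $\mathcal{B}$: since \textsc{SlowDecisionFromProposal}$[c,\mathcal{T},\mathcal{P}red,\mathcal{B}]$ holds, lines P21--P39 guarantee that some classic quorum $\mathcal{CQ}^\star$ of nodes have written $\langle c, \mathcal{T}, \mathcal{P}red, slow\text{-}pending/stable, \mathcal{B}, \bot\rangle$ into their histories. Now consider any decision at ballot $\mathcal{B}' > \mathcal{B}$. Such a decision can only be triggered after a successful \textsc{RecoveryPhase} at some intermediate ballot $\mathcal{B}'' \in [\mathcal{B}+1, \mathcal{B}']$ (or at $\mathcal{B}'$ itself), because the higher-ballot ballot-filtering rules (line 29 of Figure~\ref{fig:pseudocode-recovery}) prevent any node from responding to non-recovery messages tagged with a ballot greater than its own. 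The recovery contacts a classic quorum $\mathcal{CQ}'$, and since $|\mathcal{CQ}^\star \cap \mathcal{CQ}'| \geq 1$, $\mathcal{R}ecoverySet$ will contain at least one tuple carrying $(\mathcal{T},\mathcal{P}red)$ at some ballot $\geq \mathcal{B}$, so $\mathcal{M}axBallot \geq \mathcal{B}$.

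The bulk of the argument is then a case analysis over the five recovery branches (lines 7--28). For every ballot $\tilde{\mathcal{B}}$ strictly between $\mathcal{B}$ and $\mathcal{B}'$, the inductive hypothesis tells us that any tuple for $c$ at ballot $\tilde{\mathcal{B}}$ with status $slow\text{-}pending$, $accepted$, or $stable$ must carry exactly $(\mathcal{T},\mathcal{P}red)$; in particular, no $rejected$ tuple can appear at any such ballot because a rejection would restart a fresh fast proposal, which by Theorem~\ref{th1} would conflict with the already-fixed decision. Hence at $\mathcal{M}axBallot$ the set $\mathcal{R}ecoverySet$ contains only tuples with status in $\{slow\text{-}pending, accepted, stable\}$ and value $(\mathcal{T},\mathcal{P}red)$. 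Branches \emph{i)}--\emph{iv)} then directly re-propose $(\mathcal{T},\mathcal{P}red)$ into \textsc{StablePhase}, \textsc{RetryPhase}, or \textsc{SlowProposalPhase} (lines 8, 10, 15), each of which keeps the timestamp and predecessors intact by construction. Branch \emph{v)} is ruled out at $\mathcal{M}axBallot \geq \mathcal{B}$ because that branch presumes all tuples are $fast\text{-}pending$, contradicting the quorum intersection witness.

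The main obstacle I expect is branch \emph{iv)} (slow-pending), where the recovery starts a new \textit{slow proposal phase} with $(\mathcal{T},\mathcal{P}red)$: this phase can, in principle, still receive $\mathcal{NACK}$ replies through \textsc{Wait} (lines P37--P38), potentially driving $c$ into a \textit{retry phase} with a \emph{different} timestamp and thus threatening the invariant. To close this gap I would lean on the hypothesis $\forall \bar{c} \in \mathcal{P}red, \textsc{Decided}[\bar{c},\bar{\mathcal{T}},\overline{\mathcal{P}red},\overline{\mathcal{B}}]$ together with Theorem~\ref{th1}: any conflicting $\bar{c}$ with a later timestamp either contains $c$ in its predecessors set (so \textsc{Wait} returns $\mathcal{OK}$) or has timestamp smaller than $\mathcal{T}$ (so the wait premise does not fire). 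Consequently no acceptor can legitimately $\mathcal{NACK}$ the re-proposed $(\mathcal{T},\mathcal{P}red)$, the slow proposal succeeds, and the ensuing \textsc{Stable} message preserves $(\mathcal{T},\mathcal{P}red)$, completing the induction.
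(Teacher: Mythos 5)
Your argument is organized around the same pivot as the paper's (very terse) proof: once \textsc{SlowDecisionFromProposal}$[c,\mathcal{T},\mathcal{P}red,\mathcal{B}]$ holds, a classic quorum of nodes carries $\langle c,\mathcal{T},\mathcal{P}red,slow\text{-}pending/stable,\mathcal{B},-\rangle$, so by quorum intersection every later recovery must encounter that tuple and re-inject $(\mathcal{T},\mathcal{P}red)$. The paper essentially stops there; your induction over ballots and explicit walk through the five recovery branches of Figure~\ref{fig:pseudocode-recovery} is a legitimate and useful expansion of that one-line invariant, and your use of the hypothesis that all of $\mathcal{P}red$ is decided (together with Theorem~\ref{th1}) to argue that the re-executed \emph{slow proposal phase} cannot be $\mathcal{NACK}$ed is a sensible reading of why that hypothesis appears in the lemma statement at all.

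The gap is in your treatment of the $rejected$ branch. You assert that ``no $rejected$ tuple can appear at any such ballot because a rejection would restart a fresh fast proposal, which by Theorem~\ref{th1} would conflict with the already-fixed decision.'' That is the wrong direction of inference: Theorem~\ref{th1} constrains what may be \emph{decided}, not which tuples may sit in some node's local history, so you cannot use it to rule out the existence of a $rejected$ entry. And such an entry can genuinely exist at ballot $\mathcal{B}$: a node that lies outside the two classic quorums the leader actually counted may have answered the \textsc{FastPropose} or \textsc{SlowPropose} with a $\mathcal{NACK}$ (its \textsc{Wait} returned $\mathcal{NACK}$), recorded $\langle c,-,-,rejected,\mathcal{B},-\rangle$, and had its reply ignored because the leader had already assembled a quorum of $\mathcal{OK}$s. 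Since the recovery procedure tests the $rejected$ branch \emph{before} the $slow\text{-}pending$ branch (lines 11--13 vs.\ 14--15 of Figure~\ref{fig:pseudocode-recovery}), a $\mathcal{R}ecoverySet$ containing both a $rejected$ and a $slow\text{-}pending$ tuple but no $stable$ or $accepted$ one would, as written, trigger a fresh \emph{fast proposal phase} with a new timestamp --- exactly the outcome the lemma forbids. Your proof needs an explicit argument for why this configuration cannot arise or why it is harmless; the paper's proof is silent on it as well, so this is a case your added rigor surfaces rather than one you can dismiss by citation.
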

\begin{proof}

$\textsc{SlowDecisionFromProposal}[c, \mathcal{T}, \mathcal{P}red, \mathcal{B}] \land \exists \bar{c} \in \mathcal{P}red : \textsc{Decided}[\bar{c}, \bar{\mathcal{T}}, -,-] \land \bar{\mathcal{T}} < \mathcal{T} \Rightarrow \forall \mathcal{B'} \geq \mathcal{B},  \textsc{Decided}[c, \mathcal{T}, \mathcal{P}red', \mathcal{B'}] \land \bar{c} \in \mathcal{P}red'$


$\textsc{SlowDecisionFromProposal}[c, \mathcal{T}, \mathcal{P}red, \mathcal{B}] \Rightarrow$

$\forall \mathcal{CQ} \in ClassicQuorums, \exists p_i \in \mathcal{CQ} : \exists \langle c, \mathcal{T}, \mathcal{P}red, slow\text{-}pending/stable, \mathcal{B}, - \rangle \in H_i \lor \langle c, \mathcal{T}, \mathcal{P}red, slow\text{-}pending/stable, \mathcal{B'}, - \rangle \in H_i$

\end{proof}

\begin{lemma}
\label{lm10}
$\forall c~(\exists \mathcal{B}, \textsc{SlowDecisionFromRetry}[c,\mathcal{T},\mathcal{P}red,\mathcal{B}] \land$\\
$\forall \bar{c} \in \mathcal{P}red, \textsc{Decided}[\bar{c},\bar{\mathcal{T}},\overline{\mathcal{P}red},\overline{\mathcal{B}}] \Rightarrow$\\ 
$\forall \mathcal{B}'\geq \mathcal{B}, (\textsc{Decided}[c,\mathcal{T}',\mathcal{P}red',\mathcal{B}'] \Rightarrow \mathcal{T}' = \mathcal{T} \land \mathcal{P}red'=\mathcal{P}red))$.
\end{lemma}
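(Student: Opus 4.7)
The plan is to mirror the structure of Lemma~\ref{lm9}, but adapted to the fact that \textsc{SlowDecisionFromRetry} places commands in status $accepted$ (not $slow\text{-}pending$) at a classic quorum of nodes, and to propagate this invariant to every strictly larger ballot via the recovery procedure of Figure~\ref{fig:pseudocode-recovery}. First I would unfold the hypothesis: $\textsc{SlowDecisionFromRetry}[c,\mathcal{T},\mathcal{P}red,\mathcal{B}]$ implies that there exists a classic quorum $\mathcal{CQ}$ such that every node $p_j \in \mathcal{CQ}$ holds a tuple $\langle c, \mathcal{T}, \mathcal{P}red, accepted, \mathcal{B}, \bot \rangle$ in $\mathcal{H}_j$ at the moment the \textsc{Stable} message is broadcast. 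This is the key invariant to carry forward.

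Next, I would argue by induction on $\mathcal{B}' \geq \mathcal{B}$. For the base case $\mathcal{B}' = \mathcal{B}$, only the ballot-$\mathcal{B}$ leader (the one that issued the \textsc{Retry}) can reach the stable phase at ballot $\mathcal{B}$; that leader collected \textsc{RetryR} confirmations for exactly $(\mathcal{T},\mathcal{P}red)$ and the \textsc{RetryR} cannot reject, so the ensuing \textsc{Stable} carries these values. For the inductive step $\mathcal{B}' > \mathcal{B}$, any decision at $\mathcal{B}'$ must originate from a node $p_k$ that successfully completed the recovery procedure at some ballot $\mathcal{B}_r$ with $\mathcal{B} < \mathcal{B}_r \leq \mathcal{B}'$, collecting a classic quorum $\mathcal{CQ}'$ of \textsc{RecoveryR} replies. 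By quorum intersection, $\mathcal{CQ} \cap \mathcal{CQ}' \neq \emptyset$; every node in this intersection has an entry for $c$ (hence cannot reply $\mathcal{NOP}$) and its status can only be $accepted$ or $stable$ (the statuses $fast\text{-}pending$, $slow\text{-}pending$, $rejected$ are ruled out because the node already recorded $accepted$ at ballot $\mathcal{B}$ and the pseudocode only overwrites such a tuple when receiving a message at ballot $\geq \mathcal{B}$, which by the inductive hypothesis must also carry $(\mathcal{T},\mathcal{P}red)$ and thus only leads to $accepted$ or $stable$).

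Therefore the $\mathcal{M}axBallot$ selected by $p_k$ at line~5 of Figure~\ref{fig:pseudocode-recovery} satisfies $\mathcal{M}axBallot \geq \mathcal{B}$, and the $\mathcal{R}ecoverySet$ contains at least one tuple with status $accepted$ or $stable$; the inductive hypothesis applied at ballot $\mathcal{M}axBallot$ guarantees that every such tuple carries exactly $(\mathcal{T},\mathcal{P}red)$. Consequently $p_k$ executes either the branch at lines~7--8 (\textsc{StablePhase} reusing $\mathcal{T}$ and $\mathcal{P}red$) or the branch at lines~9--10 (\textsc{RetryPhase} reusing $\mathcal{T}$ and $\mathcal{P}red$). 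In both cases the eventual \textsc{Stable} message at ballot $\mathcal{B}'$ carries $(\mathcal{T},\mathcal{P}red)$, so $\textsc{Decided}[c,\mathcal{T}',\mathcal{P}red',\mathcal{B}']$ implies $\mathcal{T}'=\mathcal{T}$ and $\mathcal{P}red'=\mathcal{P}red$.

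The main obstacle will be stating the induction hypothesis strongly enough: rather than inducting only on the \emph{decided} values, I need the invariant ``for every ballot $\mathcal{B}_j$ with $\mathcal{B} \leq \mathcal{B}_j \leq \mathcal{B}'$ and every node $p_j$, any tuple $\langle c, -, -, accepted/stable, \mathcal{B}_j, - \rangle$ in $\mathcal{H}_j$ carries $(\mathcal{T},\mathcal{P}red)$''. This strengthening is what lets me rule out that a concurrent recovery could have chosen a different $\mathcal{T}$ or $\mathcal{P}red$ at some intermediate ballot and then persisted it into the recovery quorum at ballot $\mathcal{B}'$. Formally justifying this requires checking every branch of Figure~\ref{fig:pseudocode-recovery} to verify that none of them can introduce a tuple with $accepted$/$stable$ status at a ballot in $[\mathcal{B},\mathcal{B}']$ carrying values other than $(\mathcal{T},\mathcal{P}red)$; the $fast\text{-}pending$ branch (lines~16--25) is the most delicate, but it cannot be reached in the intersection because the presence of an $accepted$ tuple at $\mathcal{B}$ forces $\mathcal{M}axBallot$ to select the $accepted$ branch instead.
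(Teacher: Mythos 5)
Your proposal is correct and follows essentially the same route as the paper: the paper's (very terse) proof consists precisely of asserting the invariant you establish, namely that \textsc{SlowDecisionFromRetry} at ballot $\mathcal{B}$ forces every classic quorum, at every ballot $\mathcal{B}'\geq\mathcal{B}$, to contain a node holding an $accepted$ tuple for $c$ with timestamp $\mathcal{T}$, so that any recovery quorum intersects it and is steered into the $accepted$/$stable$ branches reusing $(\mathcal{T},\mathcal{P}red)$. Your explicit induction on ballots and the strengthened invariant simply spell out what the paper leaves implicit.
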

\begin{proof}


$\textsc{SlowDecisionFromRetry}[c, \mathcal{T}, \mathcal{P}red, \mathcal{B}] \Rightarrow$

$\forall \mathcal{B'} \geq \mathcal{B}, \forall{\mathcal{CQ}} \in ClassicQuorums, (\exists p_i \in \mathcal{CQ} : \exists \langle c, \mathcal{T}, -, accepted, \mathcal{B'}, - \rangle \in H_i) \land (\exists p_k \in \mathcal{CQ} : \exists \langle \bar{c}, \bar{\mathcal{T}}, -,-,-,- \rangle \in p_k)$

\end{proof}

\begin{lemma}
\label{lm11}
$\forall c~(\exists \mathcal{B}, \textsc{FastDecision}[c,\mathcal{T},\mathcal{P}red,\mathcal{B}] \land \forall \bar{c} \in \mathcal{P}red, \textsc{Decided}[\bar{c},\bar{\mathcal{T}},\overline{\mathcal{P}red},\overline{\mathcal{B}}] \Rightarrow$\\ 
$\forall \mathcal{B}'\geq \mathcal{B}, (\textsc{Decided}[c,\mathcal{T}',\mathcal{P}red',\mathcal{B}'] \Rightarrow \mathcal{T}' = \mathcal{T} \land \mathcal{P}red'=\mathcal{P}red))$.
\end{lemma}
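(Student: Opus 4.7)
The plan is to prove Lemma~\ref{lm11} by strong induction on $\mathcal{B}'$. The base case $\mathcal{B}' = \mathcal{B}$ is immediate: the only \textsc{Stable} message emitted at ballot $\mathcal{B}$ is the one produced by $c$'s original leader in lines P3--P6 of Figure~\ref{fig:pseudocode}, carrying exactly the pair $(\mathcal{T}, \mathcal{P}red)$. For the inductive step I fix $\mathcal{B}' > \mathcal{B}$, assume the conclusion for every ballot in $[\mathcal{B}, \mathcal{B}'-1]$, and observe that any decision at a ballot strictly greater than $\mathcal{B}$ must originate from a \textsc{RecoveryPhase} on some node $p_k$ that raised its current ballot and collected a classic quorum $\mathcal{S}$ of \textsc{RecoveryR} replies (lines 2--4 of Figure~\ref{fig:pseudocode-recovery}).

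The central structural fact is the quorum-intersection bound $|\mathcal{FQ}^\star \cap \mathcal{S}| \geq \lfloor \mathcal{CQ}/2 \rfloor + 1$, where $\mathcal{FQ}^\star$ is the fast quorum that confirmed the fast decision at ballot $\mathcal{B}$; this follows from the chosen sizes $\mathcal{FQ}=\lceil 3N/4 \rceil$ and $\mathcal{CQ}=\lfloor N/2 \rfloor+1$. Hence at least $\lfloor \mathcal{CQ}/2 \rfloor + 1$ members of $\mathcal{R}ecoverySet$ contain a tuple for $c$ at some ballot $\geq \mathcal{B}$, and each such tuple is consistent with timestamp $\mathcal{T}$: either directly at ballot $\mathcal{B}$ (a $fast\text{-}pending$ entry at $\mathcal{T}$), or, for higher ballots, by the inductive hypothesis. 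If $\mathcal{M}axBallot > \mathcal{B}$, the selected tuple must then carry $(\mathcal{T}, \mathcal{P}red)$, and lines 7--15 propagate these values into the subsequent \textit{stable}, \textit{retry}, or \textit{slow proposal} phase at ballot $\mathcal{B}'$. If $\mathcal{M}axBallot = \mathcal{B}$, the overlap tuples rule out the branches for $stable$, $accepted$, $rejected$, and $slow\text{-}pending$, so execution must enter the $fast\text{-}pending$ branch at lines 16--25, where $\mathcal{T}ime$ is set to $\mathcal{T}$ and a \textsc{FastProposalPhase} is restarted with some computed $\mathcal{W}hitelist$.

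The main obstacle is showing that this restart cannot yield a predecessor set different from $\mathcal{P}red$. Here the hypothesis that every $\bar{c} \in \mathcal{P}red$ has already been decided becomes essential: by Theorem~\ref{th1} each such $\bar{c}$ has a final timestamp strictly below $\mathcal{T}$, so on any acceptor processing the restarted \textsc{FastPropose} either $\bar{c}$ is locally in status $stable/accepted/slow\text{-}pending$ and is picked up directly by \textsc{ComputePredecessors}, or $\bar{c}$ must be forced in via $\mathcal{W}hitelist$. A pigeonhole count against the $\lfloor \mathcal{CQ}/2 \rfloor + 1$ overlap with $\mathcal{FQ}^\star$ then shows two things about the construction at lines 21--24 of Figure~\ref{fig:pseudocode-recovery}: no command outside $\mathcal{P}red$ can be majority-supported and hence none enters $\mathcal{W}hitelist$, and every command in $\mathcal{P}red$ that is not already visible in a non-$fast\text{-}pending$ status at the new acceptors survives the whitelist test. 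Coupled with the non-null-whitelist branch of \textsc{ComputePredecessors} in lines 1--3 of Figure~\ref{fig:aux-functions}, which blocks any spurious $fast\text{-}pending$ addition, the union over the new fast quorum collapses to exactly $\mathcal{P}red$, closing the induction.
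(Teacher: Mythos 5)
Your overall architecture is the same as the paper's: both arguments rest on the intersection bounds $|\mathcal{FQ}\cap\mathcal{CQ}|\geq\left\lfloor\frac{\mathcal{CQ}}{2}\right\rfloor+1$ and $\mathcal{FQ}_1\cap\mathcal{FQ}_2\cap\mathcal{CQ}\neq\emptyset$, and on the $forced$/$\mathcal{W}hiteList$ mechanism of lines 16--25 of Figure~\ref{fig:pseudocode-recovery} to reconstruct the pair $(\mathcal{T},\mathcal{P}red)$ during recovery; you are essentially making explicit, as a strong induction on ballots, the invariant that the paper compresses into its final ``$\forall \mathcal{B}'\geq\mathcal{B}$'' displayed formula. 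There is no circularity in your use of Theorem~\ref{th1}, since it is established from Lemmas~\ref{lm1}--\ref{lm8} only.

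There is, however, one step that does not go through as written: your inductive hypothesis is the lemma's conclusion at every ballot in $[\mathcal{B},\mathcal{B}'-1]$, i.e., a statement about \textsc{Decided} at those ballots. But an intermediate ballot $\mathcal{B}''$ need not produce a decision at all: a recovery at $\mathcal{B}''$ can write $fast\text{-}pending$ (possibly $forced$) tuples for $c$ into some nodes' $\mathcal{H}_j$ and then stall or be superseded. Those tuples are precisely what a later recovery at $\mathcal{B}'$ observes when $\mathcal{M}axBallot=\mathcal{B}''>\mathcal{B}$, and the hypothesis as you state it says nothing about them, so the claim that ``each such tuple is consistent with timestamp $\mathcal{T}$ \ldots by the inductive hypothesis'' is unsupported. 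The standard repair is to strengthen the induction to an invariant on the values \emph{carried by the tuples and messages written} at each ballot at least $\mathcal{B}$ --- namely, that any \textsc{FastPropose}, \textsc{SlowPropose}, \textsc{Retry}, or \textsc{Stable} issued for $c$ at ballot $\mathcal{B}''\geq\mathcal{B}$ carries timestamp $\mathcal{T}$, and any $\mathcal{W}hiteList$ computed at $\mathcal{B}''$ forces exactly $\mathcal{P}red$ --- which is in substance the invariant the paper's last displayed formula asserts. With that strengthening, your pigeonhole count against the $\left\lfloor\frac{\mathcal{CQ}}{2}\right\rfloor+1$ overlap with the deciding fast quorum does close the argument.
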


\begin{proof}

$\textsc{FastDecision}[c, \mathcal{T}, \mathcal{P}red, \mathcal{B}] \land \exists \bar{c} \in \mathcal{P}red : \textsc{Decided}[\bar{c}, \bar{\mathcal{T}}, \overline{\mathcal{P}red}, \bar{\mathcal{B}}] \land \bar{\mathcal{T}} < \mathcal{T} \Rightarrow \forall \mathcal{B'} \geq \mathcal{B}, \textsc{Decided}[c, \mathcal{T}, \mathcal{P}red', \mathcal{B'}] \land \bar{c} \in \mathcal{P}red'$

$\textsc{Decided}[\bar{c}, \bar{\mathcal{T}}, \overline{\mathcal{P}red}, \bar{\mathcal{B}}] \land $

$\forall \mathcal{CQ} \in ClassicQuorums, \nexists p_h \in \mathcal{CQ} : \exists \langle \bar{c}, \bar{\mathcal{T}}, -, accepted/slow\text{-}pending/stable,-,- \rangle \in H_h$

$\Rightarrow \exists \mathcal{FQ} \in FastQuorums : \forall p_i \in \mathcal{FQ}, \exists \langle \bar{c}, \bar{\mathcal{T}}, \overline{\mathcal{P}red_i}, fast\text{-}pending,\bar{\mathcal{B'}},- \rangle \in H_i \land$\\
$\nexists \langle \bar{c}, \bar{\mathcal{T}}, \overline{\mathcal{P}red_i}, rejected,\bar{\mathcal{B}},- \rangle \in H_i \land \neg \textsc{Wait}(\bar{c}, \bar{\mathcal{T}}, p_i)$

$\Rightarrow \exists \mathcal{FQ} \in FastQuorums : \forall p_i \in \mathcal{FQ}, \exists \langle c, \mathcal{T}, \mathcal{P}red, stable, \mathcal{B'}, - \rangle \in H_k \lor$\\
$(\exists \langle c, \mathcal{T}, \mathcal{P}red_i, fast\text{-}pending, \mathcal{B'}, - \rangle \in H_i \land \bar{c} \in \mathcal{P}red_i) \lor \nexists \langle c, \mathcal{T}, \mathcal{P}red_i, -, -, - \rangle \in H_i$

~\\
$\textsc{FastDecision}[c, \mathcal{T}, \mathcal{P}red, \mathcal{B}] \land \textsc{Decided}[\bar{c}, \bar{\mathcal{T}}, \overline{\mathcal{P}red}, \bar{\mathcal{B}}] \land$\\
$\forall \mathcal{CQ} \in ClassicQuorums, \nexists p_k \in \mathcal{CQ} : \exists \langle \bar{c}, \bar{\mathcal{T}}, -, accepted /slow\text{-}pending/stable, -,- \rangle \in H_h$

$\Rightarrow \exists \mathcal{FQ}_1, \mathcal{FQ}_2 \in FastQuorums : \forall p_i \in \mathcal{FQ}_1 \cap \mathcal{FQ}_2, (\exists \langle c, \mathcal{T}, \mathcal{P}red_i, fast\text{-}pending, \mathcal{B}, \bot \rangle \in H_i \land \bar{c} \in \mathcal{P}red_i) \lor (\exists \langle c, \mathcal{T}, \mathcal{P}red_i, stable, \mathcal{B'}, - \rangle \in H_i)$

Since $\forall \mathcal{FQ}_1, \mathcal{FQ}_2 \in FastQuorums, \forall \mathcal{CQ} \in ClassicQuorums, \mathcal{FQ}_1 \cap \mathcal{FQ}_2 \cap  \mathcal{CQ} \neq \emptyset $

$\Rightarrow \forall \mathcal{CQ} \in ClassicQuorums, \exists \mathcal{FQ} \in FastQuorums : $

$\exists p_i \in \mathcal{CQ} \cap \mathcal{FQ} : (\exists \langle c, \mathcal{T}, \mathcal{P}red_i, fast\text{-}pending, \mathcal{B}, \bot \rangle \in H_i \land \bar{c} \in \mathcal{P}red_i)$

$\lor  (\exists \langle c, \mathcal{T}, \mathcal{P}red_i, stable, \mathcal{B'}, - \rangle \in H_i) \forall \mathcal{B'} \geq \mathcal{B}, \exists pk : \exists \langle c, \mathcal{T}, \mathcal{P}red_k, fast\text{-}pending, \mathcal{B'}, \top \rangle \in H_k \Rightarrow \bar{c} \in \mathcal{P}red_k$

Therefore

$\forall \mathcal{B'} \geq \mathcal{B}, \forall \mathcal{CQ} \in ClassicQuorums : \exists p_i : \exists \langle c, \mathcal{T}, -, fast\text{-}pending/slow\text{-}pending/stable, \mathcal{B'}, - \rangle \land \nexists \langle c, \mathcal{T}, -, rejected, \mathcal{B'}, - \rangle \in H_i~\land$

$($

$\exists p_j \in \mathcal{CQ} : \exists \bar{c}, \bar{\mathcal{T}}, -, accepted/slow\text{-}pending/stable, -,- \rangle \in H_k~\lor$

$\exists p_k \in \mathcal{CQ} : \exists c, \mathcal{T}, \mathcal{P}red_k, fast\text{-}pending, \mathcal{B'}, \top \rangle \in H_k \land \bar{c} \in \mathcal{P}red_k~\lor$

$\nexists \mathcal{FQ} \in FastQuorums, \nexists \mathcal{CQ'} \in ClassicQuorums : \forall p_k \in \mathcal{FQ} \cap \mathcal{CQ'},$\\
$\exists \langle c, \mathcal{T}, \mathcal{P}red_k, fast\text{-}pending, \mathcal{B}, \bot \rangle \in H_k~\land$

$\bar{c} \in \mathcal{P}red_k \land \nexists \langle c, \mathcal{T}, \mathcal{P}red, stable, \mathcal{B}, \bot \rangle \in H_k$

$)$

\end{proof}

\begin{theorem}
\label{th2}
$\forall c (\exists \mathcal{B}, \textsc{Decided}[c,\mathcal{T},\mathcal{P}red,\mathcal{B}] \land \forall \bar{c} \in \mathcal{P}red, \textsc{Decided}[\bar{c}, \bar{\mathcal{T}}, \overline{\mathcal{P}red}, \overline{\mathcal{B}}] \Rightarrow \forall \mathcal{B}'\geq \mathcal{B}, (\textsc{Decided}[c,\mathcal{T}', \mathcal{P}red', \mathcal{B}'] \Rightarrow \mathcal{T}' = \mathcal{T} \land \mathcal{P}red'=\mathcal{P}red))$.
\end{theorem}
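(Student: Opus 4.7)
\textbf{Proof plan for Theorem~\ref{th2}.}
The theorem asserts that once a command $c$ is decided at ballot $\mathcal{B}$ with timestamp $\mathcal{T}$ and predecessors set $\mathcal{P}red$, and its predecessors are themselves decided, then any later decision for $c$ at a ballot $\mathcal{B}'\geq \mathcal{B}$ must agree on both $\mathcal{T}$ and $\mathcal{P}red$. My plan is to prove this by a straightforward case analysis on the \emph{mode} of the decision at ballot $\mathcal{B}$, dispatching each case to one of Lemmas~\ref{lm9}--\ref{lm11}, which already establish the per-mode preservation of $(\mathcal{T},\mathcal{P}red)$ across ballots.

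First I would unfold the premise $\textsc{Decided}[c,\mathcal{T},\mathcal{P}red,\mathcal{B}]$ into the three disjoint modes introduced in Section~\ref{sec:terminology}: \textsc{FastDecision}, \textsc{SlowDecisionFromProposal}, and \textsc{SlowDecisionFromRetry} (the latter subsuming both the FP and SP variants). Since every decision occurs in the stable phase via exactly one of the transitions in Figure~\ref{fig:pseudocode}, this case split is exhaustive. For the \textsc{FastDecision} case, Lemma~\ref{lm11} directly supplies the conclusion; for \textsc{SlowDecisionFromProposal}, Lemma~\ref{lm9}; for \textsc{SlowDecisionFromRetry}, Lemma~\ref{lm10}. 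In each case the lemma guarantees that for every $\mathcal{B}'\geq \mathcal{B}$, any classic quorum contains a node whose local history $\mathcal{H}_j$ witnesses the tuple for $c$ in a state (\emph{slow-pending}, \emph{accepted}, or \emph{stable}) pinned to $(\mathcal{T},\mathcal{P}red)$, so any recovery at ballot $\mathcal{B}'$ collects enough evidence to propagate precisely those values into its subsequent phase.

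The second step is to argue that this quorum-intersection evidence \emph{forces} any later decision to reuse $(\mathcal{T},\mathcal{P}red)$. A recovery launched at ballot $\mathcal{B}'\geq\mathcal{B}$ collects a $\mathcal{CQ}$-sized $\mathcal{R}ecoverySet$, and by the lemmas this set contains at least one tuple carrying $(\mathcal{T},\mathcal{P}red)$ in a status at least as advanced as \emph{slow-pending}. By inspection of lines 7--25 of Figure~\ref{fig:pseudocode-recovery}, the recovery logic then deterministically re-proposes $c$ with exactly that $(\mathcal{T},\mathcal{P}red)$ (or, in the \textsc{FastDecision} branch, with the same $\mathcal{T}ime$ and a $\mathcal{W}hiteList$ that is computed to coincide with $\mathcal{P}red$ via the $\lfloor \mathcal{CQ}/2\rfloor+1$ intersection of a fast and classic quorum, as exploited in Lemma~\ref{lm11}). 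A short induction on $\mathcal{B}'\geq \mathcal{B}$ then propagates the invariant: at every ballot, every classic-quorum view still contains a witness of $(\mathcal{T},\mathcal{P}red)$, so the lemma hypotheses continue to hold and any \textsc{Decided}$[c,\mathcal{T}',\mathcal{P}red',\mathcal{B}']$ must satisfy $\mathcal{T}'=\mathcal{T}$ and $\mathcal{P}red'=\mathcal{P}red$.

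The main obstacle, and the reason Lemma~\ref{lm11} is the substantive one, is the \textsc{FastDecision} case: here the evidence of $(\mathcal{T},\mathcal{P}red)$ initially lives only in the intersection of two fast quorums, and after one recovery round it is only retained in the form of a $\mathcal{W}hiteList$ on a single node (flagged $forced=\top$). I would need to verify carefully that (i) the $\lfloor \mathcal{CQ}/2\rfloor+1$ majority inside $\mathcal{R}ecoverySet$ that witnesses $\bar{c}\in\mathcal{P}red$ survives any subsequent ballot-$\mathcal{B}'$ recovery, and (ii) the branch of Figure~\ref{fig:pseudocode-recovery} that detects a $forced$ tuple (lines 19--20) correctly short-circuits the whitelist re-derivation, so the invariant is not lost after the first recovery. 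Once this pin-down argument is checked, the theorem follows by combining the three lemma-based cases with the induction above.
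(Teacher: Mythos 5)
Your proposal is correct and follows essentially the same route as the paper: the paper's own proof of Theorem~\ref{th2} is exactly the case split on the decision mode with each case discharged by one of Lemmas~\ref{lm9}--\ref{lm11}. The additional material you supply --- the induction on ballots $\mathcal{B}'\geq\mathcal{B}$ and the careful tracking of the $\mathcal{W}hiteList$/$forced$ mechanism through successive recoveries in the \textsc{FastDecision} case --- is detail the paper leaves implicit inside the lemma statements rather than a different argument.
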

\begin{proof}
The proof follows from the Lemmas~\ref{lm9}--\ref{lm11}
\end{proof}

\newpage
\subsection{TLA+ Specification of Caesar}
\label{tla}

\vspace{-100pt}
\hspace{-100pt}
\includegraphics[page=1,height=\paperheight,width=\linewidth]{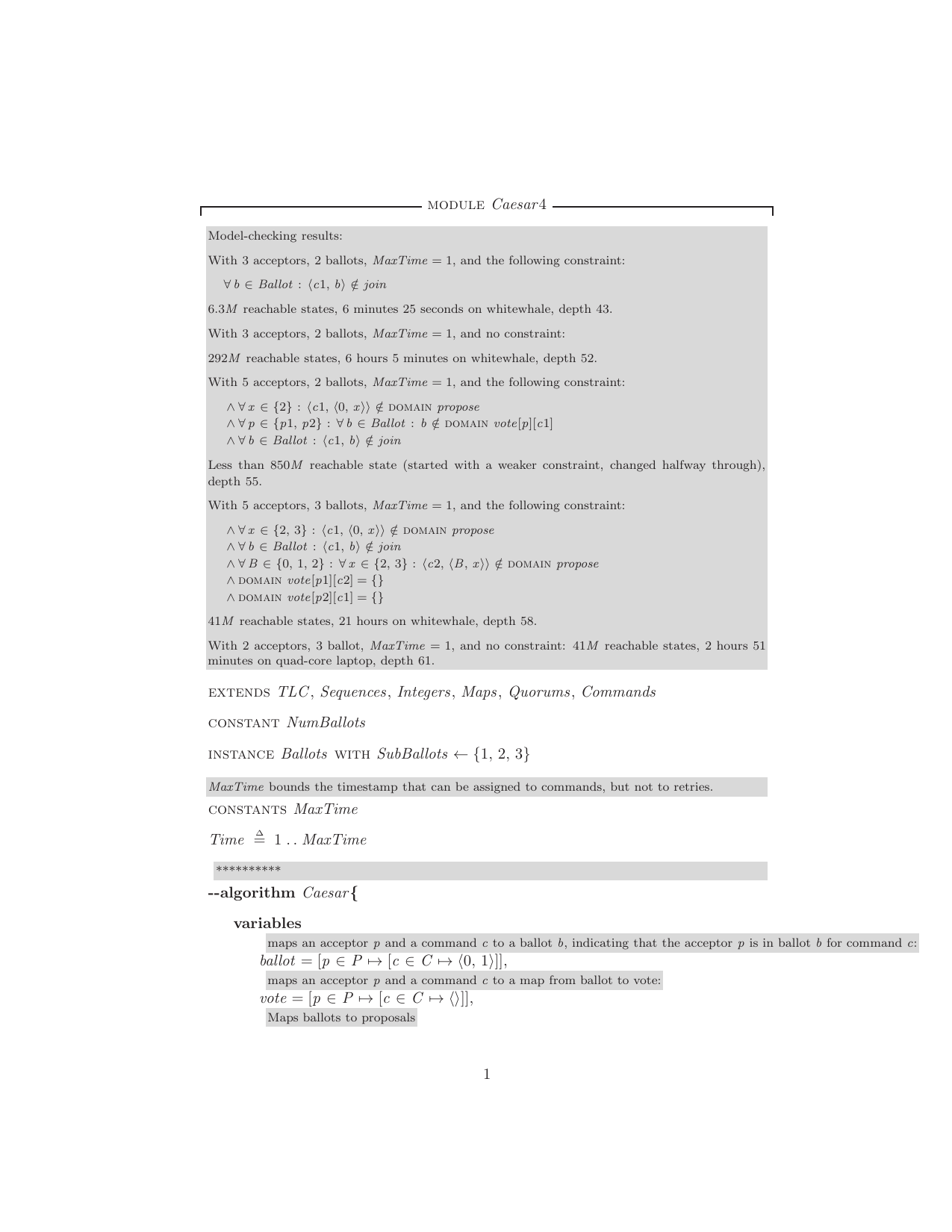}

\vspace{-100pt}
\hspace{-100pt}
\includegraphics[page=2,height=\paperheight,width=\linewidth]{CaesarTLA.pdf}

\vspace{-100pt}
\hspace{-100pt}
\includegraphics[page=3,height=\paperheight,width=\linewidth]{CaesarTLA.pdf}

\vspace{-100pt}
\hspace{-100pt}
\includegraphics[page=4,height=\paperheight,width=\linewidth]{CaesarTLA.pdf}

\vspace{-100pt}
\hspace{-100pt}
\includegraphics[page=5,height=\paperheight,width=\linewidth]{CaesarTLA.pdf}

\vspace{-100pt}
\hspace{-100pt}
\includegraphics[page=6,height=\paperheight,width=\linewidth]{CaesarTLA.pdf}

\vspace{-100pt}
\hspace{-100pt}
\includegraphics[page=7,height=\paperheight,width=\linewidth]{CaesarTLA.pdf}

\vspace{-100pt}
\hspace{-100pt}
\includegraphics[page=8,height=\paperheight,width=\linewidth]{CaesarTLA.pdf}

\vspace{-100pt}
\hspace{-100pt}
\includegraphics[page=9,height=\paperheight,width=\linewidth]{CaesarTLA.pdf}

\vspace{-100pt}
\hspace{-100pt}
\includegraphics[page=10,height=\paperheight,width=\linewidth]{CaesarTLA.pdf}

\vspace{-100pt}
\hspace{-100pt}
\includegraphics[page=11,height=\paperheight,width=\linewidth]{CaesarTLA.pdf}

\vspace{-100pt}
\hspace{-100pt}
\includegraphics[page=12,height=\paperheight,width=\linewidth]{CaesarTLA.pdf}

\vspace{-100pt}
\hspace{-100pt}
\includegraphics[page=13,height=\paperheight,width=\linewidth]{CaesarTLA.pdf}

\vspace{-100pt}
\hspace{-100pt}
\includegraphics[page=14,height=\paperheight,width=\linewidth]{CaesarTLA.pdf}

\end{document}